\documentclass[11pt]{article}
\usepackage{todonotes}
\usepackage{float}
\usepackage{amsmath}
\usepackage{amsthm}
\usepackage{amsfonts}


\usepackage[top=7cm,bottom=2.5cm,headheight=120pt,headsep=15pt,left=6cm,right=1.5cm,marginparwidth=4.2cm,marginparsep=0.5cm]{geometry}

\usepackage{marginnote}
\reversemarginpar  
\usepackage{lipsum} 
\usepackage{calc}
\usepackage{siunitx}
\usepackage{lineno}
\usepackage[utf8]{inputenc}
\usepackage[T1]{fontenc}
\usepackage[
backend=biber,
natbib=true,
sortcites=true,
defernumbers=true,
style=authoryear,
citestyle=authoryear-comp,
maxnames=99,
maxcitenames=2,
giveninits=true,
terseinits=true,
url=false
]{biblatex}
\DeclareNameAlias{default}{family-given}
\renewbibmacro{in:}{%
  \ifentrytype{article}{}{\printtext{\bibstring{in}\intitlepunct}}} 
\DeclareFieldFormat[article]{pages}{#1} 
\AtEveryBibitem{\ifentrytype{article}{\clearfield{number}}{}} 
\DeclareFieldFormat[article, inbook, incollection, inproceedings, misc, thesis, unpublished]{title}{#1} 
\usepackage{csquotes}
\RequirePackage[english]{babel} 
\addbibresource{main.bib}
\DeclareBibliographyCategory{ignore}
\addtocategory{ignore}{recommendation} 
\usepackage{nameref}
\usepackage[pdfborder={0 0 0}]{hyperref}  
\urlstyle{same}

\usepackage{graphbox}  
\usepackage{microtype}
\DisableLigatures[f]{encoding = *, family = * }
\setlength{\parindent}{0.4cm}
\linespread{1.2}
\RequirePackage[default,scale=0.90]{opensans}
\usepackage{xcolor}
\definecolor{darkgray}{HTML}{808080}
\definecolor{mediumgray}{HTML}{6D6E70}
\definecolor{ligthgray}{HTML}{d9d9d9}
\definecolor{pciblue}{HTML}{74adca}
\definecolor{opengreen}{HTML}{77933c}
\usepackage{changepage}
\usepackage[aboveskip=1pt,labelfont=bf,labelsep=period,singlelinecheck=off]{caption}

\usepackage{fancyhdr}  
\usepackage{lastpage}  
\pagestyle{fancy}  
\fancyhfoffset[L]{4.5cm}  
\renewcommand{\headrulewidth}{\ifnum\thepage=1 0.5pt \else 0pt \fi} 


\lhead{\ifnum\thepage=1 \includegraphics[width=13.5cm]{} \else \includegraphics[width=5cm]{} \fi}  

\chead{}
\rhead{}

\newcommand{\DOIrecommendationlink}{\href{https://doi.org/\DOIrecommendation}{https://doi.org/\DOIrecommendation}}

\lfoot{\scriptsize \textsc{\color{mediumgray}\PCI}}

\cfoot{}
\rfoot{\scriptsize\thepage\space of\space\pageref{LastPage}}

\newcommand{\PCI}{Peer Community In Mathematical and Computational Biology}

\newcommand{\beginingpreprint}{
\vspace*{0.5cm}
\begin{flushleft}
\baselineskip=30pt
\marginpar{
\large\textnormal{\color{pciblue}\\RESEARCH ARTICLE}\\
\vspace*{0.5pt}
\\
\includegraphics[align=c,width=0.5cm]{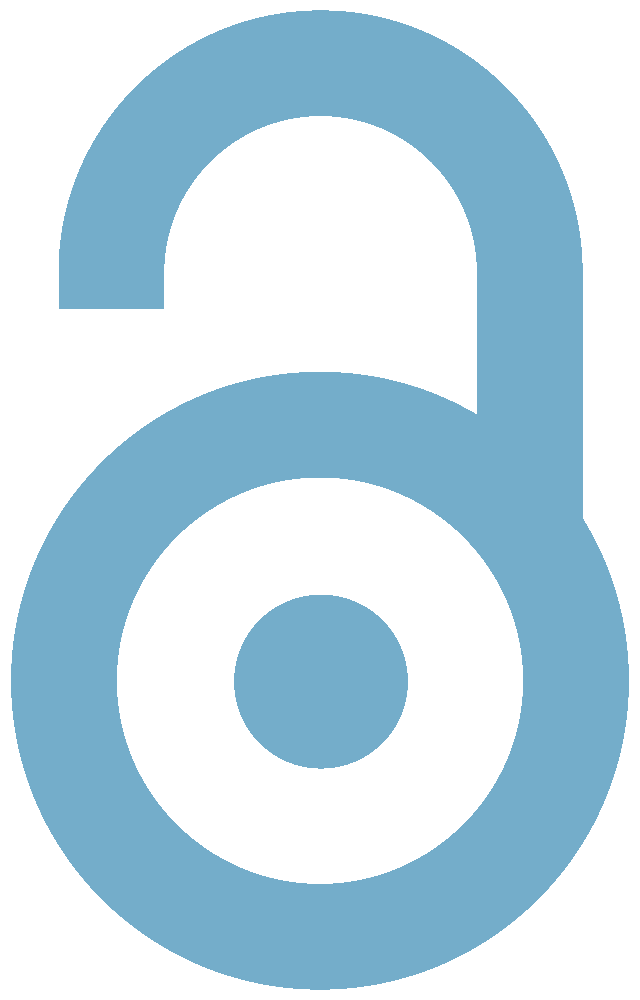} \space \large\textbf{\color{pciblue}Open Access}\\
\\
\includegraphics[align=c,width=0.5cm]{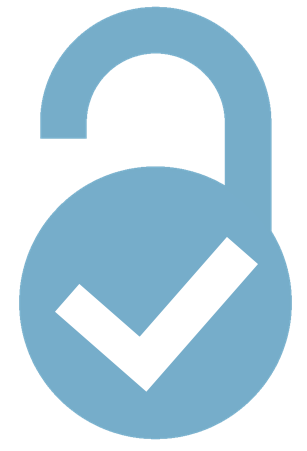} \space \large\textbf{\color{pciblue}Open Peer-Review}\\
\\
\\
\\
\\
\\
\\
\raggedright
\scriptsize\textbf{Cite as:}\space
\citeas\\
\vspace*{0.5cm}
\textbf{Posted:} \datepub\\
\vspace*{0.5cm}
\textbf{Recommender:}\\
\recommender\\
\vspace*{0.5cm}
\textbf{Reviewers:}\\
\reviewers\\
\vspace*{0.5cm}
\textbf{Correspondence:}\\
\href{mailto:\email}{\email}\\

}
{\Huge
\fontseries{sb}\selectfont{\preprinttitle}}
\end{flushleft}
\vspace*{0.25cm}
\begin{flushleft}


\Large
\listauthors
\end{flushleft}
\bigskip
{\raggedright
\listinstitutions}
\begin{flushleft}
\fcolorbox{lightgray}{lightgray}{
\parbox{\textwidth - 2\fboxsep}{
\centering\large{\fontseries{sb}\selectfont{This article has been peer-reviewed and recommended by\\
\emph{\PCI} (\DOIrecommendationlink)}}\\
}}
\end{flushleft}
\vspace*{0.5cm}
\fcolorbox{pciblue}{pciblue}{
\parbox{\textwidth - 2\fboxsep}{
\vspace{0.25cm}
\textbf{\large{\textsc{Abstract}}}\\
\preprintabstract\\

\footnotesize{\textbf{\emph{Keywords: }}\preprintkeywords}
\vspace{0.25cm}}
}
}

\usepackage[ruled,linesnumbered,noend]{algorithm2e}
\usepackage{hyperref}
\usepackage{color}
\usepackage{url}
\usepackage{xspace}

\usepackage{graphics}

\usepackage{epstopdf}
\usepackage{caption}
\usepackage{subcaption}

\setlength{\belowcaptionskip}{5pt}



\newtheorem{nclaim}{Claim}

\newcommand{\isep}{\mathrel{{.}\,{.}}\nobreak}

\newcommand{\lca}{\textsc{lca}}

\renewcommand{\l}{\ell}

\newcommand{\genes}{\Gamma}
\newcommand{\species}{\Sigma}

\newcommand{\ml}[1]{\textcolor{orange}{#1}}
\newcommand{\cs}[1]{\textcolor{magenta}{#1}}
\newcommand{\man}[1]{\textcolor{black}{#1}}

\newcommand{\mj}[1]{#1}
\newcommand{\mjn}[1]{\textcolor{blue}{#1}}

\renewcommand{\mjn}[1]{\textcolor{blue}{#1}}
\renewcommand{\cs}[1]{\textcolor{blue}{#1}}
\renewcommand{\ml}[1]{\textcolor{blue}{#1}}

\renewcommand{\mjn}[1]{#1}
\renewcommand{\cs}[1]{#1}
\renewcommand{\ml}[1]{#1}

\usepackage[normalem]{ulem}

\newcommand{\sdon}[3]{don^{#3}_{{#1}\rightarrow{#2}}}
\newcommand{\srec}[3]{rec^{#3}_{{#1}\leftarrow{#2}}}


\makeatletter
\def\thm@space@setup{%
  \thm@preskip=3pt \thm@postskip=3pt
}
\makeatother

\newtheorem{definition}{Definition}
\newtheorem{lemma}{Lemma}
\newtheorem{theorem}{Theorem}

 \newcommand{\D}{\ensuremath{\mathbb{D}}\xspace}

\newcommand{\DTL}{\ensuremath{\mathbb{DTL}}\xspace}
\newcommand{\T}{\ensuremath{\mathbb{T}}\xspace}
\newcommand{\TL}{\ensuremath{\mathbb{TL}}\xspace}
\newcommand{\TS}{\ensuremath{\mathbb{T^S}}\xspace}
\newcommand{\TD}{\ensuremath{\mathbb{T^D}}\xspace}

\renewcommand{\S}{\ensuremath{\mathbb{S}}\xspace}
\newcommand{\SL}{\ensuremath{\mathbb{SL}}\xspace}

\newcommand{\last}{\textsc{last}}

\begin{document}

\newcommand{\preprinttitle}{Consistency of orthology and paralogy constraints in the presence of gene transfers}

\newcommand{\listauthors}{\raggedright 
Mark Jones\textsuperscript{1}, \space
Manuel Lafond\textsuperscript{2}, 
Celine Scornavacca\textsuperscript{3}
}

\newcommand{\listinstitutions}{
\textsuperscript{1} Delft Institute of Applied Mathematics, Delft University of Technology, The Netherlands, \texttt{M.E.L.Jones@tudelft.nl}
\\
\textsuperscript{2} Departement d'informatique, Université de Sherbrooke, Canada, \texttt{manuel.lafond@USherbrooke.ca}
\\
\textsuperscript{3} Institut des Sciences de l'Evolution, 
Universit\'{e} de Montpellier, CNRS, IRD, EPHE
34095 Montpellier Cedex 5 - France 
\texttt{Celine.Scornavacca@umontpellier.fr}
}

\newcommand{\datepub}{15 February 2022}

\newcommand{\recommender}{Barbara Holland}

\newcommand{\DOIrecommendation}{10.24072/pci.mcb.100009}

\newcommand{\citeas}{Jones M, Lafond M, Scornavacca C (2022) Consistency of orthology and paralogy constraints in the presence of gene transfers. arXiv:1705.01240 [cs], ver.6 peer-reviewed and recommended by Peer Community in Mathematical and Computational Biology. https://arxiv.org/abs/1705.01240.}

\newcommand{\preprintkeywords}{Algorithms, phylogenetics, orthology, horizontal gene transfer}

\newcommand{\email}{M.E.L.Jones@tudelft.nl, manuel.lafond@USherbrooke.ca, and \\ Celine.Scornavacca@umontpellier.fr}

\newcommand{\reviewers}{Two anonymous reviewers.}



\newcommand{\preprintabstract}{ 
Orthology and paralogy relations are often inferred by methods based on gene sequence similarity \cs{that} yield 
a graph depicting the relationships between gene pairs.  Such relation graphs frequently contain errors, 
as they cannot be explained via a gene tree that contains the depicted orthologs/paralogs while being consistent with the species evolution.  
Previous research has mostly focused on correcting such errors in some minimal way, for instance by changing a minimum number of relations to attain consistency.

In this work, we ask: could the errors in the orthology predictions be explained by lateral gene transfer?  We formalize this question
by allowing gene transfers to behave either as a speciation or as a duplication, expanding the space of valid orthology graphs.
We then provide a variety of algorithmic
results regarding the underlying problems.  Namely, we show that deciding if a relation graph $R$ is consistent with a given species network $N$ with known transfer highways is 
NP-hard, and that it is W[1]-hard under the parameter ``minimum number of transfers''.  
During the process, we define a novel algorithmic problem called \emph{Antichain on trees}, which may be useful for other reductions.
We then present an FPT algorithm for the decision problem
based on the degree of the gene tree associated with $R$.
We also study analogous problems in the case that the transfer highways on a species tree are unknown.
}

\beginingpreprint

\section{Introduction}


In phylogenetics, evolutionary relationships between genes and species are often represented via phylogenetic trees.
\emph{Species trees} are phylogenetic trees displaying the evolutionary relationships among a set of species, while  \emph{gene trees} are phylogenetic trees displaying the evolutionary relationships among genes.
Vertical descent with modification (speciation)
constitutes only part of the events shaping
a gene history;
\mj{other such events include,}
for example,  duplications, 
losses and transfers of genes. 


When gene trees are used to estimate the evolutionary relationships of the species containing 
\mj{those genes,}
only \emph{homologous} genes -- genes sharing a common ancestor --  should be compared. 
Homology can be refined into the concepts of  \emph{orthology} and \emph{paralogy}: two genes  from two different species  are said to be \emph{orthologous} if  they are derived from a single gene present in the last common ancestor of the two species via a speciation event, and \emph{paralogous} if  they were derived via a duplication event \citep{fitch1970distinguishing}. 

Orthology inference is the starting point of several comparative genomics studies, and is also a key instrument for  functional annotation of new genomes~\citep{gabaldon2013functional}. Several 
methods have been designed to distinguish orthologs from paralogs.
These can be roughly divided in two groups \citep{altenhoff2012inferring}. The first group of methods, based on phylogenetic inference, 
reconstruct a \emph{gene tree}
and deduce orthology relationships from this tree by comparing it with the species tree via \emph{reconciliation algorithms} 
(see \cite{boussau:hal-02535529} 
for a review). 
Another class of methods estimates orthology using sequence similarity \ml{(see e.g.~\cite[among others]{li2003orthomcl,emms2015orthofinder} and~\cite{kristensen2011computational} for a survey)},
hypothesising that orthologs are more similar than paralogs.
Both methods can yield a \emph{relation graph}, in which vertices are genes, edges represent putative orthologous gene pairs and non-edges represent putative paralogs.
Phylogeny-based methods
require a prior knowledge of the species tree, and are very dependent on the accuracy of the gene trees. Unfortunately, the species phylogeny is not always known and gene trees  can be highly inaccurate as a result of several kinds of reconstruction artefact, e.g. long-branch attraction (LBA)  \citep{bergsten2005review}.
Similarity-based methods do not suffer from these drawbacks but still have an important weakness: 
the inferred relation graph $R$ may fail to be 
\emph{consistent}, meaning that there is no gene tree, labeled by speciation and duplication events, that can both explain the relations depicted by $R$
and ``agree'' with a known species tree $S$. 
Moreover, approaches based on sequences tend to miss orthologs whose evolutionary path involves a duplication followed by high divergence, which occurs for instance in neofunctionalisation~\citep{lafond2018accurate}.

In 
\mj{recent years,}
the decision problems of consistency of orthology/paralogy relations have been extensively studied \citep{hernandez2012event,hellmuth2013orthology,lafond2014orthology,hellmuth2015phylogenomics,jones2016,lafond2016link,dondi2017approximating}.
Two possible explanations for the inconsistency of a relation graph $R$ are that either the set of relations contains errors, 
or the evolutionary model used to assess consistency is not appropriate for the gene family at hand.
Most of the previous work in this field has been devoted to detection and correction of errors in relation graphs.  
\man{The second possibility has recently been considered in~\cite{hellmuth2019reconciling}.  The authors ask, given a \cs{event-labeled} gene tree $G$ that displays a given \cs{set} of relations, whether there is a species network $N$ that \cs{can} be reconciled with $G$.}
In a similar vein, 
in this paper we ask: 
can inconsistent relations be explained by extending the usual speciation/duplication model to lateral gene transfers?
Two genes are said to be \emph{xenologous} if at least one of the two genes has been acquired by gene transfer.  
As discussed in~\cite{koonin2005orthologs}, genes related by transfer may appear either as orthologs or paralogs, even though they are not related by speciation or duplication at their lowest common ancestor.
The terms \emph{pseudoorthologs} and \emph{pseudoparalogs} were used to designate homologous genes mimicking orthology and paralogy, respectively, 
after one or more lateral gene transfers.  Here, we provide a variety of algorithmic results regarding the question 
of explaining inconsistent relations using these new types of relations.

%
The paper is organized as follows.
In Section~\ref{sec:prelim}, we introduce the notion of orthology/paralogy consistency with a given species network $N$, 
and show how it relates to $DS$-trees, which are gene trees labeled by speciation and duplication only.
Then, in Section~\ref{sec:w1hardness} we study the question of deciding whether a relation graph $R$ is 
consistent with $N$, meaning that $R$ can be represented by a gene history, possibly undergoing lateral transfers,
that agrees with $N$.  
We show that, unfortunately, 
this is 
\mj{an NP-hard problem. Furthermore, the problem is unlikely to be fixed-parameter tractable with respect to the number of transfers, as this parameterized version of the problem is $W[1]$-hard.}
On the positive side, we show in Section~\ref{sec:dpalgo} that these problems can be solved in time
\mj{$O(2^{k}k!k|V(R)||V(N)|^4)$,}
where here 
$k$ is the maximum degree of the smallest $DS$-tree exhibiting the relations of $R$.
In Section~\ref{sec:unknownhighways}, we turn to the variant where we have a species tree $S$ rather than a network, and 
ask if transfer arcs can be inserted into $S$ so that $R$ becomes consistent.  
Some proofs are quite technical and can be found in the Appendix.

\vspace{-2mm}

\section{Preliminaries}\label{sec:prelim}


We use the notation $[n] = \{1, 2, \ldots, n\}$.
Across the paper, let $\genes$ a set of genes, $\species$ a set of species, and  
$\sigma : \genes \rightarrow \species$ the mapping between genes and species.

\begin{figure*}[t!]
    \centering
    \begin{subfigure}[t]{0.3\textwidth}
        \centering
\includegraphics[height=1.0\linewidth]{./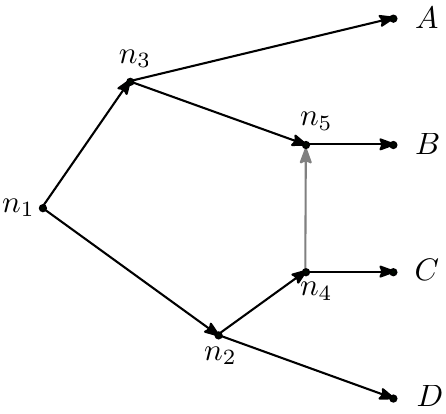}
        \caption{An LGT network $N$\label{fig1a}}
    \end{subfigure}%
    \hspace{0.9cm}
    \begin{subfigure}[t]{0.3\textwidth}
        \centering
\includegraphics[height=1.0\linewidth]{./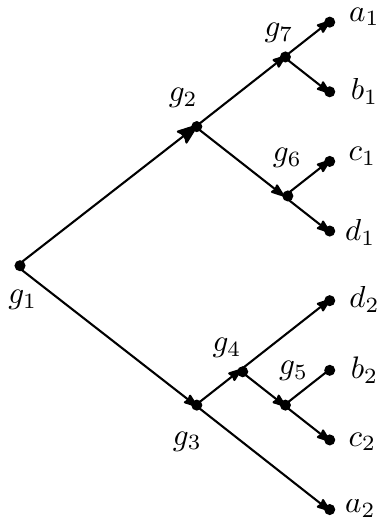}
        \caption{A gene tree $G$\label{fig1b}}
    \end{subfigure}
        \hspace{-0.5cm}
        \begin{subfigure}[t]{0.3\textwidth}
        \centering
\includegraphics[height=1.0\linewidth]{./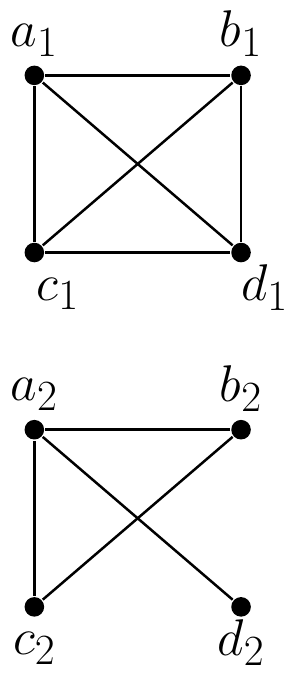}
        \caption{A relation graph $R$\label{fig1c}}
    \end{subfigure}
    \caption{An illustration of an LGT network \mj{with secondary arc $(n_4,n_5)$}, a gene tree and a relation graph. The genes $a_i$, $b_i$, $c_i$ and $d_i$, with $i \in \{1,2\}$, belong respectively to species $A$, $B$, $C$ and $D$. Internal nodes are labeled only for the purpose of giving an example of a reconciliation between $N$ and $G$, see main text. \cs{$R$ is not $T_0(N)$-consistent but it is $N$-consistent using 1 transfer.}  \label{fig1}}
\end{figure*}

All trees in this paper are assumed to be rooted and directed, each edge being oriented away from the root.
A \emph{species network} $N$ on $\species$ is a directed acyclic graph with a single indegree-0 node (the \emph{root}) and $|\species|$ outdegree-0 nodes (the \emph{leaves}), such that each leaf  is labeled by a different element of $\species$. 
Here we will consider only \emph{binary} species networks, 
\mj{in which}
internal nodes have either indegree 1 and outdegree 2 (\emph{principal} nodes) or indegree 2 and outdegree 1 (\emph{secondary} nodes or \emph{reticulations}). 
\cs{A \emph{Lateral Gene Transfer (LGT) network}} $N$  is a species network along with a partition of $E(N) = E_p \cup E_s$ into a set of \emph{principal arcs} $E_p$ and a set of  \emph{secondary arcs} $E_s$~\citep{Cardona2015}.  The $E_p$ edges correspond to vertical descent, whereas the $E_s$ edge correspond to pairs of species that may transfer genetic content.  The subnetwork $N' = (V(N), E_p)$ obtained after removing the $E_s$ edges must be a tree \ml{in which the root has outdegree $2$}.  We denote by $T_0(N)$ the tree obtained from $N'$ after \cs{suppressing} indegree-1 outdegree-1 nodes. Roughly speaking, an LGT network can also be seen as a network obtained by starting with a species tree $S = T_0(N)$, and then adding secondary arcs with endpoints located on the edges of $S$.  Note that LGT networks are \emph{tree-based networks}, where $T_0(N)$ is a \emph{distinguished base tree}  \citep{Francis2015}. 
As defined in~\cite{Gorecki2004,nojgaard2018time}, we say that an LGT network $N$ is \emph{time-consistent} if there exists a function $t:V(N) \rightarrow \mathbb{N}$ such that: 
\begin{enumerate}
\item $t(u)=t(v)$, if $(u,v) \in E_s$, and
\item $t(u)<t(v)$, if $(u,v) \in E_p$.
\end{enumerate}

\ml{Note that although time-consistency forbids directed cycles, not all directed acyclic graphs are time-consistent.  For instance, one can easily construct an acyclic LGT network that contains two principal arcs $(a, b)$ and $(c, d)$, and secondary arcs $(a, d)$ and $(b, c)$; no time-consistent labeling is possible for $a, b, c, d$.  It is also worth mentioning that LGT networks that admit a time-consistent map were characterized in~\cite{Gorecki2004}, where a linear-time algorithm is given to find such a map.}

Here a \emph{gene tree} $G$ on $\genes$ is a \cs{binary} tree with $|\genes|$ leaves such that each leaf is labeled by a different element of $\genes$.  

For a binary network $N$, the root node is denoted by $r(N)$, the set of leaves is denoted by ${L}(N)$  and the set of internal nodes is denoted by $I(N)$.  
An internal node $x$ of $N$ has either two children, which we will usually denote $x_l$ and $x_r$, or one child, which we will denote $x_l$.  The parent of a node $x$ of in-degree $1$ is denoted $p(x)$.
If $x$ has out-degree $2$, the subnetwork rooted at $x$, denoted $N_x$, is the network consisting of the root $x$ and all the nodes
reachable from $x$ (hence if $N$ is a tree, then $N_x$ is a subtree).  If $N$ is a rooted tree, $\lca(x, y)$ denotes the lowest common ancestor of $x$ and $y$.
Note that all these notations apply to LGT networks and to gene trees (which are special cases of networks).  
If $N$ is a species network, since $L(N)$ and $\species$ are in bijection, we will not make the distinction between a leaf of $N$ and a member of $\species$.  The same applies to gene tree leaves and $\genes$.

\vspace{-3mm}

\subsection{Reconciliations between gene trees and species networks}

\mj{A \DTL reconciliation aims at explaining how an evolutionary history for a family of genes (given by a gene tree) may fit within a given species network $N$, using speciation, duplication, transfer and gene loss events.
The internal nodes of gene trees, representing ancestral genes, are mapped to ancestral species.}
Furthermore, the branches of a gene tree
may hide multiple events that have not been observed, mainly due to losses.  Hence, a reconciliation $\alpha$ maps a node $x$ of $G$ 
to the sequence of species for the genes that should appear on its parent branch. 
Possible mappings are restricted by few  conditions aimed at describing only biologically-meaningful evolutionary histories. 

\ml{A reconciliation model for gene trees and time-consistent LGT networks (called H-trees) was proposed in~\cite{Gorecki2010,gorecki2012inferring}, along with algorithms to minimize the duplication, loss and transfer cost.}
\ml{We use~\cite[Definition 3]{Scornavacca2016}, which uses the following formalization:}

   \begin{definition}[\cite{Scornavacca2016}]
   \label{def:DTLrecLGTNetwork}
     Given an LGT network $N$ and a gene tree $G$, let $\alpha$ be a function that maps each node $u$ of $G$ onto a directed path of $N$, denoted $\alpha(u) = (\alpha_1(u), \ldots, \alpha_{\ell}(u))$.  Then $\alpha$ is a \emph{\DTL reconciliation} 
between $G$ and $N$ if and only if exactly one of the following events
occurs for each node $u$ of $G$ and each $\alpha_i(u)$. 
\mjn{For each $\alpha_i(u)$ we also specify a label $e_\alpha(u,i)$ corresponding to the case that holds between $u$ and $\alpha_i(u)$, given in square brackets below}
(for simplicity, let $x:=\alpha_i(u)$ below): 

  \begin{itemize}
  \item [a)] if $x$ is the last node of $\alpha(u)$, one of the cases below is true:
    \begin{enumerate}
    \item[1.]  $u \in L(G)$, $x \in L(N)$ and  $\sigma(u)=x$;  \hfill [extant leaf]
    \item[2.] $\{\alpha_1(u_l), \alpha_1(u_r)\} =\{x_l,x_r\}$, where $(x, x_l), (x, x_r) \in E_p$;\hfill $[\S]$
    \item[3.] $\alpha_1(u_l)=x$ and $\alpha_1(u_r)=x$; \hfill $[\D]$
    \item[4.] $\{\alpha_1(u_l), \alpha_1(u_r)\} = \{x, y\}$, where $(x,y) \in E_s$;  \hfill $[\T]$
    

   \end{enumerate}
  \item  [b)] otherwise, one of the cases below is true:
   \begin{enumerate}
    \item[5.]  $\alpha_{i+1}(u) =y$, where $(x,y)$ \cs{is one of the two outgoing arcs of $x$ in $E_p$}; \hfill$[\SL]$ 
     \item[6.] $\alpha_{i+1}(u)=y$, where $(x,y)$ is in $E_s$;  \hfill   $[\TL]$
     \item[7.] $\alpha_{i+1}(u)=y$ and $(x,y)$ is the only outgoing arc of $x$ in $E_p$; \hfill   $[\emptyset]$
    \end{enumerate}
  \end{itemize}
  \mj{When $\alpha$ is a \DTL reconciliation between $G$ and $N$, we call the pair $(G,\alpha)$ a \emph{reconciled gene tree}.}
   \end{definition}

   
By a slight abuse of notation, we may write $|\alpha(u)|$ to denote the number of vertices on the path $\alpha(u)$.
If $\alpha$ is clear from the context, we may write $e(u, i)$ \mjn{in place of $e_{\alpha}(u,i)$}.  
With a slight abuse of terminology, we will write $e(\alpha_i(u))$ to denote $e(u,i)$. 
We will also write \mj{$\alpha_{\last}(u)$ to denote $\alpha_\l(u)$ and} $e(u, \last)$ or $e(\alpha_{\last}(u))$ to denote $e(u, \l)$ where $\l = |\alpha(u)|$.   
   
A speciation (\S) sends its child genes to the child species through principal arcs.  A duplication (\D) makes two copies of the gene in the current species. A  transfer (\T) corresponds to transferring the lineage of a child of a gene to another branch of the species tree,  
while the sibling lineage still evolves within the lineage of the parent. A speciation-loss (\SL) is a speciation where one of the descending genes is absent.  A transfer-loss (\TL)  is a transfer of one of the two descendants of a gene combined with the loss of its sibling lineage. A no event ($\emptyset$) indicates that the gene is not transferred and follows the primary species history. 
Note that, if $N$ is time-consistent,  all \T and \TL events can be guaranteed to happen between co-existing species.
\ml{Moreover, it is not hard to see that for a given root-to-leaf path $g_1, \ldots, g_k$ of $G$, the concatenation of the $\alpha(g_i)$ paths correspond to a directed path in $N$ (with some nodes that may occur multiple times in a row because of $\mathbb{D}$ nodes).  Hence, if $N$ is time-consistent, $\alpha$ ensures that genes evolve without going back in time.  Also note that some models only specify the last element of each $\alpha(u)$ (e.g. the $\mu$ map in~\cite{lafond2020reconstruction,nojgaard2018time}).}

An example of a \DTL reconciliation between the LGT network in Figure \ref{fig1a} and the gene tree in  Figure \ref{fig1b}  is as follows: $\alpha(g_1)=(n_1)$, $\alpha(g_2)=(n_1)$, $\alpha(g_3)=(n_1)$, $\alpha(g_4)=(n_2)$, $\alpha(g_5)=(n_2,n_4)$, $\alpha(g_6)=(n_2)$, $\alpha(g_7)=(n_3)$, $\alpha(a_1)=(A)$, $\alpha(b_1)=(n_5,B)$, $\alpha(c_1)=(n_4,C)$, $\alpha(d_1)=(D)$, $\alpha(a_2)=(n_3,A)$, $\alpha(b_2)=(n_5,B)$, $\alpha(c_2)=(C)$, $\alpha(d_2)=(n_2,D)$.
\mjn{See Figure \ref{fig:embedding}}.
\mj{For this \DTL reconciliation, we have  $e(\alpha_1(g_1)) = e(\alpha_1(g_4)) = \D$, $e(\alpha_1(g_2)) = e(\alpha_1(g_3)) = e(\alpha_1(g_6)) = e(\alpha_1(g_7)) = \S$, $\cs{e(\alpha_2(g_5))} = \T$,  $e(\alpha_1(b_1)) = \cs{e(\alpha_1(b_2))} = \emptyset$, $e(\alpha_1(c_1)) = \TL$,
$e(\alpha_1(a_2)) = e(\alpha_1(d_2)) = \cs{e(\alpha_1(g_5))} = \SL$, and $e(\alpha_{\last}(u) )=$ extant leaf for all $u \in L(G)$.}

\begin{figure*}[t!]
    \centering
\includegraphics[height=0.4\linewidth]{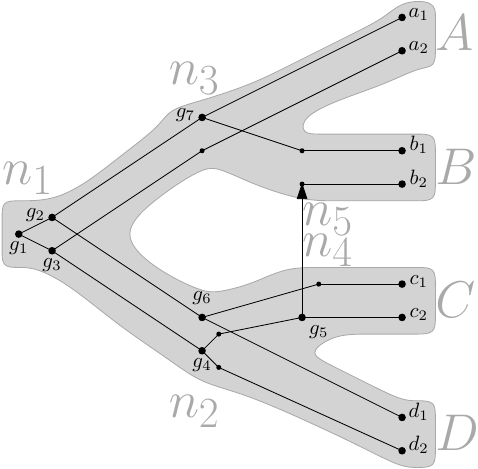}
    \caption{Illustration of a $\DTL$ reconciliation $\alpha$ between the LGT network $N$ in Figure \ref{fig1a} and the gene tree $G$ in  Figure \ref{fig1b}.
In cases where $\alpha(x)$ is a path with more than one vertex, only the last vertex of this path is labeled with $x$. Labels in grey denote the vertices of $N$.}  \label{fig:embedding}
\end{figure*}


Given $x,y \in \genes$, let  $u = \lca_G(x,y)$. 
Then we say that $x$ and $y$ are
\emph{orthologs} w.r.t a reconciled gene tree  $G$ if 
$e(\alpha_\last(u))  = \S$, 
\emph{paralogs} if  
$e(\alpha_\last(u))  = \D$, 
and \emph{xenologs}  if 
$e(\alpha_\last(u))  = \T$.
Note that 
one of these cases must hold for all distinct 
$x,y \in \genes$.

\vspace{-3mm}

\subsection{Orthology/paralogy relation graphs}

\ml{An undirected} graph $R$ is called a \emph{relation graph} if $V(R)=\genes$ (see Figure \ref{fig1c}).  \ml{Since $R$ is undirected, we may denote an edge $\{x, y\}$ of $R$ as $xy$.}
Relation graphs are often used to depict orthology and paralogy relationships \citep{hellmuth2013orthology}:  for any pair $x,y$ of \ml{distinct} vertices in $R$, $xy$ is an edge in $R$ if  $x$ and $y$ are orthologs, otherwise  $x$ and $y$ are paralogs.
Several orthology-detection methods such as OrthoMCL \citep{li2003orthomcl}, ProteinOrtho~\citep{lechner2011proteinortho} and OrthoFinder \citep{emms2015orthofinder} use sequence similarity as a proxy for orthology.  Roughly speaking, similar sequences are presumed more likely to be orthologs. 
When transfers are present, \ml{sequence similarity predictions} get trickier: xenologs can be ``interpreted'' as either orthologs, in case the two copies retained the same function (and thus their sequences are likely to be similar), or paralogs, if they did not (and thus their sequences are likely to be different).  In the following, we adapt the framework described in  \cite{hellmuth2013orthology} to the presence of xenologs. Note that in~\cite{hellmuth2016mathematics,geiss2017short,geiss2018reconstructing,lafond2020reconstruction}
, the authors approach this problem from a different angle, supposing the xenology relationships are given in the relation graph. 

We say that a reconciled gene tree $(G, \alpha)$ \emph{displays} a  relation graph $R$,
if there is a way of reinterpreting transfers as either speciation or duplication events,
such that  for any pair $x,y$ of vertices in $R$,
$xy$ is an edge in $R$  if and only if $x$ and $y$
are orthologs according to $(G, \alpha)$. More precisely, we introduce two new types of events $\TS, \TD$, which correspond to transfers that behave as a speciation and a duplication, respectively.  We then have the following definition:

\begin{definition}
 Let $N$ be an LGT network, $R = (\genes, E)$ a relation graph, and $(G, \alpha)$ a reconciled gene tree with respect to $N$.
 We say that $(G, \alpha)$ \emph{displays} $R$ if there exists a labeling $e^*$ of $\alpha$ satisfying: 
 \begin{itemize}
  \item $e^*(u,i) \in \{\TS,\TD\}$ if $e(u,i) = \T$; 
  \item $e^*(u,i) = e(u,i)$ if $e(u,i) \neq \T$;
  \item for any distinct $x,y \in \genes$, if $xy \in E$ then $e^*(\lca_G(x,y), \last) \in \{\S,\TS\}$, 
  and otherwise $e^*(\lca_G(x,y), \last) \in \{\D,\TD\}$.
 \end{itemize}
\end{definition}


Note that, \man{if $(G, \alpha)$ and $R$ are known}, there is only one relabeling $e^*$ that \man{ensures that $(G, \alpha)$ displays $R$}.
\mjn{Indeed, if $e(u,i) \neq \T$ then $e^*(u,i) = e(u,i)$ and thus fixed by $(G,\alpha)$; otherwise, $\alpha_i(u)$ is the last element of $\alpha(u)$ and $\alpha_i(u) \notin L(N)$, and thus the value of $e^*(u,i)$ (either $\TS$ or $\TD$) depends on whether $xy \in E$, for any $x,y \in \Gamma$ such that $\alpha_i(u) = \lca_G(x,y)$.}
The question of interest in this paper is, \man{if only $R$ is known}, whether there exists a gene tree that 
displays $R$ and that can be reconciled with \cs{a given network} $N$.


\begin{definition}
Let $N$ be a species network and  $R = (\genes, E)$ a relation graph. We say that $R$ is \emph{consistent} with $N$ (or $N$-consistent)  
if there exists a reconciled gene tree $(G, \alpha)$ with respect to $N$ that displays $R$.
In addition we say that $R$ is \emph{$N$-consistent using $k$ transfers} if $(G, \alpha)$ contains at most $k$ transfers, that is, $e(u, i) = \T$ or $\TL$ for at most $k$ choices of $(u,i)$.
\end{definition}

For an example, see Figure \ref{fig1}: $R$ is consistent using one transfer with $N$  because $(G, \alpha)$ displays $R$ \mj{(setting $e^*(g_5, \last) = \TS$)} and can be reconciled using one transfer (see the reconciliation given above). It is straightforward to see that  $R$ is not consistent using no transfers, thus $R$ is not consistent according to the definition of consistency without xenology  \citep{hernandez2012event,hellmuth2013orthology,lafond2014orthology,hellmuth2015phylogenomics,jones2016}.
\man{It is worth mentioning the question studied in~\cite{hellmuth2019reconciling} can be interpreted as asking whether $R$ is consistent with \emph{some} network $N$.  It turns out that the answer is always yes, albeit a slightly different model is used.}

The main question of interest is to decide whether a set of orthology/paralogy relations can be explained by a gene tree that be reconciled with a given species network.

\medskip
\noindent \textbf{\textsc{Network Consistency (NC):}}\\
\noindent {\bf Input}: A relation graph $R$ and a time-consistent species network $N$.\\
\noindent {\bf Question}: Is $R$ $N$-consistent? \\

We can also consider the minimization version.
It is the same as \textsc{NC}, but we are also given a parameter $k$ and ask 
whether $R$ is $N$-consistent using $k$ transfers.

\medskip
\noindent \textbf{\textsc{Transfer Minimization Network Consistency  (TMNC):}}\\
\noindent {\bf Input}: A relation graph $R$, a time-consistent species network $N$, and an integer $k$.\\
\noindent {\bf Question}: Is $R$ $N$-consistent using at most $k$ transfers? \\

\subsection{Relation graphs and least-resolved DS-trees}

It will be useful to view the problem in terms of a gene tree instead of dealing with relations directly.
Before proceeding with our algorithmic results, we establish the equivalence between relation graphs and \emph{least-resolved DS trees}.  \ml{This relationship was initially established in~\cite{bocker1998recovering}.}  In essence, a DS-tree is simply a gene tree $D$ in which each internal node is labeled $\S$ or $\D$.  This labeling does not have to be valid with respect to any species tree or network.

More formally, a \emph{DS-tree for $\genes$} is a pair $(D,l)$, where $D$ is a 
 rooted tree with $L(D) = \genes$, and
  $l:I(D) \rightarrow  \{\D, \S\}$ 
  is a function labeling each internal node of $G$ as a \emph{duplication} or \emph{speciation}. 
Note that $D$ is not necessarily binary.
The graph $R(D, l) = (\genes, E)$ 
is the relation graph such that for any pair  \man{$\{x,y\}$}
of genes in $\genes$,
if $l(\lca_D(x,y)) = \S$ then $xy \in E$,
and if  $l(\lca_D(x,y)) = \D$ then $xy \notin E$.
We say that $(D, l)$ \emph{displays} a relation graph $R$ if $R(D, l) = R$.

An $l$-contraction in a $DS$-tree $(D, l)$ consists of contracting an arc $(u,v)$ of $D$ with $u ,v \in I(D)$ and  
$l(u) = l(v)$, and assigning the same label to the node resulting from the contraction.
We say that $(D, l)$ is \emph{least-resolved} if no $l$-contraction is possible.
Note that if $(D, l)$ is least-resolved, then it has 
\emph{alternating} duplication and speciation nodes.  That is, 
each child of a speciation node is a duplication or a leaf, and each child of a duplication node is a speciation or a leaf.

A $DS$-tree $(D, l)$ is a \emph{refinement} of another $DS$-tree $(D', l')$ 
if $(D', l')$ can be obtained from $(D, l)$ by a sequence of $l$-contractions.
If $D$ is binary, then  $(D, l)$ is a \emph{binary refinement} of $(D', l')$.
Observe that $l$-contractions do not change $l(\lca_{D'}(x,y))$ for any pair of genes $(x,y)$.
Thus if $(D, l)$ is a refinement of $(D', l')$, then $R(D, l) = R(D', l')$.

It is known that all $DS$-trees that display $R$, if any \mj{exist}, are refinements of the same least-resolved $DS$-tree.

\begin{lemma}[\cite{hellmuth2013orthology,lafond2014orthology}]\label{lem:unique-ds-tree}
Assume that some $DS$-tree displays a relation graph $R$.
Then the least-resolved $DS$-tree $(D, l)$ that displays $R$ is unique.  Moreover, $(D, l)$ can be found in linear time.
\end{lemma}

We now want to relate $DS$-trees with \DTL reconciliations by reinterpreting some internal nodes as transfers.

\begin{definition}\label{def:nreconciable}
Let $N$ be an LGT network and $(D, l)$ a $DS$-tree with $D$ binary.
We say $(D, l)$ is \emph{$N$-reconcilable} if there exists a 
\mj{\DTL reconciliation $\alpha$ between $D$ and $N$}
such that 
for every internal node $u \in I(D)$, the following holds:
\begin{itemize}
    \item
    if $l(u) = \S$, then $e(\alpha_{\last}(u)) \in \{\S, \T\}$;
    
    \item
    if $l(u) = \D$, then $e(\alpha_{\last}(u)) \in \{\D, \T\}$.
\end{itemize}

Moreover, $(D, l)$ is $N$-reconcilable using $k$ transfers if $\alpha$ uses $k$ transfers.

If $D$ is non-binary, we say that $(D, l)$ is $N$-reconcilable \mj{(using $k$ transfers)} if there exists a binary refinement 
$(D', l')$ of $(D, l)$ such that $D'$ is $N$-reconcilable \mj{(using $k$ transfers).}
\end{definition}

Since relation graphs correspond to a unique \man{least-}resolved $DS$-tree, asking about the consistency of a relation graph $R$ is 
equivalent to asking a similar question about a least-resolved DS-tree $(D, l)$ that displays $R$, if it exists (see Appendix for a proof).

\begin{lemma}\label{lem:equiv-ds-tree}
Let $N$ be an LGT network and $R = (\genes, E)$ a relation graph. Then $R$ is $N$-consistent  
(using $k$ transfers)
if and only if 
there exists a DS-tree $(D, l)$ for $\genes$ such that $R(D, l) = R$ and such that $(D, l)$ is $N$-reconcilable (using $k$ transfers).
\end{lemma}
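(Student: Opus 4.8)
The plan is to prove both directions by directly translating between the labeling $e^*$ that witnesses $G \sim R$ and the $\{\S,\D\}$-labeling of a $DS$-tree, exploiting the already-noted invariance of the relation graph under $l$-contraction (if $D$ is a refinement of $D'$ then $R(D)=R(D')$).

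For the forward direction, suppose $R$ is $N$-consistent (using $k$ transfers), so there is a reconciled gene tree $(G,\alpha)$ with $G \sim R$ and $\alpha$ using at most $k$ transfers, together with a labeling $e^*$ as in the definition of \emph{displays}. Since $G$ carries a reconciliation it is binary, and for every internal node $u$ the last event satisfies $e(\alpha_\last(u)) \in \{\S,\D,\T\}$ (an extant-leaf or loss event cannot occur at the last position of an internal node), hence $e^*(\alpha_\last(u)) \in \{\S,\D,\TS,\TD\}$. I would define a labeling $l'$ on the internal nodes by setting $l'(u)=\S$ when $e^*(\alpha_\last(u)) \in \{\S,\TS\}$ and $l'(u)=\D$ when $e^*(\alpha_\last(u)) \in \{\D,\TD\}$, giving a binary $DS$-tree $(G,l')$. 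The third bullet of the \emph{displays} definition then gives $R(G,l')=R$: an edge $xy\in E$ forces $e^*(\lca_G(x,y),\last)\in\{\S,\TS\}$, i.e. $l'(\lca_G(x,y))=\S$, and the non-edge case is symmetric. Re-using the same $\alpha$, the correspondence between $e^*$ and $e$ shows $l'(u)=\S$ implies $e(\alpha_\last(u))\in\{\S,\T\}$ and $l'(u)=\D$ implies $e(\alpha_\last(u))\in\{\D,\T\}$, which is exactly the condition for $(G,l')$ to be $N$-reconcilable, and the transfer count of $\alpha$ is unchanged, so at most $k$ transfers are used. Finally I would let $D$ be obtained from $(G,l')$ by all possible $l$-contractions: then $D$ is least-resolved, $(G,l')$ is a binary refinement of $D$, and $R(D)=R(G,l')=R$.

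For the backward direction, assume a least-resolved $DS$-tree $D$ with $R(D)=R$ and a binary refinement $D'=(G',l')$ that is $N$-reconcilable using $k$ transfers, with witness reconciliation $\alpha$. Since refinement preserves the relation graph, $R(D')=R(D)=R$, so $xy\in E$ iff $l'(\lca_{G'}(x,y))=\S$. I would define $e^*$ by leaving every non-$\T$ event untouched and, for each node $u$ with $e(\alpha_\last(u))=\T$, setting $e^*(\alpha_\last(u))=\TS$ if $l'(u)=\S$ and $e^*(\alpha_\last(u))=\TD$ if $l'(u)=\D$. Then for $u=\lca_{G'}(x,y)$ with $xy\in E$ we have $l'(u)=\S$, so $N$-reconcilability gives $e(\alpha_\last(u))\in\{\S,\T\}$ and hence $e^*(\alpha_\last(u))\in\{\S,\TS\}$; the non-edge case yields $l'(u)=\D$, $e(\alpha_\last(u))\in\{\D,\T\}$, and $e^*(\alpha_\last(u))\in\{\D,\TD\}$. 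Thus $(G',\alpha)$ displays $R$, and as $\alpha$ is unchanged it still uses $k$ transfers, so $R$ is $N$-consistent using $k$ transfers.

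I expect the only delicate points to be bookkeeping rather than conceptual. One must confirm that the last reconciliation event of an internal node is always one of $\S,\D,\T$, so that $l'$ is everywhere defined; that $\T$ occurs only at the last position, so the $\TS/\TD$ reinterpretation is well-defined while $\TL$ events need no relabeling (the \emph{displays} definition rewrites only genuine $\T$ events); and that the transfer count, which tallies both $\T$ and $\TL$ events of $\alpha$, is identical on both sides because the same $\alpha$ is reused and $e^*$ never alters $e$. With these observations in place, both implications reduce to the symmetric correspondence $\S/\TS \leftrightarrow$ an $\S$-labeled node and $\D/\TD \leftrightarrow$ a $\D$-labeled node, together with the invariance of $R(\cdot)$ under $l$-contraction.
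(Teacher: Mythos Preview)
Your proposal is correct and follows essentially the same approach as the paper: in both directions you translate between the $e^*$ labeling and a $\{\S,\D\}$-labeling on the same underlying tree (setting $l'(u)=\S$ iff $e^*(\alpha_\last(u))\in\{\S,\TS\}$ forward, and conversely defining $e^*$ from $l'$ backward), reuse the same $\alpha$ so the transfer count is preserved, and then apply $l$-contractions to pass to the least-resolved $DS$-tree. Your additional remarks on why $\T$ only appears at the last position and why $\TL$ events need no relabeling are accurate and simply make explicit what the paper leaves implicit.
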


Note that in particular, Lemma ~\ref{lem:equiv-ds-tree}  implies that for $R$ to be $N$-consistent
\mj{for} an LGT network $N$, there must exist a $DS$-tree $(D', l')$ such that $R(D', l') = R$.  Moreover, we may assume that $(D', l')$ is a binary refinement of the unique least-resolved $DS$-tree $(D, l)$ that displays $R$.
By Lemma~\ref{lem:unique-ds-tree}, we can check in linear time whether $(D, l)$ exists, and if so construct it.  Therefore,  
we will often describe an instance of our problem by giving the least-resolved DS-tree $(D, l)$ satisfying $R(D, l) = R$.

\ml{We close this subsection by mentioning that the notion of consistency of a gene tree (or DS-tree) has been studied the other way around.  That is, in~\cite{markin2018solving,gorecki2019feasibility}, we are instead given a species tree and a gene family, and must find a feasible gene scenario under certain constraints.}

\subsection{Basics of parameterized complexity}

We finish this section with some basics of parameterized complexity.
A \emph{parameterized problem} is a language $L \subseteq \Sigma^* \times \mathbb{N}$, where $\Sigma$ is a fixed alphabet and $\Sigma^*$ are the strings over this alphabet. 
A pair $(x,k) \in  \Sigma^* \times \mathbb{N}$ is a {\sc Yes}-instance of a parameterized problem $L$ if $(x,k) \in L$. 
We call the second element $k$ the \emph{parameter} of the instance.
A parameterized problem is \emph{fixed-parameter tractable (FPT)} if there exists an algorithm that decides whether a given instance $(x,k)$ is a {\sc Yes}-instance in time $f(k)\cdot|x|^{O(1)}$, where $f$ is a computable function depending only on $k$; such an algorithm is called an \emph{FPT algorithm}.
The class $W[1]$ is a class of parameterized problems which are strongly believed to not be FPT.
A parameterized problem $L$ is \emph{$W[1]$-hard} if there exists $L' \in W[1]$ such that an FPT algorithm for $L$ would imply an FPT algorithm for $L'$.
For more information we refer the reader to~\cite{DowneyFellows2013}. 



%

\vspace{-2mm}

\section{Hardness of minimizing transfers on LGT networks}\label{sec:w1hardness}

In this section, we consider the \textsc{NC} and \textsc{TMNC} problems.
%
%
%
%
%
%
%
We will show that \textsc{NC} is NP-hard.
Moreover, we will show that the minimization version \textsc{TMNC} is not only NP-hard, but also $W[1]$-hard parameterized by $k$, the number of transfers.
We give a reduction from the following problem, which is known to be 
\mj{NP-hard and $W[1]$-hard with respect to $k$~\citep{FELLOWS200953}:}

\medskip

\noindent \textbf{ $k$-Multicolored Clique:}\\
\noindent {\bf Input}: A graph $H = (V,E)$, a partition of $V$ into color classes $V_1, \dots, V_k$. \\
\noindent {\bf Parameter}: $k$.\\
\noindent {\bf Question}: Is there a clique $C$ in $H$ containing exactly one vertex from each color class $V_i$?

\medskip

The full version of the reduction can be found in the Appendix, but we can sketch the essential ideas here.
We describe the NP-hardness proof -- the $W[1]$-hardness is similar but ensures that the reduction 
is parameterized by $k$. 
We first reduce \textbf{$k$-Multicolored Clique} to a novel intermediate problem, \textsc{Antichain on Trees (ACT)}, then reduce \textsc{ACT} to \textsc{NC}.
\textsc{ACT} is formally defined below, but the intuition is as follows: 
we are given a tree $T$, a set $X$ of elements to place on the nodes of $T$, 
and a weight function $w:X \times V(T) \rightarrow \mathbb{N}_0 \cup \{\infty\}$ indicating the cost of 
placing $x \in X$ on $v \in V(T)$.  We interpret $w(x, v) < \infty$ as ``$x$ can go on $v$'' and 
$w(x, v) = \infty$ as ``$x$ cannot go on $v$''.  Our goal is to place each $x \in X$ on an allowable node 
such that the elements of $X$ are pairwise incomparable (i.e. none is an ancestor of the other). 

\medskip

\noindent \textbf{\textsc{Antichain on Trees (ACT):}}\\
\noindent {\bf Input}: An rooted tree $T$, a set $X$, a cost function $w:X \times V(T) \rightarrow \mathbb{N}_0 \cup  \{\infty\}$.\\
\noindent {\bf Question}: Does there exist an assignment $f:X \rightarrow V(T)$ such that $f(x)$ and $f(y)$ are incomparable in $T$ (that is, neither is an ancestor of the other) for each $x \neq y \in X$, and $w(x,f(x)) < \infty$ for each $x \in X$? \\

\vspace{-2mm}

We call an assignment $f$ an \emph{incomparable assignment} if it satisfies the conditions of an \textsc{ACT} instance.
In the minimization version of \textsc{ACT}, which we call 
\textsc{Minimum Weight Antichain on Trees (MWACT)}, 
we are given a parameter $k$ and ask if there is an incomparable assignment of weight at most $k$. 

\medskip

\noindent \textbf{\textsc{Minimum Weight Antichain on Trees (\mjn{MWACT}):}}\\
\noindent {\bf Input}: A rooted tree $T$, a set $X$, a cost function $w:X \times V(T) \rightarrow \mathbb{N}_0 \cup  \{\infty\}$, and an integer $k$.\\
\noindent {\bf Question}: Does there exist an assignment $f:X \rightarrow V(T)$ such that $f(x)$ and $f(y)$ are incomparable in $T$ (that is, neither is an ancestor of the other) for each $x \neq y \in X$, and such that $\sum_{x \in X} w(x,f(x)) \leq k$? \\


To see the relationship between \textsc{ACT} and \textsc{NC}, consider an \textsc{ACT} instance
$(T, X, w)$.  In the \textsc{NC} setting, $N$ is obtained from $T$ after incorporating some 
specific secondary arcs, and the given relations $R$ have, as their unique least-resolved $DS$-tree $(D, l)$,
a speciation root with $|X|$ children, each child being a duplication corresponding to an element of $X$.
Then being able to place $x \in X$ on $v \in V(T)$ represents ``$\alpha_\last(x) = v$ is possible'', 
i.e. the $x$ node of $D$ is \emph{mappable} onto $v$.
That is, the node $v$ has a directed path to every species present at a leaf below $x$, and the weight $w(x, v)$ is the number 
of transfers required to do so.
To enforce the $\alpha_\last(x)$ to be pairwise incomparable,
we ensure that transfers can only be undertaken by descendants of the $X$ nodes of $D$.
Thus the speciation root of $D$ cannot be explained by any transfer whatsoever, ensuring that its children must be 
incomparable.  We now proceed with the formalization of these ideas, and direct the reader to the Appendix for 
the details of the constructions.

We first show that \textsc{ACT} is NP-hard and \textsc{MWACT} is $W[1]$-hard even under certain restrictions; these will allow us to reduce \textsc{ACT}  to \textsc{NC} and \textsc{MWACT} to \textsc{TMNC}.
The main idea is that the incomparability requirement can be used to create gadgets as subtrees of an \textsc{ACT} or \textsc{MWACT} instance -- if some parent node is assigned to a variable in $X$, then none of its children can be assigned to any variable in $X$. In addition, the  weight function allows to limit the number of places that can be assigned to a given variable. Using these ideas, we can create an instance of \textsc{ACT}, such that an incomparable assignment of finite weight exists if and only if a given instance of {\sc $k$-Multicolored Clique} is a {\sc Yes}-instance.

\begin{lemma}\label{lem:MWACTconstruction}
Let $H = (V = V_1 \cup V_2 \cup \dots \cup V_k, E)$ be an instance of {\sc $k$-Multicolored Clique}.
Then in polynomial time, we can construct an instance $(T,X,w)$ of \textsc{ACT} such that
$(T,X,w)$ has an incomparable assignment of weight $< \infty$ if and only if 
$H$ has a $k$-multicolored clique.
Furthermore, if an incomparable assignment of weight $w< \infty$ exists,
then there exists an incomparable assignment with weight $\le k' = k^2+2k$,
and $(T,X,w)$ satisfies the following properties:

\begin{itemize}
 \item $w(x,v) \in \{0,1,\infty\}$ for all $x \in X, v \in V(T)$;
 \item $w(x,v)= 0$ for exactly one $v$ for each $x \in X$;
 \item if $w(x,v) = 0$ then $w(y,v) = \infty$ for all $y \neq x$;
 \item for any $x \in X$, $u,v \in V(T)$ such that $w(x,u),w(x,v) < \infty$, $u$ and $v$ are incomparable.
\end{itemize}
\end{lemma}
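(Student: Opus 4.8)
The plan is to give a polynomial-time reduction from $k$-Multicolored Clique that encodes the choice of one vertex per colour class and one edge per pair of colour classes, using the incomparability requirement of \textsc{ACT} in two ways: as a mutual-exclusion mechanism that forces exactly one choice per gadget, and as a consistency check that forces the endpoints of the chosen edges to agree with the chosen vertices. The four structural properties indicate the right viewpoint. Since every element $x\in X$ must have a unique private ``home'' node of weight $0$, a set of alternative nodes of weight $1$, and an antichain of allowable nodes, the all-zero placement puts each element on its home, and any finite-weight incomparable assignment is obtained from it by moving some elements onto weight-$1$ alternatives so as to destroy every comparability between homes. The cost of a solution is then exactly the number of elements moved off their home, so the weight bound will reduce to counting forced moves.

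Concretely, I would first introduce, for each colour class $V_i$, a vertex-selection gadget: an element $s_i\in X$ whose finite-weight nodes are in bijection with the vertices of $V_i$, placed on pairwise-incomparable nodes of $T$ (giving property~4), with one designated home of weight $0$ and the remaining options of weight $1$ (giving properties~1--3). For each pair $i<j$ I would introduce an edge-selection gadget: an element (or a small group of elements) whose finite-weight nodes are in bijection with the edges of $H$ between $V_i$ and $V_j$. Following the hint that assigning a node to a variable blocks its entire subtree, I would lay out each gadget as a subtree in which at most one option can be realised without creating a comparable pair, so that a feasible finite-weight assignment must make a genuine single selection in each gadget.

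The technical heart --- and the step I expect to be the main obstacle --- is enforcing consistency between a vertex choice and the incident edge choices. A naive attempt, making ``wrong'' combinations comparable by a direct ancestor relation, fails: because each element's own allowable nodes must form an antichain, a single selector node cannot simultaneously dominate all mismatched edge nodes (any node lying below two incomparable selector options would be dominated by both, a contradiction). I would therefore introduce auxiliary \emph{checker} elements, one for each of the $2\binom{k}{2}$ (colour, pair) incidences, each with a small set of allowable positions whose comparability pattern forces the $i$-endpoint of the edge chosen for pair $(i,j)$ to equal the vertex chosen by $s_i$. Designing these checkers so that (i) each checker alone has an antichain of options, (ii) a feasible placement exists exactly when the two linked choices agree, and (iii) the whole object remains a single rooted tree, is the delicate combinatorial part; I would build it from nested comb-like subtrees and verify the comparability relations incidence by incidence. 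The total number of selector and checker elements is what produces the $O(k^2)$ budget.

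Finally I would prove correctness in both directions. Given a multicolored clique $C$, placing each $s_i$ and each edge-selector (and setting each checker) on the node corresponding to $C$ yields, by construction, pairwise incomparable placements, hence a finite-weight incomparable assignment; a careful count of the elements forced off their home then gives the stated bound $k'=k^2+2k$. Conversely, from any finite-weight incomparable assignment I would read off the selected vertex of each colour gadget and the selected edge of each pair; incomparability through the checkers forces the endpoints to agree, so the selected vertices pairwise span edges of $H$ and therefore form a multicolored clique. The four enumerated properties are immediate from the definitions of the homes and the weight-$1$ alternatives, and the same counting used in the forward direction yields the promised weight bound, completing the construction.
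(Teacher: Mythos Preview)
Your proposal is a reasonable high-level plan, but it is not a proof: the part you yourself flag as ``the delicate combinatorial part'' --- the design of the checker elements that force the $i$-endpoint of the chosen $(i,j)$-edge to coincide with the chosen $V_i$-vertex --- is never actually constructed. You correctly observe that a direct ancestor relation between mismatched options is ruled out by property~4, but ``I would build it from nested comb-like subtrees'' is a promise, not a gadget. Until those checkers are written down and the incomparability claims verified, there is no proof of either direction; in particular the weight count $k'=k^2+2k$ is asserted rather than derived from a concrete list of forced moves.

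For comparison, the paper's construction takes a rather different route and avoids separate edge-selectors and checkers altogether. It uses a \emph{cascading displacement} mechanism: a single start element $s$ is forced onto its home, which pushes each $class\_i$ off its home onto some $class\_i\_out\_{v}$ (this is the vertex choice $v_i$), which pushes $v_i$ onto $v_i\_out$, which in turn pushes each element $v_i\_to\_j$ onto a node $v_i\_to\_j\_out\_u$ that exists only for $u\in V_j$ adjacent to $v_i$. Consistency is enforced not by pairwise checkers but by a single shared sink $count\_j\_out$ per colour class: any vertex $u\in V_j$ that gets displaced forces its counter $count\_u$ onto $count\_j\_out$, and since at most one element may sit there, all displaced vertices of colour $j$ must in fact be the same vertex. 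This collision trick is the key idea you are missing; it is also what makes the arithmetic $k+k+k+k(k-1)=k^2+2k$ come out exactly.
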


As $(T,X,w)$ is a {\sc Yes}-instance of \textsc{ACT} if and only if the corresponding instance of {\sc $k$-multicolored clique} is a {\sc Yes}-instance, we have that \textsc{ACT} is NP-hard.
Moreover, let \linebreak $(T,X,w,k')$ be the instance of  \textsc{MWACT} with $k' = k^2 + k$ and $T,X,w$ as in Lemma~\ref{lem:MWACTconstruction}. Then Lemma~\ref{lem:MWACTconstruction}  also implies that $(T,X,w,k')$ is a {\sc Yes}-instance of \textsc{MWACT} if and only if the corresponding instance of {\sc $k$-multicolored clique} is a {\sc Yes}-instance. As $k'$ is expressible as a function of $k$, any FPT algorithm for Lemma~\ref{lem:MWACTconstruction} implies a FPT algorithm for {\sc $k$-multicolored clique}. Therefore, as {\sc $k$-multicolored clique} is $W[1]$-hard, so is \textsc{MWACT}. Moreover as $(T,X,w)$ satisfies the properties of Lemma~\ref{lem:MWACTconstruction}, we have the following:

\begin{lemma}
 \textsc{ACT} is NP-hard and \textsc{MWACT} is $W[1]$-hard, even under the following conditions:
 
 \begin{itemize}
 \item $w(x,v) \in \{0,1,\infty\}$ for all $x \in X, v \in V(T)$;
 \item $w(x,v)= 0$ for exactly one $v$ for each $x \in X$;
 \item if $w(x,v) = 0$ then $w(y,v) = \infty$ for all $y \neq x$;
 \item for any $x \in X$, $u,v \in V(T)$ such that $w(x,u),w(x,v) < \infty$, $u$ and $v$ are incomparable.
\end{itemize}
\end{lemma}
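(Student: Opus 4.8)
The plan is to obtain this lemma as an immediate corollary of Lemma~\ref{lem:MWACTconstruction}, which already carries all of the combinatorial weight. First I would invoke that construction: given an instance $H = (V_1 \cup \dots \cup V_k, E)$ of {\sc $k$-Multicolored Clique}, it produces in polynomial time an \textsc{ACT} instance $(T,X,w)$ that satisfies exactly the four bullet conditions in the present statement, and for which an incomparable assignment of finite weight exists if and only if $H$ contains a $k$-multicolored clique. Since these four structural properties hold for \emph{every} instance output by the construction, any hardness established through this reduction automatically holds under the stated restrictions; this is the observation that lets me fold the restrictions into the conclusion for free.

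For the NP-hardness of \textsc{ACT}, I would note that the unparameterized Multicolored Clique problem is NP-hard (a $k$-clique in an arbitrary graph is encoded as a multicolored clique in the standard way), and that the map $H \mapsto (T,X,w)$ from Lemma~\ref{lem:MWACTconstruction} is a polynomial-time many-one reduction preserving {\sc Yes}/{\sc No} answers. Hence \textsc{ACT} is NP-hard, and because every constructed instance meets the four conditions, it remains NP-hard under those conditions.

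For the $W[1]$-hardness of \textsc{MWACT}, I would fix the weight budget $k'$ to the explicit function of $k$ furnished by Lemma~\ref{lem:MWACTconstruction} (a value of order $k^2 + k$), chosen so that whenever any finite-weight incomparable assignment exists, one of weight at most $k'$ also exists. The map $(H,k) \mapsto ((T,X,w),k')$ is then a parameterized reduction: it runs in polynomial time and sends the parameter $k$ to a new parameter $k'$ depending only on $k$. Composing this with the known $W[1]$-hardness of {\sc $k$-Multicolored Clique} yields the $W[1]$-hardness of \textsc{MWACT}, again under the four stated conditions.

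Because the real difficulty is entirely absorbed into Lemma~\ref{lem:MWACTconstruction}, I do not expect any serious obstacle; the present lemma is essentially a repackaging of that result, and the main step is simply stitching together the two directions of its equivalence with the appropriate base-hardness facts. The only points that genuinely need checking are that Multicolored Clique is NP-hard in its unparameterized form (standard, and needed for the NP-hardness claim) and that the budget $k'$ is a computable function of $k$ alone (needed so the second reduction is an FPT reduction rather than merely a polynomial one). Both are immediate.
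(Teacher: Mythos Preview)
Your proposal is correct and follows essentially the same route as the paper: derive the lemma as an immediate corollary of Lemma~\ref{lem:MWACTconstruction}, using that construction as a polynomial many-one reduction for NP-hardness and as an FPT reduction (with $k'$ a function of $k$ alone) for $W[1]$-hardness, and noting that every instance it outputs satisfies the four structural conditions. The paper's argument is exactly this, presented inline just before the lemma statement.
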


%


We next reduce \textsc{ACT} to \textsc{NC}.
The main idea behind this reduction is that every element of $X$ can be represented by a child of the same speciation node in a least-resolved DS-tree.
The tree $T$ can be represented by the distinguished base tree in the species network, and secondary arcs can be added in such a way that, for any \DTL reconciliation, the node corresponding to $x \in X$ can only be mapped to nodes $v$ for which $w(x,v) < \infty$.

\begin{lemma}\label{lemma:NCreduction}
Let $(T,X,w)$ be an instance of \textsc{ACT}, such that $w(x,v) \in \{0,1,\infty\}$ for all $x \in X, v \in V(T)$, $w(x,v)= 0$ for exactly one $v$ for each $x \in X$,
\mj{if $w(x,v) = 0$ then $w(y,v) = \infty$ for all $y \neq x$,}
 and for any $x \in X$, $u,v \in V(T)$ such that $w(x,u),w(x,v) < \infty$, $u$ and $v$ are incomparable.
 
 Then in polynomial time, we can construct both a least-resolved DS-tree $(D, l)$ and a \mjn{time-consistent} LGT network $N$ such that for any integer $k$,
 $(T,X,w)$ has an incomparable assignment of cost at most $k$ if and only if there exists a binary refinement $(D', l')$ of $(D, l)$ such that $(D', l')$ is $N$-reconcilable using at most $2k$ transfers.
\end{lemma}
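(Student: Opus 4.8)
The plan is to realize the three-way intuition sketched before the lemma: the base tree $T_0(N)$ encodes the tree $T$, the root speciation of $D$ together with its duplication children encodes the requirement that the elements of $X$ be placed on an antichain, and the secondary arcs of $N$ encode the cost function $w$. The construction of $D$ and $N$ does not depend on $k$; the parameter $k$ enters only through the transfer budget, with the factor $2$ coming from the fact that realizing a nonzero-cost placement will require transferring two gene copies.

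\textbf{Construction.} I would build $T_0(N)$ by taking $T$ as a spine. For each $x\in X$ let $v_0^x$ be the unique node with $w(x,v_0^x)=0$; these are distinct for distinct $x$ by the second and third hypotheses. I graft onto the spine, for each $x$, a \emph{marker gadget} containing two leaves $a_x,b_x$, attached so that $v_0^x$ is their lowest common ancestor in $T_0(N)$ and so that all ancestor/descendant relations among the original nodes of $T$ are preserved. For each $x$ and each $v$ with $w(x,v)=1$, I add a pair of secondary arcs from (a subdivision node just below) $v$ into the dead-end subtrees holding $a_x$ and $b_x$. The fourth hypothesis (finite-weight nodes of a fixed $x$ are pairwise incomparable) is exactly what lets these arcs be inserted while keeping $N$ time-consistent: assign time stamps along the depth of $T$ and subdivide arcs so that each secondary arc joins two co-existing nodes. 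The DS-tree $D$ has a speciation root $\rho$ whose children are duplication nodes $d_x$ ($x\in X$), and each $d_x$ has the two leaves $a_x,b_x$ as children; this $D$ is least-resolved.

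\textbf{Forward direction.} Given an incomparable assignment $f$ of cost $\le k$, I would take a binary refinement $D'$ of $D$ (refining the multifurcating root $\rho$ into an arbitrary binary tree of speciation nodes over the $d_x$) and define a reconciliation $\alpha$ that maps the root part onto the spine by genuine speciations and speciation-losses, ending at $\alpha_\last(d_x)=f(x)$. Since the $f(x)$ are pairwise incomparable in $T$, such a speciation-only mapping of the root part exists and uses no transfers. Below each $d_x$: if $f(x)=v_0^x$ (cost $0$) the duplication reaches $a_x,b_x$ through the base tree with no transfer; if $w(x,f(x))=1$ (cost $1$) each of the two copies reaches its marker via one transfer-loss along the dedicated secondary arcs, i.e. $2$ transfers. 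Summing over $x$ yields at most $2k$ transfers, so $D'$ is $N$-reconcilable using $\le 2k$ transfers.

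\textbf{Backward direction and main obstacle.} Conversely, given a binary refinement $D'$ and a witness $\alpha$ using $\le 2k$ transfers, I would set $f(x)=\alpha_\last(d_x)$. First, feasibility forces $w(x,f(x))<\infty$: since $a_x,b_x$ live in dead-end gadgets reachable only from $v_0^x$ through the spine, or from a finite-weight node $v$ through its dedicated secondary arcs, the subtree under $d_x$ can be reconciled only when $f(x)$ is one of these nodes, using $0$ transfers if $f(x)=v_0^x$ and at least $2$ otherwise; hence the $x$-part contributes at least $2w(x,f(x))$ transfers, giving $\sum_x w(x,f(x))\le k$. Second, the images $f(x)$ must form an antichain, and this is the crux. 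I must argue that no node of the root part (the refinement of $\rho$, all labelled $\S$, hence realizable as either $\S$ or $\T$) can in fact be realized as a transfer. The intended mechanism is that every secondary arc leads irrevocably into a marker gadget that can host only the single pair $a_x,b_x$, so a transfer taken by a root-part lineage would trap it in a dead end and prevent it from fanning out to the remaining duplication nodes it must still reach. Consequently every root-part node is a genuine speciation, and since a reconciliation of a binary tree using only speciations and speciation-losses sends its leaves to pairwise incomparable nodes of $T_0(N)$ — a fact I would verify by induction on the tree — the $f(x)$ are pairwise incomparable in $T$. Making this dead-end/fan-out impossibility rigorous, i.e. ruling out every way a reconciliation might exploit high transfers to cheat the antichain condition within budget, is the main obstacle, and is where the precise shape of the marker gadgets, the placement of the secondary arcs, and time-consistency all have to be pinned down.
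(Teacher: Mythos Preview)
Your construction and the paper's are essentially the same: a speciation root with one duplication child $d_x$ per $x\in X$, each $d_x$ carrying two marker leaves whose least common ancestor in $T_0(N)$ is the unique weight-$0$ node $v_0^x$, with a pair of secondary arcs per weight-$1$ node leading into the two marker dead-ends. Your forward direction and your identification of the ``dead-end'' mechanism as the crux of the backward direction match the paper exactly.

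The one place that needs tightening is the definition of $f$ in the backward direction. Setting $f(x)=\alpha_\last(d_x)$ does not work as written: $\alpha_\last(d_x)$ need not be a vertex of $T$ at all (it could be a subdivision vertex or lie inside a marker gadget), and even when it is in $T$ it may be a \emph{strict ancestor} of $v_0^x$, in which case $w(x,\alpha_\last(d_x))=\infty$ while the $d_x$-subtree still reconciles with zero transfers. The paper handles this by first proving (their Claim~1) that $\alpha_\last(d_x)$ must be an ancestor in $T_0(N)$ of some $v$ with $w(x,v)\le 1$, and then \emph{defining} $f(x)$ to be such a $v$ (taking $v_0^x$ when possible). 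Incomparability of the $f(x)$ then follows because the $\alpha_\last(d_x)$ are pairwise incomparable (your dead-end argument), and descendants of incomparable nodes are incomparable. With this adjustment your argument goes through; everything else you wrote is on target.
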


\medskip

By setting $R = R(D, l)$, Lemma~\ref{lemma:NCreduction} implies that $R$ is $N$-consistent if and only if $(T,W,x)$ has an incomparable assignment of cost $< \infty$, i.e. $(T,W,x)$ is a {\sc Yes}-instance of  \textsc{ACT}. As  \textsc{ACT} is NP-hard (under the restrictions in  Lemma~\ref{lemma:NCreduction}), so is  \textsc{NC}.
Moreover, for any integer $k$, Lemma~\ref{lemma:NCreduction} implies that $R$ is $N$-consistent using at most $k'=2k$ transfers if and only if  $(T,W,x,k)$ is a {\sc Yes}-instance of \textsc{MWACT}. As \textsc{MWACT} is $W[1]$-hard (under the restrictions in  Lemma~\ref{lemma:NCreduction}), so is \textsc{TMNC}.


\begin{theorem}
 \textsc{NC} is NP-hard and \textsc{TMNC} is $W[1]$-hard.
\end{theorem}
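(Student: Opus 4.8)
The plan is to deduce the theorem as a corollary by composing the reductions already assembled in this section; the only real work is to check that the chain of equivalences lines up and that, for the parameterized statement, the transfer bound is carried through as a computable function of $k$. Concretely, the NP-hardness of \textsc{NC} follows the route $k$-Multicolored Clique $\to$ \textsc{ACT} $\to$ \textsc{NC}, and the $W[1]$-hardness of \textsc{TMNC} follows the parameter-tracking route $k$-Multicolored Clique $\to$ \textsc{MWACT} $\to$ \textsc{TMNC}.

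For NP-hardness I would start from an instance $H$ of $k$-Multicolored Clique and apply Lemma~\ref{lem:MWACTconstruction} to obtain, in polynomial time, an \textsc{ACT} instance $(T,X,w)$ satisfying the four listed structural properties. These properties are precisely the hypotheses required by Lemma~\ref{lemma:NCreduction}, so I can feed $(T,X,w)$ into that lemma to obtain (again in polynomial time) a least-resolved DS-tree $D$ and a time-consistent species network $N$. I then set $R = R(D)$; since $D$ is least-resolved with $R(D)=R$, it is the unique least-resolved DS-tree for $R$, so Lemma~\ref{lem:equiv-ds-tree} applies and tells us that $R$ is $N$-consistent if and only if $D$ admits a binary refinement that is $N$-reconcilable. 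By Lemma~\ref{lemma:NCreduction} (with no bound on transfers, i.e. any finite cost) this holds if and only if $(T,X,w)$ has an incomparable assignment of finite weight, which by Lemma~\ref{lem:MWACTconstruction} holds if and only if $H$ has a $k$-multicolored clique. The whole composition is polynomial, giving NP-hardness of \textsc{NC}.

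For the $W[1]$-hardness of \textsc{TMNC} I would run the same chain while tracking the parameter. Lemma~\ref{lemma:NCreduction} states that $(T,X,w)$ has an incomparable assignment of cost at most $k$ if and only if $D$ has a binary refinement that is $N$-reconcilable using at most $2k$ transfers; combining this with the transfer-preserving form of Lemma~\ref{lem:equiv-ds-tree} shows that $R$ is $N$-consistent using at most $k' = 2k$ transfers if and only if $(T,X,w,k)$ is a {\sc Yes}-instance of \textsc{MWACT}. Since $k' = 2k$ is a computable function of $k$ and each step is polynomial, this is an FPT reduction from \textsc{MWACT} to \textsc{TMNC}. As FPT reductions compose and \textsc{MWACT} is $W[1]$-hard (Lemma~\ref{lem:MWACTconstruction} gives an equivalence to $k$-Multicolored Clique with parameter polynomially bounded in $k$), it follows that \textsc{TMNC} is $W[1]$-hard.

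The theorem is routine once the lemmas are in hand, so the only obstacle at this level is bookkeeping: I must verify that the \textsc{ACT} instance produced by Lemma~\ref{lem:MWACTconstruction} satisfies exactly the restrictions demanded as hypotheses by Lemma~\ref{lemma:NCreduction} (this is why Lemma~\ref{lem:MWACTconstruction} is phrased with those four explicit properties rather than as a bare reduction), and that the ``cost $\le k$'' to ``$\le 2k$ transfers'' translation is linear so the parameter remains a function of $k$. The genuine difficulty lies entirely in the cited lemmas (relegated to the appendix): encoding vertex- and edge-selection of the clique through the incomparability constraint of \textsc{ACT}, and designing the secondary arcs of $N$ so that each $X$-node's mapping is forced to respect $w$, with the number of transfers matching the weight up to the factor of two.
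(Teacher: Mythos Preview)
Your proposal is correct and follows essentially the same route as the paper: the theorem is deduced by composing Lemma~\ref{lem:MWACTconstruction} (Clique $\to$ \textsc{ACT}/\textsc{MWACT} with the four structural properties) and Lemma~\ref{lemma:NCreduction} (\textsc{ACT}/\textsc{MWACT} $\to$ \textsc{NC}/\textsc{TMNC}), then invoking Lemma~\ref{lem:equiv-ds-tree} to pass from $N$-reconcilability of $D$ to $N$-consistency of $R=R(D)$, with the parameter doubling $k\mapsto 2k$ giving the FPT reduction. If anything, you are slightly more explicit than the paper in spelling out the role of Lemma~\ref{lem:equiv-ds-tree} and in verifying that the structural hypotheses of Lemma~\ref{lemma:NCreduction} are met; the paper absorbs these checks into the discussion preceding the theorem.
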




\vspace{-2mm}

\section{Dynamic programming for bounded degree DS-trees}\label{sec:dpalgo}
\setlength{\textfloatsep}{2pt}
\begin{algorithm}[!b]

\KwData{A DS-tree $D$, an LGT network $N$}
\KwResult{$\infty$ if $D$ is not $N$-reconcilable, or otherwise the minimum number of transfers}
\caption{minTransferCost($D, N$)}

Initialize $f(g, s) = \infty$ for all $g \in V(D), s \in V(S)$

\For{$g \in V(D)$ in post-order traversal}{
	\For{$s \in V(N)$ in post-order traversal}{
		\uIf{$g$ is a leaf}{
			$f(g, s) = 0$ if $\sigma(g) = s$, otherwise $f(g, s) = \infty$
		}
		\Else{
			$best = \infty$

			\For{$(D', l') \in \mathcal{B}(g)$}{
				$b = reconcileLBR((D', l'), N, s, f)$
				
				\lIf {$b < best$}{$best = b$}
			}
			$f(g, s) = best$
		}
	}
}
return $\min_{s \in V(N)} f(r(D), s)$

\end{algorithm}

In this section, we show that given a \ml{relation graph} $R$ and its least-resolved DS-tree $(D, l)$, if every node of $D$ has degree at most $k$, then 
one can decide if $(D, l)$ is $N$-reconcilable in time 
\mj{$O(2^{k}k!k|V(D)||V(N)|^4)$.}
Moreover, if $(D, l)$ is $N$-reconcilable, our algorithm finds the minimum number of transfers
required by any possible reconciliation.
In particular, if $D$ is binary, then TMNC can be solved in polynomial time. 
\ml{Note that in~\cite{hellmuth2019reconciling}, it is shown that a DS-tree can always be reconciled with some network in a similar reconciliation model, and the authors characterized precisely when a DS-tree  can be reconciled with a given network (although transfers are not studied and, hence, not minimized as we do here).  Let us also mention that in a series of papers~\citep{hellmuth2017biologically,nojgaard2018time,hellmuth2019reconciling}, it is shown how, given a DS-tree with known transfer events but no species phylogeny, one can find a species tree/network that it can be reconciled with.}

The idea of the algorithm is similar to those  of~\cite{Scornavacca2016,kordi2017exact,mykowiecka2017inferring}.
We use dynamic programming over $V(D)$, from the leaves to the root, and when we encounter a non-binary node, we try every way of refining it.   This is a relatively standard procedure, although ensuring a valid reconciliation while minimizing transfers requires care.

For each $g \in V(D)$ and each $s \in V(N)$, we denote by $f(g, s)$ the minimum number of transfers needed 
by a reconciliation
$(D_g, \alpha)$ with respect to $N$ if we require $\alpha_\last(g) = s$ (recall that $D_g$ is the subtree of $D$ rooted at $g$).
If $g$ is a binary node, we try mapping $g_l$ and $g_r$ to every pair of species $s_1$ and $s_2$
that allow $e(g, \last) \in \{l(g), \T \}$, and $f(g, s)$ is the minimum over all possibilities.  For fixed $s, s_1$ (resp. $s_2$), the number of transfers required 
on the branch $(g, g_l)$ (resp. $(g, g_r)$) is the minimum number of secondary arcs on a path from $s$ to $s_1$ (resp. $s_2$).
This path would constitute the sequence $\alpha(g_l)$ (resp. $\alpha(g_r)$).  
  Then $f(g, s)$ can be computed from these values, 
plus those of $f(g_l, s_1)$ and $f(g_r, s_2)$.  
If $g$ is a non-binary node with children $g_1, \ldots, g_k$, we simply try to refine $g$ in every possible way, then do as in the binary case.  In such a binary refinement $B$ of $g$, we may treat the $g_1, \ldots, g_k$ nodes of $B$ as leaves 
and use the previously computed $f(g_i, s')$ values for each $(g_i, s')$ pair.  
Let us turn to the algorithmic details.


\setlength{\textfloatsep}{2pt}
\begin{algorithm}[!t]

\KwData{A binary DS-tree $(D', l')$ which is an LBR of some subtree of $D$, an LGT network $N$, the desired species $s$ for $r(D')$, a cost function $f$ on the leaves of $D'$}
\KwResult{The minimum cost to reconcile $D'$ with $N$ such that $\alpha_\last(r(D')) = s$}
\caption{reconcileLBR($D', N, s, f)$}

Set $f' = f$  (we maintain temporary costs $f'$ for $D'$)

\For{$g \in I(D')$ in post-order traversal}{
   \For{$s' \in V(N)$ in post-order traversal}{
      \uIf{$l'(g) = \S$}{
      	 \uIf{\man{$s'$ has two children and } $(s', s'_l), (s', s'_r) \in E_p$}{

        $cost12 = \min_{(s_1, s_2) \in P(s'_l) \times P(s'_r)} (f'(g_l, s_1) + t(s'_l, s_1) + f'(g_r, s_2) + t(s'_r, s_2))$
        
        $cost21 = \min_{(s_1, s_2) \in P(s'_l) \times P(s'_r)} (f'(g_r, s_1) + t(s'_l, s_1) + f'(g_l, s_2) + t(s'_r, s_2))$

		$f'(g, s') = \min (cost12, cost21 ) $		 \label{line:the-s-case}
		 }
		\ElseIf{$s'$ is the tail of a secondary arc $(s', s'')$ ($s'' \in \{s'_l, s'_r\}$)}{

            $cost12 = 1 + 
            \min_{(s_1, s_2) \in P(s') \times P(s'')} (f'(g_l, s_1) + t(s', s_1) + f'(g_r, s_2) + t(s'', s_2))$

            $cost21 = 1 + 
            \min_{(s_1, s_2) \in P(s') \times P(s'')} (f'(g_r, s_1) + t(s', s_1) + f'(g_l, s_2) + t(s'', s_2))$

            $f'(g, s') = min(cost12, cost21)$ \label{line:the-t-case}
            

		}
      }
      \ElseIf{$l'(g) = \D$}{
            $f'(g, s') =  \min_{(s_1, s_2) \in P(s') \times P(s')} (f'(g_l, s_1) + t(s', s_1) + f'(g_r, s_2) + t(s', s_2))$ \label{line:the-d-case}
     }

   }
}

return $f'( r(D'), s)$

\end{algorithm}

Let $g \in I(D)$ with children $g_1, \ldots, g_k$.  
A binary DS-tree $(D', l')$ with root $g$ and leafset $g_1, \ldots, g_k$ such that $l'(g') = l(g)$ for every $g' \in I(D')$ will be called a \emph{local binary refinement}
of $g$ (we write LBR for short).  We denote by $\mathcal{B}(g)$ the set of possible LBRs of $g$.
For $s \in V(N)$, denote by $P(s)$ the set of vertices of $N$ that can be reached by some directed path starting from $s$, 
and let $t(s, s')$ denote the minimum number of secondary arcs necessary to go from $s$ to $s'$
(note that $t(s, s')$ is easy to compute using weighted shortest path algorithms).
We let $t(s, s') = \infty$ if there is no path from $s$ to $s'$.

The algorithm $minTransferCost$ traverses $D$ in a post-order traversal and, 
for each node $g$ and each LBR $D'$ in $\mathcal{B}(g)$, 
calls $reconcileLBR$ to reconcile $D'$.  Note that in the case that $g$ is binary, 
only one LBR is tested, namely the tree with two leaves $g_l$ and $g_r$.


The proof of correctness can be done by induction over the height of $D_g$ and can be found in the Appendix.
For the complexity, we first compute the all-pairs shortest paths 
in $N$ in time $O(|V(N)|^3)$ (this is only done once and will not contribute to the final complexity). 
It is known that the number of binary trees on $k$ leaves 
is 
\mj{$(2k-3)!! = O(2^kk!)$~\citep{felsenstein2004inferring}}
which bounds the size of each set of LBRs.  
The main algorithm computes $\mathcal{B}(g)$ up to $|V(D)||V(N)|$ 
times.  Each member of each $\mathcal{B}(g)$ results in a call to $reconcileLBR$, which is done with a tree $D'$ on 
at most $k$ leaves.  Then in this subroutine for each $(g, s)$ pair with $g \in V(D')$
 and $s \in V(N)$, $O(|V(N)|^2)$ pairs of the form $(s_1, s_2)$ are tested -- 
 this takes time $O(k|V(N)|^3)$.
 The total time is thus 
 $O(2^{k}k!k|V(D)||V(N)|^4)$.
\ml{The space taken by the algorithm is $O(|V(D)||V(N)| + |V(N)|^2)$.  To see this, observe that $O(|V(N)|^2)$ space is needed to store the aforementioned all-pairs shortest path values and $O(|V(D)||V(N)|)$ space is needed for the $f(g, s)$ values.  Each enumerated $(D', l') \in \mathcal{B}(g)$ takes space $O(k) = O(|V(D)|)$, which does not add to the space complexity if only the current such $(D', l')$ is kept in memory at all time. Also, one can check that $reconcileLBR$ can be done without additional space (the $P(s)$ sets can be computed on the fly each time when needed).}

\begin{theorem}\label{thm:dp-algo-is-ok}
Algorithm $minTransferCost$ is correct.  Moreover, it runs in time \linebreak 
$O(2^{k}k!k|V(D)||V(N)|^4)$ \ml{and space $O(|V(N)||V(D)| + |V(N)|^2)$}.
\end{theorem}

\ml{Note that while we focused on minimizing the contribution of the $k$ parameter in the above algorithm, it is plausible that techniques developed for similar dynamic programming algorithms in~\cite{kordi2017exact,mykowiecka2017inferring} could help reduce the $|V(D)||V(N)|^4$ portion of the complexity.  In essence, a factor of $|V(N)|^2$ is saved in~\cite{kordi2017exact,mykowiecka2017inferring} by defining $f(g, s)$ as the best cost of a reconciliation in which $\alpha_{\last}(g)$ is mapped to any node reachable from $s$ (instead of requiring $s$ itself), which avoids having to minimize over all reachable pairs $(s_1, s_2)$ for every node of $D$ as in our algorithm.}

\vspace{-3mm}

\section{With unknown transfer highways}\label{sec:unknownhighways}

\vspace{-2mm}

The set of secondary arcs on a species network cannot always be known with confidence.  In fact, reconciliation is sometimes used to infer such arcs on a given species tree~\citep{THL2011}. 

In this section, we remove the assumption that transfer arcs are known.  We are given a species \emph{tree} $S$ with $|L(S)| > 1$, and 
the secondary arcs $E_s$ are to be determined in a time-consistent manner.  The question is whether,
for a relation graph $R$, there is 
a species network $N$ with base tree $T_0(N) = S$ such that $R$ is $N$-consistent.

\begin{definition}
Let $S$ be a species tree.
We say \man{that a relation graph} $R$ is \ml{\emph{$S$-base-consistent}} (using $k$ transfers) if
there exists a time-consistent \cs{LGT} network $N$ such that 
$T_0(N) = S$ and 
$R$ is  $N$-consistent (using $k$ transfers).
\end{definition}

We will show that a relation graph $R$ is always \ml{\emph{$S$-base-consistent}}, 
provided there is a $DS$-tree $(D, l)$ that displays $R$.  In fact, we prove that \emph{any} binary $DS$-tree
can be made to ``agree'' with any species tree, no matter how inconsistent they appear to be (provided that each $DS$-tree leaf can be mapped to a corresponding species tree leaf).


Beforehand, we can easily establish the equivalence between relation graphs and $DS$-trees as we did for $N$-consistency.
We say that a $DS$-tree $(D, l)$ is \ml{\emph{$S$-base-reconcilable}} (using $k$ transfers) if there exists a 
time-consistent species network $N$ 
such that $T_0(N) = S$ and $(D, l)$ is $N$-reconcilable (using $k$ transfers).

\begin{lemma}\label{lem:equiv-ds-tree-unknown}
Let $R$ be a relation graph and $S$ be a species tree.
Then $R$ is \ml{$S$-base-consistent} (using $k$ transfers) if and only if 
there exists a least-resolved $DS$-tree $(D, l)$ \mj{that displays $R$} and a binary refinement $(D', l')$ of $(D, l)$ such that
$(D', l')$ is \ml{$S$-base-reconcilable} (using $k$ transfers).
\end{lemma}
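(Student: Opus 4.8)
The plan is to deduce this statement directly from Lemma~\ref{lem:equiv-ds-tree} by unfolding the definitions of $S$-consistency and of $S$-reconcilability, each of which is merely an existential quantification over the time-consistent species networks $N$ whose base tree $T_0(N)$ equals $S$. The crucial observation is that a single such network $N$ can serve as a witness for both sides of the biconditional simultaneously, so the argument reduces to threading one existential quantifier through the fixed-network equivalence already established in Lemma~\ref{lem:equiv-ds-tree}. (As in Lemma~\ref{lem:equiv-ds-tree}, I read the right-hand side as additionally requiring $R(D) = R$, so that $D$ is the least-resolved $DS$-tree displaying $R$; without this the backward implication cannot hold, and the omission is presumably a transcription shortcut from the analogous Lemma~\ref{lem:equiv-ds-tree}.)

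For the forward direction, I would start from the assumption that $R$ is $S$-consistent using $k$ transfers. By definition this yields a time-consistent network $N$ with $T_0(N) = S$ for which $R$ is $N$-consistent using $k$ transfers. Applying the ``only if'' direction of Lemma~\ref{lem:equiv-ds-tree} to this fixed $N$ produces a least-resolved $DS$-tree $D$ with $R(D) = R$ together with a binary refinement $D'$ of $D$ that is $N$-reconcilable using $k$ transfers. Since the \emph{same} $N$ is time-consistent and satisfies $T_0(N) = S$, the definition of $S$-reconcilability is met verbatim, so $D'$ is $S$-reconcilable using $k$ transfers, as required.

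For the backward direction, I would take a least-resolved $DS$-tree $D$ with $R(D) = R$ and a binary refinement $D'$ that is $S$-reconcilable using $k$ transfers. Unfolding the latter gives a time-consistent network $N$ with $T_0(N) = S$ such that $D'$ is $N$-reconcilable using $k$ transfers. I then invoke the ``if'' direction of Lemma~\ref{lem:equiv-ds-tree} with this same $N$: because $D$ is least-resolved with $R(D) = R$ and $D'$ is an $N$-reconcilable (using $k$ transfers) binary refinement of $D$, it follows that $R$ is $N$-consistent using $k$ transfers. As $N$ is time-consistent with $T_0(N) = S$, this is precisely a witness that $R$ is $S$-consistent using $k$ transfers.

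The argument is essentially bookkeeping and involves no substantive obstacle; the only point requiring care is to ensure that in each direction the very same network $N$ witnesses both the reconcilability/consistency statement and the base-tree condition $T_0(N) = S$, rather than two different networks. Keeping the quantifier over $N$ outermost throughout guarantees this, and the ``using $k$ transfers'' qualifier carries along unchanged, since Lemma~\ref{lem:equiv-ds-tree} already tracks the transfer count for each fixed network.
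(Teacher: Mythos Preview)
Your proof is correct and follows exactly the same route as the paper: unfold the definitions of $S$-consistency and $S$-reconcilability to obtain a witnessing network $N$ with $T_0(N)=S$, and then apply the appropriate direction of Lemma~\ref{lem:equiv-ds-tree} to that fixed $N$. Your remark that the hypothesis $R(D)=R$ is implicitly intended (as in Lemma~\ref{lem:equiv-ds-tree}) is also accurate.
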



To show that any $DS$-tree $(D, l)$ is \ml{$S$-base-reconcilable}, we add to $S$ a set of secondary arcs $E_s$ of size 
\mj{$O(h(D)|V(S)|^2)$,} \man{where $h(D)$ is the height of $D$ (see below)}.
\ml{We then obtain a reconciliation $\alpha$ in which $e(\alpha_{\last}(u)) = \mathbb{T}$ for every internal node $u$ of $D$}, which might be necessary in some cases.
For a node $v \in V(D)$, we denote by $d(v)$ the 
\emph{depth} of $v$, which is the number of edges on the path between $v$ and $r(D)$.
The \emph{height} of $D$, denoted $h(D)$, is the maximum depth of a node of $D$.
Let $m = |L(S)|$, and let $(s_1, \ldots, s_m)$ be an arbitrary ordering of $L(S)$.   
Recall that for $i \in [m]$, $s_i$ is a leaf of $S$, and that $p(s_i)s_i$ refers to the edge from the parent of $s_i$ to $s_i$.
We construct the network $N(D)$ from $S$ using the following algorithm:

\setlength{\textfloatsep}{2pt}
\begin{algorithm}[H]
\caption{constructNetwork($D, S$)}
\label{algo:everything-consistent}

	\For{$d = 0$ to $h(D) + 1$}{
		\For{$i = 1$ to $m$}{
			\For{$j = 1$ to $m$, $j \neq i$}{
				Subdivide the arc $p(s_i)s_i$, creating a donor node $\sdon{i}{j}{d}$ \;
				Subdivide the arc $p(s_j)s_j$, creating a receiver node $\srec{j}{i}{d}$ \;
				Add the secondary arc $(\sdon{i}{j}{d}, \srec{j}{i}{d})$ to $E_s$	\;	
			}
		}
	}

\end{algorithm}

Thus we add every transfer from the $s_1$ branch to the $s_i$ branch with $i \neq 1$, 
then every transfer from the $s_2$ branch to the other $s_i$ branches, and so on, and 
repeat this process $h(D) + 2$ times.  Note that $p(s_i)$ changes with each subdivision.
It is not hard to see that $N(D)$ is time-consistent, 
\man{since each time we insert a new arc $(x, y)$, its two endpoints $x$ and $y$ are 
below every other previously inserted node.}

\begin{lemma}\label{lem:all-ds-trees-reconcilable}
Let $(D, l)$ be any binary $DS$-tree and let $N := N(D)$ be the species network obtained from $S$ after
applying Algorithm~\ref{algo:everything-consistent}. 
Then $(D, l)$ is $N$-reconcilable.
\end{lemma}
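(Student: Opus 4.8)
The plan is to construct an explicit witness reconciliation $\alpha$ in which \emph{every} internal node of $D$ is a transfer, so that the labelling condition of Definition~\ref{def:nreconciable} holds for free: a \T event is admissible both when $l(u)=\S$ and when $l(u)=\D$. The construction rests on one structural observation about $N:=N(D)$. Since Algorithm~\ref{algo:everything-consistent} runs the outer loop over $d$ first, and each subdivision of $p(s_i)s_i$ inserts a node immediately above $s_i$, the donor and receiver nodes on each leaf branch $s_i$ are \emph{stratified} by level: reading branch $s_i$ from top to bottom one meets first the part of the base tree $S$, then all level-$0$ nodes, then all level-$1$ nodes, and so on down to level $h(D)+1$, and finally the leaf $s_i$. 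Moreover, at every level $d$ and from every branch $s_i$ there is a donor $\sdon{i}{j}{d}$ to every other branch $s_j$ ($j\neq i$), together with the matching secondary arc $(\sdon{i}{j}{d},\srec{j}{i}{d})\in E_s$; because $m=|L(S)|>1$, a transfer out of any branch to some other branch is available at each level.

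With this in hand I would build $\alpha$ by a top-down recursion on $D$, maintaining the invariant that a node $u$ at depth $d(u)$ has its lineage \emph{enter} some leaf branch $s_i$ at the top of the level-$d(u)$ stratum. For an internal $u$ entering $s_i$, I pick any $j\neq i$, route $u$ by $\emptyset$-events (case 7) through the level-$d(u)$ stratum of $s_i$ until it reaches $\sdon{i}{j}{d(u)}$, and declare $u$ a transfer (case 4) there, so that $\alpha_\last(u)=\sdon{i}{j}{d(u)}$: setting $\alpha_1(u_l)=\sdon{i}{j}{d(u)}$ keeps the left child on $s_i$, while $\alpha_1(u_r)=\srec{j}{i}{d(u)}$ sends the right child onto $s_j$. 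Each child then descends by $\emptyset$-events into its level-$(d(u)+1)$ stratum, where the recursion is applied. For a leaf $g$ entering $s_i$: if $i=\sigma(g)$ it descends to the leaf and terminates as an extant leaf (case 1); if $i\neq\sigma(g)$ it performs a single \TL (case 6) across $(\sdon{i}{\sigma(g)}{d(g)},\srec{\sigma(g)}{i}{d(g)})$ to reach branch $\sigma(g)$ and then descends to its leaf. The root is handled by first descending from $r(N)$ through the base tree to the top of an arbitrary branch via \SL-events (case 5), entering its level-$0$ stratum, and then applying the internal-node rule.

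It then remains to verify that this is a genuine \DTL reconciliation and a witness as required. Each declared event matches a case of Definition~\ref{def:DTLrecLGTNetwork}: transfers use secondary arcs that exist by construction, $\emptyset$-events use the unique principal out-arc of a subdivided node, \SL-events follow a principal arc out of a base-tree node, and leaves terminate correctly; consequently $e(\alpha_\last(u))=\T$ for every internal $u$, which lies in both $\{\S,\T\}$ and $\{\D,\T\}$, so Definition~\ref{def:nreconciable} is satisfied whatever $l(u)$ is. As $N$ is already time-consistent and we only ever use arcs of $N$, no further condition is introduced. The point needing care --- and the main obstacle --- is the level budget: a node and both its subtrees may sit in the same species, so a child may be transferred to a foreign branch $s_j$ and then be forced to ferry every one of its leaves back by further transfers; this is exactly why the construction may transfer at every internal node and why each mis-routed leaf needs a free level below it for its return \TL. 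A short induction on the height of $D_u$ confirms that a subtree of height $\eta$ rooted at depth $d$ uses only levels $d,\dots,d+\eta$, so the whole tree (rooted at depth $0$, of height $h(D)$) stays within levels $0,\dots,h(D)$, and the $h(D)+2$ strata produced by Algorithm~\ref{algo:everything-consistent} --- with one spare for the leaves' terminal \TL --- always suffice.
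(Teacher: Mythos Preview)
Your proposal is correct and follows essentially the same strategy as the paper: make every internal node of $D$ a transfer event, exploiting the fact that Algorithm~\ref{algo:everything-consistent} produces a depth-stratified supply of secondary arcs so that a node at depth $d$ can always find a donor at level $d$ on whichever branch it currently sits. The paper packages this as a bottom-up induction on $h(D_v)$, proving the slightly stronger statement that for every $v\in I(D)$ and every pair $s_i\neq s_j$ there is a witness reconciliation of $D_v$ with $\alpha(v)=(\sdon{i}{j}{d(v)})$; you instead run a top-down recursion with an entry-level invariant and handle the root by prepending a descent from $r(N)$. These are dual presentations of the same construction. One small wording slip: your invariant says a node ``enters at the top of the level-$d(u)$ stratum'', but in fact a child first appears at a donor or receiver still inside level $d(u){-}1$ and only then descends into level $d(u)$; this does not affect correctness since all level-$d$ nodes sit strictly above all level-$(d{+}1)$ nodes on every branch, so the required donor at level $d(u)$ is always reachable. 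Likewise the ``spare'' level $h(D)+1$ is never actually consumed in your scheme (leaves at maximal depth $h(D)$ use level $h(D)$ for their $\TL$), but having it does no harm.
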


The detailed proof can be found in the Appendix.  The idea is that each $v \in I(D)$ at depth 
$d(v)$ has the 
secondary edge $(\sdon{i}{j}{d(v)}, \srec{j}{i}{d(v)})$ 
\mj{at its disposal.}
It can be shown that for any $v \in I(D)$ and any distinct $s_i, s_j \in L(N)$, 
$D_v$ can be reconciled with $N$ such that $\alpha(v) = (\sdon{i}{j}{d(v)})$. 
The idea is illustrated in Figure~\ref{fig:dumb-reconciliation}.  The highest node of $D$ is mapped to a highest donor node of $N$, and the descendants transfer back and forth, each time being mapped to a deeper donor node of $N$.

\begin{figure}
    \centering
    \includegraphics[width=.8\textwidth]{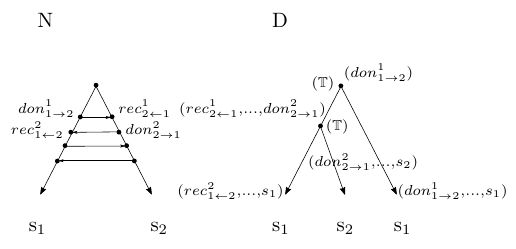}
    \caption{On the left, a species tree with two leaves, with \mj{the horizontal} arcs inserted by the algorithm 
    On the right, a DS-tree in which every internal node is labeled $\S$ initially (not shown) which each become a transfer node which we can then label $\TS$ (the leaves of $D$ depict the species of the gene, we omit giving each gene a name).}
    \label{fig:dumb-reconciliation}
\end{figure}


\begin{theorem}\label{thm:sconsistent-iff-dstree}
A relation graph $R$ is \ml{$S$-base-consistent} if and only if
there exists a $DS$-tree $(D, l)$ that displays $R$.

Therefore, deciding if a relation graph $R$ is \ml{$S$-base-consistent} can be done in polynomial time.
\end{theorem}

Thus, unlike $N$-consistency, 
deciding \ml{$S$-base-consistency} of $R$ can be done quickly by verifying if $R$ admits a $DS$-tree.
However, the explanation of $R$ resulting from the above algorithm 
will produce scenarios with many transfers, all of which are located between a leaf and its parent.  Thus it makes sense to ask
if there is a scenario with at most $k$ transfers.
This problem is closely related to reconciling a gene tree with a species tree while minimizing the number of transfers.  In~\cite{THL2011}, this problem is shown to be NP-hard.

In fact, we present a reduction for minimizing transfers that is very similar in spirit to the one given in~\cite{THL2011}.
There are, however, many differences between their problem and ours that prevent us from using the previous reduction as a black box for our purposes.
First, our definition of reconciliation is different, and in particular, 
in~\cite{THL2011}, transfer-loss events are not allowed.
Also, in the $DS$-tree formulation derived from Lemma~\ref{lem:equiv-ds-tree}, we are given which 
nodes of $D$ must be speciations, and which must be duplications.
Finally, the authors require that the output network contains no directed cycle, 
whereas we require time-consistency, which is more restrictive.
\man{We invite the interested reader to consult the last \cs{section of the Appendix for} details.}




\begin{theorem}\label{thm:hard-unknown-highways}
The problem of deciding if a relation graph $R$ is \ml{$S$-base-consistent} using $k$ transfers 
is NP-hard, even if the least-resolved $DS$-tree $(D, l)$ for $R$ is binary.
\end{theorem}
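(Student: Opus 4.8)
The plan is to reduce from \satpb, a restricted variant of satisfiability known to be NP-hard in which both the clause width and the number of occurrences of each variable are bounded by small constants; the occurrence bound is what will let us keep every gadget of bounded local size and the resulting $DS$-tree binary. By Lemma~\ref{lem:equiv-ds-tree-unknown}, and since a binary least-resolved $DS$-tree $D$ is its own binary refinement, it suffices to construct, from a formula $\varphi$, a binary $DS$-tree $D$ (with $\S/\D$ labels alternating along every root-to-leaf path, so that $D$ is least-resolved), a species tree $S$, a gene-to-species map $\sigma$, and an integer $k$, such that $D$ is $S$-reconcilable using at most $k$ transfers if and only if $\varphi$ is satisfiable. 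We then set $R = R(D)$ and invoke Lemma~\ref{lem:equiv-ds-tree-unknown} to transfer the hardness to the relation-graph formulation.

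The engine of the reduction is the fact, implicit in the discussion preceding Lemma~\ref{lem:all-ds-trees-reconcilable}, that an internal node $u$ of $D$ must be realized by a transfer (costing one $\T$ or $\TL$ event) precisely when the placement of its two subtrees conflicts with the topology of $S$, i.e.\ when the species sets $\sigma(L(D_{u_l}))$ and $\sigma(L(D_{u_r}))$ cannot be separated in the vertical base-tree history at the node to which $u$ is mapped. I would design a \emph{variable gadget} for each variable $x$ as a subtree of $D$, with its leaves mapped into a dedicated block of species of $S$, so that it admits exactly two minimum-cost reconciliations differing only in which of two candidate secondary arcs is used; these two optima encode $x = \mathtt{true}$ and $x = \mathtt{false}$ and cost the same number of transfers, so the truth value may be chosen freely. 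For each clause I would design a \emph{clause gadget} carrying an unavoidable conflict that can be resolved \emph{without} an additional transfer only when it can share a secondary arc made available by one of the (constantly many) variable gadgets of its literals in the satisfying state. The budget $k$ is then set to the total cost of all gadgets in their cheapest states under the assumption that every clause is satisfied; an unsatisfied clause will be forced to spend at least one extra transfer, pushing the total above $k$.

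For the forward direction, a satisfying assignment dictates which of its two optimal reconciliations each variable gadget uses, and for each clause we select a satisfied literal whose shared arc resolves the clause conflict for free; gluing these local reconciliations gives a global one with exactly $k$ transfers, and I would check that the union of the chosen secondary arcs can be scheduled time-consistently (assigning a common time stamp to the two endpoints of each chosen arc while respecting the base-tree order, exactly as in the definition of time-consistency preceding this section). For the converse I would argue that any reconciliation of $D$ with at most $k$ transfers spends at least the gadget base cost and hence has no slack, forcing every variable gadget into one of its two canonical states and every clause conflict to be resolved by a shared variable arc; this reads off a satisfying assignment. The label alternation, and hence binarity and least-resolvedness of $D$, is preserved by construction.

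The main obstacle will be the converse direction together with time-consistency: I must rule out ``cheating'' reconciliations that resolve a clause conflict with a fresh transfer placed somewhere unintended, or that re-route a variable gadget through a third arc, so that the only cheap resolutions are exactly the two intended per variable and the one shared resolution per clause. Here time-consistency is simultaneously the tool and the difficulty, since the requirement that each secondary arc join two co-existing branches both restricts the arcs a gadget may use and is what couples a clause to the variable gadgets of its literals; one must verify that the intended arcs are jointly schedulable while the unintended shortcuts are not. Establishing this rigidity, namely that the budget $k$ can be met only through the canonical gadget states, is where the bulk of the technical work lies, and the bounded-occurrence hypothesis of \satpb is precisely what keeps the number of interacting arcs per gadget small enough to make this analysis tractable.
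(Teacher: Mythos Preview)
Your proposal has a conceptual gap concerning the cost model. Transfers are counted as \emph{events in the reconciliation}, not as secondary arcs of the network: every time a gene lineage traverses a secondary arc, one $\T$ or $\TL$ is charged, regardless of whether another lineage uses that same arc. A clause gadget therefore cannot ``share a secondary arc made available by a variable gadget'' at zero cost; each gadget pays for its own traversal. With the mechanism you describe, a clause incurs the same number of transfer events whether or not the arc it uses was already ``set up'' by a variable, so the budget does not distinguish satisfied from unsatisfied clauses.

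The only thing that \emph{is} genuinely shared across gadgets is the network $N$ itself, and the sole global constraint on $N$ beyond having base tree $S$ is time-consistency: the secondary arcs, together with the principal arcs, must admit a common temporal schedule. This is an acyclicity condition on induced precedence relations, which is why the paper reduces from \textsc{Feedback Arc Set} rather than from a satisfiability variant. In the paper's construction, each arc $a_\ell=(v_i,v_j)$ of the FAS instance $H$ corresponds to a subtree $D_{i,j}$ of the $DS$-tree whose leaves force a transfer from the species block $S_{v_i}$ into the block $S_{v_j}$. That transfer can be realized with a single event if its head lands above $r(S_{v_j})$---which imposes the temporal relation ``bottom of $S_{v_i}$ precedes top of $S_{v_j}$''---or with two events via a low-level escape route otherwise. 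Taking the cheap option for a set of arcs is jointly time-consistent precisely when those arcs induce no directed cycle in $H$, so meeting the budget $K=2m+k$ is equivalent to finding a feedback arc set of size $k$. The rigidity you rightly flag as the crux is obtained by making each $S_{v_i}$ a caterpillar of length $2K$ and mirroring this in $D_{i,j}$, so that any reconciliation placing $D_{i,j}$ outside $S_{v_i}$ already costs more than $K$ transfers.

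For a SAT-style reduction to succeed here, clause satisfaction would itself have to be encoded as a temporal-ordering constraint among species blocks, and your sketch does not supply such an encoding; the bounded-occurrence hypothesis only limits local fan-out, it does not create the global coupling you need. As written, the proposal is a plan without the load-bearing mechanism and is not yet a proof.
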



\vspace{-2mm}

\section{\mj{Discussion}}

\vspace{-2mm}

In this work, we have shown that consistency of relations in the presence of transfers is
computationally hard to deal with, making its application difficult in practice.  One possible avenue would be to attempt to apply our FPT algorithm to real datasets.
A similar algorithm was reported in~\cite{kordi2017exact} to be able to handle nodes with up to
8 children, so a next step would be to check the size of non-binary nodes of DS-trees. 
It would also be interesting to study the problem of error correction of relations
in the presence of transfers - although this is almost certainly NP-hard, 
approximation or FPT algorithms may be applicable.

\subsection*{Acknowledgment}

Version 6 of this preprint has been peer-reviewed and recommended by Peer Community In Mathematical and Computational Biology (\url{https://doi.org/10.24072/pci.mcb.100009}).

\subsection*{Data, script and code availability}

There is no data, script or code associated with the work presented in this paper.

\subsection*{Conflict of interest disclosure}

The authors of this preprint declare that they have no financial conflict of interest with the content of this article.
Celine Scornavacca is one of the PCI Math Comp Biol recommenders.

\vspace{-4mm}

\printbibliography[notcategory=ignore]

\section*{Appendix}
Here we include the details of the proofs that were left out of the main text.

\vspace{4mm}


\noindent
\textbf{Lemma \ref{lem:equiv-ds-tree}.} 
\emph{Let $N$ be an LGT network and $R = (\genes, E)$ a relation graph. Then $R$ is $N$-consistent  
(using $k$ transfers)
if and only if 
there exists a DS-tree $(D, l)$ for $\genes$ such that $R(D, l) = R$ and such that $(D, l)$ is $N$-reconcilable (using $k$ transfers).}

\begin{proof}
($\Rightarrow$) Let $(G, \alpha)$ be a gene tree reconciled with $N$ such that $(G, \alpha)$ displays $R$ using $k$ transfers, and let $e^*$ be a labeling such that $e^*(u,i) \in \{\TS,\TD\}$ if $e(u,i) = \T$, $e^*(u,i) = e(u,i)$ if $e(u,i) \neq \T$, and if $xy \in E$ then $e^*(\lca_G(x,y), \last) \in \{\S,\TS\}$, 
  and otherwise $e^*(\lca_G(x,y), \last) \in \{\D,\TD\}$.
  
  Now define a binary DS-tree $(D,l)$ as follows. 
  Let $D = G$, and let $l(u) = \S$ if \linebreak $e^*(\alpha_\last(u)) \in \{\S,\TS\}$, and $l(u) = \D$ otherwise (in which case $e^*(\alpha_\last(u)) \in \{\D,\TD\}$).
  Observe that by definition of $e^*$, 
  if $l(\lca_{D}(x,y)) = \S$ then $xy \in E$,
  and if  $l(\lca_{D}(x,y)) = \D$ then $xy \notin E$.
  Thus we have that $R = R(D, l)$.
  Also, note that $(D, l)$ is $N$-reconcilable using $k$ transfers, 
  since $\alpha$ satisfies the conditions of 
  Definition~\ref{def:nreconciable}.

  ($\Leftarrow$): let $(D, l)$ be a $DS$-tree such that $R(D, l) = R$.  
  Note that $D$ is not necessarily binary.
  Let $(D', l')$ be a binary refinement 
  of $(D, l)$ such that $(D', l')$ is $N$-reconcilable (such a refinement is assumed to exist by the lemma statement and by the definition of $N$-reconcilable for non-binary gene trees).
  Since $(D', l')$ is $N$-reconcilable, there exists $\alpha'$ such that $(D', \alpha')$ is a reconciled gene tree with respect to $N$ such that 
  for every $u \in I(D')$, $l'(u) = \S$ implies $e(\alpha_{\last}(u)) \in \{\S, \T\}$ 
  and $l'(u) = \D$ implies $e(\alpha_{\last}(u)) \in \{\D, \T\}$.
  Define $e^*$ as follows:  
  if $e(u, i) \neq \T$, then $e^*(u, i) = e(u, i)$;
  otherwise if $e(u, i) = \T$, if $l'(u) = \S$ then $e^*(u, i) = \TS$ and if $l'(u) = \D$ then $e^*(u, i) = \TD$.
  Note that no additional transfer is created in this manner, and hence $e^*$ still uses $k$ transfers.
  Also, for any pair of distinct genes $x, y \in \genes$ with $u = \lca_{D'}(x, y)$,
   $l'(u) = \S$ implies $e^*(\alpha_{\last}(u)) \in \{\S, \TS\}$
   and $l'(u) = \D$ implies $e^*(\alpha_{\last}(u)) \in \{\D, \TD\}$.
  It follows that $(D', \alpha')$ display $R$.
\end{proof}


\noindent
\textbf{Lemma \ref{lem:MWACTconstruction}.} 
\emph{Let $H = (V = V_1 \cup V_2 \cup \dots \cup V_k, E)$ be an instance of {\sc $k$-Multicolored Clique}.
Then in polynomial time, we can construct an instance $(T,X,w)$ of \textsc{ACT} such that
$(T,X,w)$ has an incomparable assignment of weight $< \infty$ if and only if 
$H$ has a $k$-multicolored clique.
Furthermore, if an incomparable assignment of weight $w< \infty$ exists,
then there exists an incomparable assignment with weight $\le k' = k^2+2k$,
and $(T,X,w)$ satisfies the following properties:
\begin{itemize}
 \item $w(x,v) \in \{0,1,\infty\}$ for all $x \in X, v \in V(T)$;
 \item $w(x,v)= 0$ for exactly one $v$ for each $x \in X$;
 \item If $w(x,v) = 0$ then $w(y,v) = \infty$ for all $y \neq x$;
 \item for any $x \in X$, $u,v \in V(T)$ such that $w(x,u),w(x,v) < \infty$, $u$ and $v$ are incomparable.
\end{itemize}}
\begin{proof}
 
{\bf Construction of \textsc{ACT} instance:}

Let $H = V = (V_1 \cup V_2 \cup \dots \cup V_k, E)$ be an instance of {\sc $k$-Multicolored Clique}.
We now construct a tree $T$ together with a set $X$ and cost function $w:X \times V(T) \rightarrow \mathbb{N}_0 \cup  \{\infty\}$.
For each element $x \in X$, there will be a single ``in''-element $x\_in$ of $V(T)$, for which $w(x,x\_in) = 0$.
There will also be some number of ``out''-elements $v$ for which $w(x,v) = 1$.

We begin by describing $T$.
$T$ is made up of a series of subtrees, each of which will act as a gadget in our reduction from {\sc $k$-Multicolored Clique}.
Every subtree consists of a root with several leaves as children.

The subtrees of $T$ are as follows:

\begin{itemize}
 \item A tree {\bf Start}, with root $s\_in$ and children $class\_i\_in$ for each $i \in [k]$;
 \item For each $i \in [k]$, $v \in V_i$, a tree {\bf Choose\_$v$}, with root $v\_in$, and children $class\_i\_out\_v$, together with $u\_to\_i\_out\_v$ for each $u \in V \setminus V_i$ such that $uv \in E$;
  \item For each $i \in [k]$, $v \in V_i$, a tree {\bf Cover\_$v$}, with root $v\_out$, and children  $count\_v\_in$ , together with $v\_to\_j\_in$ for each $j \neq i \in [k]$.
  \item For each $i \in [k]$, 
  a singleton tree consisting of the node $count\_i\_out$.
\end{itemize}

See Figure~\ref{fig:clique_to_mwact}.
Finally we add a root node whose children are the roots of all the subtrees given above.
This concludes our construction of $T$.

\begin{figure*}[!h]
\begin{center}
\includegraphics[width= 1.00 \textwidth]{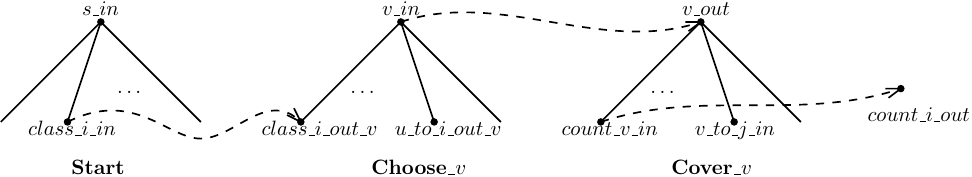}
\caption{Figures used in the reduction from  {\sc $k$-Multicolored Clique} to \textsc{ACT}. Dashed lines represent some of the relations between nodes: If an assignment $f$ does not assign $f(class\_i) = class\_i\_in$, then it must assign $f(class\_i) = class\_i\_out\_v$ (for some $v \in V_i$). Similarly if $f$ does not assign $f(v) = v\_in$, 
then it must assign $f(v) = v\_out$. If $f$ does not assign $f(count\_v) = count\_v\_in$, then it must assign $f(count\_v) = count\_i\_out$.
Note also that if $f$ does not assign $f(v\_to\_j) = v\_to\_j\_in$, then it must assign $f(v\_to\_j) = v\_to\_j\_out\_u$ for some $u \in V_j$ adjacent to $v$, though that relation is not depicted here. 
} 
\label{fig:clique_to_mwact}
\end{center}
\end{figure*}

The set $X$ contains all vertices from $V$. In addition it contains a `start' element $s$, an element $class\_i$ for each $i \in [k]$, an element  $count\_v$ for each $v \in V$, and an element $v\_to\_j$ for each $v \in V_i$ and $j \neq i \in [k]$.

The cost function $w:X \times V(T) \rightarrow \mathbb{N}_0 \cup  \{\infty\}$ is defined as follows:
For each $i \in [k], v\in V_i$ and $j \neq i \in [k]$, set $w(s,s\_in)= w(class\_i, class\_i\_in)= w(v,v\_in)=w(count\_v, count\_v\_in)=w(v\_to\_j, v\_to\_j\_in)=0$.
For each $i \in [k]$ and $v \in V_i$, set $w(class\_i, class\_i\_out\_v) = 1$, set $w(v,v\_out) = 1$, and set $w(count\_v, count\_i\_out) = 1$.
(Note that there are therefore multiple elements $x \in X$ for which  $w(x, count\_i\_out) = 1$.)
Finally, for each $i \in [k]$ and $v \in V_i$, and each edge $uv \in E$ with $u \in V_j, j\neq i\in[k]$, set $w(v\_to\_j, v\_to\_j\_out\_u)=1$.
For all other $x \in X$ and $v \in V(T)$, set $w(x,v) = \infty$.

 This concludes our construction of our \textsc{ACT} instance $(X,T,w)$.
  The construction can be done in polynomial time.
 We observe that by construction, $w(x,v) \in \{0,1,\infty\}$ for all $x \in X, v \in V(T)$, 
  $w(x,v)= 0$ for exactly one $v$ for each $x \in X$,
 and if $w(x,v) = 0$ then $w(y,v) = \infty$ for all $y \neq x$.
  To see that $u$ and $v$ are incomparable for  $x \in X$, $u,v \in V(T)$ such that $w(x,u),w(x,v) < \infty$, observe that each subtree in the construction
  contains at most one node $z$ with $w(x,z) < \infty$ for each $x \in X$.

  It remains to show that $(T,X,w)$ has an incomparable assignment of weight $< \infty$ if and only if 
$H$ has a $k$-multicolored clique and that if an incomparable assignment of weight $w< \infty$ exists,
then there exists an incomparable assignment with weight $\le k'$.
To do this, we will first show that the existence of a $k$-multicolored clique implies the existence of  an incomparable assignment with weight $\le k'$,
and then show that the existence of an incomparable assignment of weight $w< \infty$ implies the existence of a $k$-multicolored clique.

{\bf $k$-multicolored clique implies assignment of weight $\le k'$:}

First suppose that a $k$-multicolored clique $C$ exists, and let $v_i$ denote the single vertex in $C \cap V_i$, for each $i \in [k]$.
Let $f:X \rightarrow V(T)$ be defined as follows:
Set $f(s) = s\_in$.
For each $i \in [k]$, set $f(class\_i) = class\_i\_out\_{v_i}$.
For each $i \in [k]$, set $f(v_i) = {v_i}\_out$, and for all other $v \in V$ set $f(v)=v\_in$.
For each $i \in [k]$, set $f(count\_{v_i}) = count\_i\_out$, and for all other $v \in V$ set $f(v)=count\_v\_in$.
For each $i \in [k]$, $j \neq i \in [k]$, set $f({v_i}\_to\_j) = {v_i}\_to\_j\_out\_{v_j}$
(note that ${v_i}\_to\_j\_out\_{v_j}$ exists because $v_j \in V_j$ and $v_i,v_j$ are adjacent). 
For all other $v \in V_i$, set $f(v\_to\_j) = v\_to\_j\_in$.

Observe that $\sum_{x\in X}w(x,f(x)) = k+ k + k + k(k - 1) = k^2+2k = k'$.
It remains to show that $f(x)$ and $f(y)$ are incomparable for each $x \neq y \in X$.
As each of the subtrees described above are incomparable, it is enough to show that for each subtree, there are no comparable $y,z$ with $y,z$ assigned to different elements of $X$.

In {\bf Start}, the root  $s\_in$ is assigned  but none of the children $class\_i\_in$ are assigned, so we have no comparable assigned nodes.

In {\bf Choose\_$v$}, if $v = v_i$ for some $i \in [k]$, then the root ${v_i}\_in$ is not assigned, and as all other nodes are children of  ${v_i}\_in$, there are no comparable assigned nodes.
For all other $v$ in class $V_i$, the root ${v_i}\_in$ is assigned. However, the child $class\_i\_out\_v$ is not assigned (as $class\_i$ is assigned to $class\_i\_out\_{v_i}$), and  the other children $u\_to\_i\_out\_v$ are not assigned ($u\_to\_i\_out\_v$ is only assigned if $v = v_i, u = v_j$ for some $i \neq j \in [k]$).

In {\bf Cover\_$v$}, if $v = v_i$ for some $i \in [k]$, then the root ${v_i}\_out$ is assigned, but none of its children ${v_i}\_to\_j\_in$ or $count\_{v_i}\_in$ are assigned, as ${v_i}\_to\_j$ is assigned to ${v_i}\_to\_j\_out\_{v_j}$ and $count\_{v_i}$ is assigned to $count\_i\_out$.
For other $v \in V$, the root $v\_out$ is not assigned, and as all other nodes are children of  $v\_out$, there are no comparable assigned nodes.

The nodes $count\_i\_out$ are the only nodes in $T$ that may be assigned to more than one element of $X$. However, by definition of $f$ we have that for each $i \in [k]$, $count\_{v_i}$ is the only element assigned to $count\_i\_out$. 

As $\sum_{x\in X}w(x,f(x)) \le k'$ and $f(x),f(y)$ are incomparable for all $x\neq y \in X$, we have that  $(X,T,w,k')$ is a {\sc Yes}-instance, as required.

{\bf Assignment of finite weight implies $k$-multicolored clique:}

Suppose $f:X \rightarrow V(T)$ is an incomparable assignment with $\sum_{x \in X}w(x,f(x)) < \infty$.

Note that $f(s) = s\_in$, as there is no other node $z$ for which $w(s,z)< \infty$.
It follows that $f(class\_i) \neq class\_i\_in$ for each $i \in [k]$.
Therefore $f(class\_i) = class\_i\_out\_v$ for some $v \in V_i$.
Denote this $v$ by $v_i$.
As $class\_i\_out\_{v_i}$ is a child of ${v_i}\_in$ in  {\bf Choose\_$v_i$}, we must have that $f(v_i) \neq {v_i}\_in$, and so instead  $f(v_i) = {v_i}\_out$.
As ${v_i}\_out$ is the root of  {\bf Cover\_$v_i$}, it follows that for each $j \neq i \in [k]$, we cannot have $f({v_i}\_to\_j) = {v_i}\_to\_j\_in$.
Therefore $f({v_i}\_to\_j) = {v_i}\_to\_j\_out\_u$ for some $u \in V_j$ adjacent to $v_i$.
Denote this $u$ by $u_{ij}$.

It remains to show that $u_{ij} = v_j$ for each $i\neq j \in [k]$, as this implies that  $v_1, \dots, v_k$ form a clique.
As $f({v_i}\_to\_j) = {v_i}\_to\_j\_out\_{u_{ij}}$ is a child  of ${u_{ij}}\_in$ in {\bf Choose\_$u_{ij}$}, we must have that $f(u_{ij}) \neq {u_{ij}}\_in$, and so instead  $f(u_{ij}) = {u_{ij}}\_out$.
As $count\_u_{ij}\_in$ is a child of ${u_{ij}}\_out$ in {\bf Cover\_$u_{ij}$}, we must have that $f(count\_u_{ij}) \neq  count\_u_{ij}\_in$ and so instead $f(count\_u_{ij}) =  count\_j\_out$ (recall that $u_{ij} \in V_j$).
By a similar argument, since $f(v_j) = {v_j}\_out$ we also have $f(count\_v_j) =  count\_j\_out$.
But then $f$ is not an incomparable assignment unless $u_{ij} = v_j$ (since $f(count\_u_{ij})$ and $f(count\_v_j)$ are the same node, and therefore comparable).
Therefore we must have that $u_{ij} = v_j$ for all $i\neq j \in [k]$, as required.
\end{proof}

\noindent
\textbf{Lemma \ref{lemma:NCreduction}.} 
\emph{ Let $(T,X,w)$ be an instance of \textsc{ACT}, such that $w(x,v) \in \{0,1,\infty\}$ for all $x \in X, v \in V(T)$, $w(x,v)= 0$ for exactly one $v$ for each $x \in X$,
\mj{if $w(x,v) = 0$ then $w(y,v) = \infty$ for all $y \neq x$,}
 and for any $x \in X$, $u,v \in V(T)$ such that $w(x,u),w(x,v) < \infty$, $u$ and $v$ are incomparable.
 \\
 Then in polynomial time, we can construct a least-resolved DS-tree $(D, l)$ and  \mjn{time-consistent} LGT network $N$ such that for any integer $k$,
 $(T,X,w)$ has an incomparable assignment of cost at most $k$ if and only if there exists a binary refinement $(D', l')$ of $(D, l)$ such that $(D', l')$ is $N$-reconcilable using at most $2k$ transfers.}

\begin{proof}
Let $(T,X,w)$ be an instance of $ACT$ satisfying the specified properties.
We begin by adjusting $T$ to ensure that it is binary.
If an internal node $u$ has a single child, we add an additional child of $u$ as a leaf of the tree.
If $u$ has more than two children, we refine $u$ into a binary tree with the same leaf set (treating $u$ as the root of this binary tree).
For any new node $v$ introduced in this way, we set $w(x,v)=\infty$ for all $x \in X$.
Observe that for the resulting tree $T'$, two nodes $u,v \in V(T)$ are incomparable in $T$ if and only if they are incomparable in $T'$.
Thus, changing $T$ in this way gives us an equivalent instance.

So we may now assume that $T$ is binary.
We next describe how to construct a least-resolved DS-tree $(D, l)$ .

Let $\Gamma$ be a set of genes as follows.
For each $x\in X$, $\Gamma$ contains two new genes $x\_left$ and $x\_right$.
Let $\Sigma$ contain species $spec\_x\_left$ and $spec\_x\_right$ for each $x\in X$, with $\sigma(x\_left) = spec\_x\_left$, $\sigma(x\_right)=spec\_x\_right$.

Let the DS-tree $(D, l)$ contain a speciation node $r$ as the root, and let $\{gene\_x: X\}$ be the set of children of $r$.
For each $x \in X$, let $gene\_x$ be a duplication node with children $x\_left$ and $x\_right$.
Note that $(D, l)$ is a least-resolved DS-tree.

We next describe how to construct the LGT network $N$, beginning with the distinguished base tree $T_0(N)$.
Initially, let $T_0(N)=T$, the input tree of our $ACT$ instance (in its binary version).
To avoid confusion with the MWACT instance later, we rename each node $v \in V(T)$  to $spec\_v$.
In addition, for each $x \in X$ 
let $u_x$ be the unique node in $T$ for which $w(x,u_x)=0$, with $spec\_u_x$ the corresponding node in $N$.

Now for each $v \in V(T)$,
we will add $spec\_v\_left$ and $spec\_v\_right$ as descendants \mj{(not necessarily children)} of $spec\_v$, as follows.
If $spec\_v$ is a leaf in $T_0(N)$, then add $spec\_v\_left$ and $spec\_v\_right$  as children of $spec\_v$.
Otherwise, add $spec\_v\_left$ and $spec\_v\_right$ as descendants of different children of $spec\_v$.
(This can be be done by subdividing any arc incident to leaf descended from a given child of $spec\_v$, and adding $spec\_v\_left$ or $spec\_v\_right$ as a child of the newly added node).
Observe that after $spec\_v\_left$ and $spec\_v\_right$ have been added, $spec\_v$ is the least common ancestor of $spec\_v\_left$ and $spec\_v\_right$. Furthermore this process does not change the least common ancestor of any pair of leaves.
Therefore, after doing this process for each $v \in V(T)$, we will have that for every $v \in V(T)$, $spec\_v$  is the least common ancestor of $spec\_v\_left$ and $spec\_v\_right$.
When $v = u_x$ for some $x \in X$, we also denote $spec\_v\_left$ and $spec\_v\_right$ by $spec\_x\_left$ and $spec\_x\_right$ respectively.

This completes the construction of the distinguished base tree; now we describe how to add secondary arcs.
For each $x \in X$ and each $v \in V(T)$ with $w(x,v)=1$, we do the following.
Add a new tail node between $spec\_v\_left$ and its parent, add a new head node between $spec\_x\_left$ and its parent, and add an arc from the tail to the head as a secondary arc.
Similarly, add  a new tail node between $spec\_v\_right$ and its parent, and add a new head node between $spec\_x\_right$ and its parent, and add an arc from the tail to the head as a secondary arc.
Observe that after this, $spec\_v$ has paths to $spec\_x\_left$ and $spec\_x\_right$ in $N$, and these paths each use one secondary arc. 
See Figure~\ref{fig:mwact_to_tmnc}.
Furthermore (by virtue of the fact that $w(y, u_x) \neq 1$ for any $x,y \in X$, and therefore a tail node is never added above $spec\_u_x\_left$ or $spec\_u_x\_right$), every path in $N$ has at most one secondary arc.

This completes the construction of the species network $N$, and our problem instance.
\mjn{Observe that $N$ is time-consistent, since each time we insert a new secondary arc, its two endpoints are below every other previously inserted node.}
We now  show that 
 $(T,X,w)$ has an incomparable assignment of cost at most $k$ if and only if $(D, l)$ is $N$-consistent using at most $2k$ transfers.

\begin{figure*}[!h]
\begin{center}
\includegraphics[width= 0.5 \textwidth]{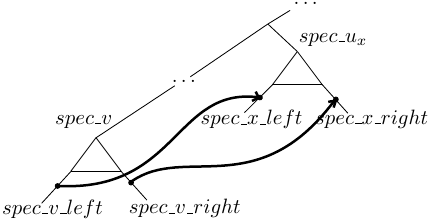}
\caption{Part of the species network $N$ constructed in the reduction from \textsc{ACT} to \textsc{NC}. For each $v \in T$, $spec\_v$  is the least common ancestor in $N$ of $spec\_v\_left$ and $spec\_v\_right$. If $w(x,v) = 1$ and $w(x, u_x)=0$, then secondary arcs (the thick lines) are added from an ancestor of $spec\_v\_left$ to an ancestor of $spec\_x\_left = spec\_u_x\_left$, and from an ancestor of $spec\_v\_right$ to an ancestor of $spec\_x\_right = spec\_u_x\_right$.
Thus, there are paths from $spec\_u_x$ to each of $spec\_x\_left$ and $spec\_x\_right$ using $0$ transfers in total, and paths from $spec\_v$ to each of $spec\_x\_left$ and $spec\_x\_right$ using $2$ transfers in total.
}
\label{fig:mwact_to_tmnc}
\end{center}
\end{figure*}

 First suppose that $(D, l)$ is $N$-consistent using at most $2k$ transfers.
We will first show the following claim. 
In this claim and its proof, we use the terms 'ancestor' and 'descendant' to exclusively refer to ancestors or descendants with respect to the distinguished base tree $T_0(N)$:

\begin{nclaim}\label{claim:0-or-2}
For $x \in X$, suppose $u \in V(N)$ is such that there exist paths from $u$ to $spec\_x\_left$ and from $u$ to $spec\_x\_right$, using at most $k_x$ secondary arcs in total.
If $k_x = 0$ then $u$ is an ancestor of $spec\_u_x$, and otherwise $u$ is an ancestor of 
some $spec\_v$ such that $w(x,v)\leq1$.
Moreover, if $u$ is not an ancestor of $spec\_u_x$ then $k_x = 2$.
\end{nclaim}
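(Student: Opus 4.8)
The plan is to case on how many of the two paths from $u$ use a secondary arc, recalling that $spec\_x\_left = spec\_{u_x}\_left$ and $spec\_x\_right = spec\_{u_x}\_right$. By the construction, every directed path in $N$ contains at most one secondary arc, so if the path to $spec\_x\_left$ uses $a$ secondary arcs and the path to $spec\_x\_right$ uses $b$, then $a,b \in \{0,1\}$ and $k_x = a+b \in \{0,1,2\}$. The first step is to describe the shape of a path that uses a secondary arc. I would observe that the only secondary arcs whose head is a base-tree ancestor of the leaf $spec\_x\_left$ are those inserted directly above $spec\_x\_left$, i.e. the arcs created for pairs $(x,v)$ with $w(x,v)=1$; here I use that $spec\_{u_y}\_left = spec\_{u_x}\_left$ forces $y=x$, since $w(x,u_x)=0$ gives $w(y,u_x)=\infty$ for $y\neq x$, hence $u_y \neq u_x$. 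A path from $u$ to $spec\_x\_left$ using one secondary arc must therefore follow a base-tree path from $u$ to the tail of such an arc, which lies below $spec\_v$ for some $v$ with $w(x,v)=1$; consequently $u$ is a base-tree ancestor of a descendant of $spec\_v$, and symmetrically on the right.

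The easy case is $k_x=0$: both paths lie entirely in the base tree, so $u$ is a common base-tree ancestor of $spec\_x\_left$ and $spec\_x\_right$, and since $spec\_{u_x} = \lca(spec\_x\_left, spec\_x\_right)$, $u$ is an ancestor of $spec\_{u_x}$. This gives the first assertion and the base case of the "moreover" part.

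The main obstacle is that when a path uses a secondary arc, the step above only yields that $u$ is an ancestor of a \emph{descendant} of some $spec\_v$, not of $spec\_v$ itself, and this must be upgraded. The idea is to use the two paths together: $u$ is a common base-tree ancestor of two nodes — for $k_x=2$, of the two tails lying below $spec\_{v_L}$ and below $spec\_{v_R}$ with $w(x,v_L)=w(x,v_R)=1$; for $k_x=1$, of the tail below $spec\_{v_L}$ and of $spec\_x\_right$ (which lies below $spec\_{u_x}$). Hence $u$ is an ancestor of the lowest common ancestor $\ell$ of these two nodes. I then invoke the key hypothesis on the \textsc{ACT} instance, that the nodes carrying finite cost for a fixed $x$ are pairwise incomparable, so $v_L,v_R$ (resp. $v_L,u_x$) are incomparable in $T$ and thus $spec\_{v_L}, spec\_{v_R}$ (resp. $spec\_{v_L}, spec\_{u_x}$) are incomparable in $T_0(N)$. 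A short argument then shows $\ell$ is an ancestor of $spec\_{v_L}$: since $\ell$ and $spec\_{v_L}$ are both ancestors of $spec\_{v_L}\_left$ they are comparable, and were $spec\_{v_L}$ a strict ancestor of $\ell$, then as $\ell$ is also an ancestor of the leaf below $spec\_{v_R}$ (resp. below $spec\_{u_x}$), $spec\_{v_L}$ would become comparable with $spec\_{v_R}$ (resp. $spec\_{u_x}$), a contradiction. Therefore $u$ is an ancestor of $spec\_{v_L}$ with $w(x,v_L)\le 1$, establishing the "otherwise" assertion.

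Finally, for the "moreover" statement I would prove the contrapositive: if $k_x \le 1$ then $u$ is an ancestor of $spec\_{u_x}$. This is immediate for $k_x=0$, and for $k_x=1$ exactly one path, say the one to $spec\_x\_right$, lies in the base tree; reusing the incomparability argument of the previous paragraph with the pair $(v_L,u_x)$ shows that $\ell = \lca(\text{tail below } spec\_{v_L},\, spec\_x\_right)$ is an ancestor of $spec\_{u_x}$, whence so is $u$. Since $k_x \le 2$ always holds, it follows that if $u$ is not an ancestor of $spec\_{u_x}$ then $k_x = 2$, completing the proof.
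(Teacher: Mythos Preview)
Your proof is correct and follows essentially the same approach as the paper: both use that any path to $spec\_x\_left$ forces $u$ to be a base-tree ancestor of some $spec\_v\_left$ with $w(x,v)\le 1$ (and similarly on the right), and then exploit the pairwise incomparability of nodes with finite $w(x,\cdot)$ to upgrade this to ``$u$ is an ancestor of $spec\_v$''. The only cosmetic difference is that the paper argues directly that $u$ cannot be a strict descendant of both $spec\_v$ and $spec\_{v'}$ (else $v,v'$ would be comparable), whereas you route the same incomparability through an $\lca$ computation; your explicit $k_x=1$ versus $k_x=2$ split is likewise just a more detailed version of the paper's unified treatment.
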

\begin{proof}
 First, recall that $spec\_u_x$ is the least common ancestor of $spec\_x\_left$ and \linebreak $spec\_x\_right$ in $T_0(N)$. Since $k_x = 0$ implies that $u$ is an ancestor of both  $spec\_x\_left$ and  $spec\_x\_right$, we have that if $k_x = 0$ then $u$ is an ancestor of $spec\_u_x$.
 
 Since there is a path from $u$ to $spec\_x\_left$, $u$ must be an ancestor of $spec\_v\_left$ for some $v$ such that $w(v,x) \leq 1$ (such nodes are the only ones that have a path to $spec\_x\_left$, either using exclusively principal arcs or a using a  single secondary arc).
 Similarly, $u$ must be an ancestor of $spec\_v'\_right$ for some $v'$ such that $w(v',x) \leq 1$.
 If $v = v'$ then $u$ is an ancestor of both  $spec\_v\_left$ and  $spec\_v\_right$ and is therefore an ancestor of $spec\_v$, as required. 
 So assume that $v \neq v'$. If $u$ is an ancestor of $spec\_v$ or $spec\_v'$ then we are done, and otherwise $u$ must be a descendant of both $spec\_v$ and $spec\_v'$ (since it is an ancestor of descendants of both of these). But this implies that $v$ and $v'$ are comparable, a contradiction as $w(x,v),w(x,v') < \infty$.
 
 Finally, we observe that $k_x < 1$ only if $u$ is an ancestor of at least one of  $spec\_x\_left$ and  $spec\_x\_right$. Therefore if $k_x < 2$ and $u$ is not an ancestor of $spec\_u_x$, it is a descendant of $spec\_u_x$. But this again implies  a contradiction as $u$ is an ancestor of some $spec\_v$ with $w(x,v)=1$, which would then be a descendant of $spec\_u_x$.
\end{proof}

Now consider the binary refinement $(D', l')$ of $(D, l)$ that is $N$-consistent using at most $2k$ transfers.
Thus there exists $\alpha$ such that $(D', \alpha)$ is a reconciled gene tree with respect to $N$.
Note that by construction of $(D, l)$, there is a rooted subtree in $D'$ whose leaves are the set \mj{of duplication nodes} $\{gene\_x: x \in X\}$ and whose internal nodes are all speciation nodes according to $l'$.
For each $x \in X$, there are paths in $D'$ from $gene\_x$ to $x\_left$ and to $x\_right$, and so there are paths in $N$ from $\alpha_\last(gene\_x)$ to $\sigma(x\_left) = spec\_x\_left$ and to $\sigma(x\_right)= spec\_x\_right$.
It follows from Claim~\ref{claim:0-or-2} that $\alpha_\last(gene\_x)$ is an ancestor of $v$ for some $v \in V(T)$ such that $w(x,v) \leq 1$.
By construction of $N$, there are no paths to such a $v$ using a secondary arc, and therefore as all ancestors of $x$ in $D'$ are speciation nodes, $\{\alpha_\last(gene\_x): x \in X\}$ must form the leaves of a subtree in $T$.
It follows that  $\alpha_\last(gene\_x)$ and $\alpha_\last(gene\_y)$ are incomparable for any $x \neq y \in X$.

Now  we can define $f:X \rightarrow V(T)$ as follows.
For each $x \in X$, let $f(x) = u_x$ if $\alpha_\last(gene\_x)$ is an ancestor of $spec\_u_x$,
and otherwise let $f(x)$ be a $v \in V(T)$ such that $w(x,v) \leq 1$ and $\alpha_\last(gene\_x)$ is an ancestor of $spec\_u_x$.
As their ancestors $\alpha_\last(gene\_x)$ and $\alpha_\last(gene\_y)$ are incomparable, it follows that $f(x)$ and $f(y)$ are also incomparable, for any $x \neq y \in X$.
Furthermore, by Claim~\ref{claim:0-or-2} we have that either $\alpha_\last(gene\_x)$ is an ancestor of $spec\_u_x$, or the paths from $\alpha_\last(gene\_x)$ to  $\sigma(x\_left)$ and to $\sigma(x\_right)$ use  $2$ secondary arcs.
Therefore the number of transfer arcs used by $\alpha$ is  $2$ for every $x \in X$ with $w(x,f(x)) = 1$.
Thus $2k \geq 2\sum_{x \in X}w(x,f(x))$, and so $f$ is an incomparable assignment with $\sum_{x \in X}w(x,f(x)) \leq k$, as required.

Now suppose that $(T,X,w)$ has an incomparable assignment  $f:X \rightarrow V(T)$ such that $\sum_{x \in X}w(x,f(x)) \leq k$.
We will show that $(D, l)$ has a binary refinement $(D', l')$ that is $N$-reconcilable using at most $2k$ transfers.
In particular, we will show that there is a reconciliation $\alpha$ such that $\alpha_\last(gene\_x) = spec\_f(x)$ for all $x \in X$.

Observe first that as $f$ is an incomparable assignment, there exists a subtree $T'$ of $T$ whose leaves are $\{f(x): x \in X\}$.
By refining the root $r$ of $D$ into a subtree isomorphic to $T'$, we get a refinement $(D', l')$ such that $D'$ with the leaves $\{x\_left,x\_right: x \in X\}$ removed has a reconciliation with $N$ using $0$ transfers. 
Furthermore this reconciliation $\alpha$ is such that  $\alpha_\last(gene\_x) = spec\_f(x)$ for all $x \in X$.
It remains to show how to extend $\alpha$ to the leaves $\{x\_left,x\_right: x \in X\}$ of $D'$.

For each $x \in X$, let $P_{x\_left}$ be a path in $N$ from $spec\_f(x)$ to $spec\_x\_left$ using a minimum number of secondary arcs.
By construction, this path uses $0$ secondary arcs if $w(x,f(x)) = 0$, and at most $1$ secondary arc if $w(x,f(x)) = 1$.
Similarly, let $P_{x\_right}$ be a path in $N$ from $spec\_f(x)$ to $spec\_x\_right$ using a minimum number of secondary arcs.
Then for each $x \in X$, we let $\alpha(x\_left) = P_{x\_left}$ and $\alpha(gene\_x\_right) = P_{x\_right}$.
It can be seen that $(D', \alpha)$ is a valid reconciliation with respect to $N$ that agrees with $(D', l')$.
Furthermore, $\alpha$ uses $2$ transfers for each $x \in X$ such that $w(x,f(x)) = 1$, and no others.
Therefore $D'$ is reconcilable using at most $\sum_{x \in X}2w(x,f(x)) \leq 2k$ transfers, as required. 
\end{proof}


\noindent
\textbf{Theorem \ref{thm:dp-algo-is-ok}.} 
\emph{Algorithm $minTransferCost$ is correct.  Moreover, it runs in time \linebreak 
$O(2^{k}k!k|V(D)||V(N)|^4)$ \ml{and space $O(|V(N)||V(D)| + |V(N)|^2)$}.}

\begin{proof}
We prove the following statement by induction:
for each $g \in V(D)$ and $s \in V(N)$, the algorithm finds the minimum number of required transfers for a reconciliation between 
the subtree $D_g$ and $N$ such that $g$ is mapped to $s$.
If $g$ is a leaf of $D$, the statement is easy to see, so suppose $g \in I(D)$.
Let $(\hat{D}_g, \alpha)$ be an optimal solution for $D_g, s$ and $N$, i.e. $\hat{D}_g$ is a binary refinement of $D_g$, $\alpha$ is 
a reconciliation between $\hat{D}_g$ and $N$ such that $\alpha_\last(g) = s$, 
and the pair $(\hat{D}_g, \alpha)$ minimizes the number $t$ of required transfers.
If $g$ is binary, then $g_l$ and $g_r$ are children of $g$ in both $D_g$ and $\hat{D}_g$.
Let $s_1 = \alpha_\last(g_l)$ and $s_2 = \alpha_\last(g_r)$.
It is clear that $\alpha$ restricted to $\hat{D}_{g_l}$\footnote{By the restriction $\alpha'$ of $\alpha$ to $\hat{D}_{g_l}$, we mean 
$\alpha'(v) = \alpha(v)$ for all strict descendants $v$ of $g_l$, and
\mj{$\alpha'(g_l) = (\alpha_\last(g_l))$}} yields a reconciliation of 
$\hat{D}_{g_l}$ using $f(g_l, s_1)$ transfers, since if there was a better refinement of $D_{g_l}$ admitting a better reconciliation
with $g_l$ mapped to $s_1$, then we could include this subsolution in $(\hat{D}, \alpha)$ and obtain a lower transfer cost.
The same argument holds for $g_r$ and $f(g_r, s_2)$.
We thus need to show that the algorithm will, at some point, consider the scenario of mapping 
$g_l$ with $s_1$ and $g_r$ with $s_2$.
If $l(g) = \S$, two cases may occur, according to Definition~\ref{def:DTLrecLGTNetwork}: 
(1) $e(\alpha_{\last}(g)) = \S$, in which case $\alpha_1(g_l) = s_l$ and $\alpha_1(g_r) = s_r$ (or vice-versa, w.l.o.g.).
This implies $s_1 \in P(s_l)$ and $s_2 \in P(s_r)$, and this scenario is tested on line~\ref{line:the-s-case} of $reconcileLBR$;
(2) $e(\alpha_{\last}(g)) = \T$, in which case $(s, s') $
is a transfer-arc, say $s' = s_r$ without loss of generality.
Then 
$\alpha_1(g_l) \in \{s, s_l\}$ and $\alpha_1(g_r) = s_r$ (or
vice-versa, w.l.o.g.), which imply $s_1 \in P(s)$ and $s_2 \in P(s_r)$.  This is tested by line~\ref{line:the-t-case} of $reconcileLBR$.
If $l(g) = \D$, we have $\alpha_1(g_l) = \alpha_1(g_r) = s$ and thus it is only required that $\alpha_\last(g_l) \in P(s)$ and $\alpha_\last(g_r) \in P(s)$, 
which is tested on line~\ref{line:the-d-case}.  
Therefore, the desired scenario of mapping $g_l$ to $s_1$ and $g_r$ to $s_2$ is considered.

One can also observe that no invalid mappings of $g_l$ and $g_r$ are considered by the algorithm
(if $l(g) = \S$, we test only the $s_1$ and $s_2$ that allow $e(\alpha_{\last}(g)) \in \{\S, \T\}$, 
and similarly for $l(g) = \D$).
The fact that the computed value $f'(g, s)$ (and hence $f(g, s)$) is minimum follows 
from the induction hypothesis on $g_l$ and $g_r$.

Suppose instead that $g$ has children $g_1, \ldots, g_k$, $k \geq 3$.
For a fixed $(D', l') \in \mathcal{B}(g)$, by the induction hypothesis we have that
$f(g_i, s')$ is correct for every $i \in [k]$ and $s' \in V(N)$. 
Using the argumentation for the binary case, it follows that after calling $reconcileLBR$,
we have correctly computed the minimum number of transfers for 
the tree obtained from $D_g$ after replacing $g$ by its local binary refinement $D'$.
The connected subtree $B_g$ of $\hat{D}$ induced by $g, g_1, \ldots, g_k$ is in $\mathcal{B}(g)$, 
and hence $minTransferCost$ will find $f(g, s)$ correctly when trying $D' = B_g$.
This concludes the proof, since the \ml{time and space} complexity of the algorithm was argued in the main text.
\end{proof}


\noindent
\textbf{Lemma \ref{lem:equiv-ds-tree-unknown}.} 
\emph{Let $R$ be a relation graph and $S$ be a species tree.
Then $R$ is \ml{$S$-base-consistent} (using $k$ transfers) if and only if 
there exists a least-resolved $DS$-tree $(D, l)$ that displays $R$ and a binary refinement $(D', l')$ of $(D, l)$ such that
$(D', l')$ is \ml{$S$-base-reconcilable} (using $k$ transfers).}

\begin{proof}
($\Rightarrow$) Assume that $R$ is \ml{$S$-base-consistent} using $k$ transfers.  Then there exists an LGT network $N$ such that $T_0(N) = S$ and $R$ is $N$-consistent
using $k$ transfers.  Then by Lemma~\ref{lem:equiv-ds-tree}, there is a $DS$-tree $(D, l)$ and a binary refinement 
$(D', l')$ such that $(D', l')$ is $N$-reconcilable using $k$ transfers.  Thus by definition, 
\mj{$(D', l')$}
is \ml{$S$-base-reconcilable} using $k$ transfers.

($\Leftarrow$) Assume that there is a DS-tree $(D, l)$ that displays $R$ and a binary refinement $(D', l')$ of $(D, l)$ such that $(D', l')$ is \ml{$S$-base-reconcilable} using $k$ transfers.  Then there is an LGT network $N$ such that $T_0(N) = S$ and $(D', l')$ is $N$-reconcilable using
$k$ transfers.  Again, by Lemma~\ref{lem:equiv-ds-tree}, $R$ is $N$-consistent using $k$ transfers.
So $R$ is also \ml{$S$-base-consistent} using $k$ transfers.
\end{proof}

\noindent
\textbf{Lemma \ref{lem:all-ds-trees-reconcilable}.} 
\emph{Let $(D, l)$ be a binary $DS$-tree and let $N := N(D)$ be the species network obtained from $S$ after
applying Algorithm~\ref{algo:everything-consistent}. 
Then $(D, l)$ is $N$-reconcilable.}

\begin{proof}
We show that for any $v \in I(D)$, the subtree
$(D_v, l)$ is $N$-reconcilable (where here, we slightly abuse notation by using $l$ to label $D_v$).  Moreover, we show that if $v$ is not a leaf and $s_i, s_j \in L(S)$ are distinct, then there is a reconciliation $(D_v, \alpha)$ with respect to $N$ 
such that $\alpha(v) = (\sdon{i}{j}{d(v)})$ and $e(\alpha_\last(v)) = \T$
(here, and for the rest of the proof, $d(v)$ refers to the depth of $v$ in $D$, and \emph{not}
its depth in $D_v$).
We use induction on the height $h(D_v)$.
First note that if $h(D_v) = 0$, then the statement is trivially true.

As an additional base case, suppose that $h(D_v) = 1$ and fix some $\sdon{i}{j}{d(v)}$, with $i \neq j$.  Then both children $v_l$ and $v_r$ of $v$ are leaves.
Let $s_p = \sigma(v_l)$ and $s_q = \sigma(v_r)$ for some $p, q \in [m]$.
Note that $p = q$ is possible.

We find two paths $P_1$ and $P_2$ that correspond to $\alpha(v_l)$ and $\alpha(v_r)$.
We first claim that in $N$, there exists a directed  path $P_1 = (\sdon{i}{j}{d(v)} = x_1, x_2, \ldots, x_{k_1} = s_p)$ 
such that $x_2 = \srec{j}{i}{d(v)}$ (i.e. $P_1$ starts with the $(\sdon{i}{j}{d(v)}, \srec{j}{i}{d(v)})$ arc).
Observe that  there exists a directed path $P_1'$ from 
$\srec{j}{i}{d(v)}$ to $s_p$.  Indeed, if $s_j = s_p$, then $\srec{j}{i}{d(v)} = \srec{p}{i}{d(v)}$ 
is an ancestor of $s_p$ and $P_1'$ obviously exists.
Otherwise, $P_1'$ starts from $\srec{j}{i}{d(v)}$, goes to its descendant 
$\sdon{j}{p}{d(v) + 1}$, takes the $(\sdon{j}{p}{d(v)+1}, \srec{p}{j}{d(v)+1})$ arc and then goes to $s_p$
(observe that $\sdon{j}{p}{d(v) + 1}$ does exist, since the first loop of the algorithm creating $N$
takes $c$ from $1$ to $h(D) + 1$, and $d(v) \leq h(D)$).  Since $P_1'$ exists and $(\sdon{i}{j}{d(v)}, \srec{j}{i}{d(v)})$ is an arc of $N$, the $P_1$ path exists.

By the same arguments, there is a path $P_2 = (\sdon{i}{j}{d(v)} = y_1, y_2, \ldots, y_{k_2} = s_q)$.   

Now, the existence of $P_1$ and $P_2$ imply that we can make $v$ a transfer node.
More precisely, we let 
\begin{align*}
\alpha(v) &= (\sdon{i}{j}{d(v)}) \\
\alpha(v_l) &= (x_2, x_3, \ldots, x_{k_1}= s_p) \\
\alpha(v_r) &= (\sdon{i}{j}{d(v)}, y_2, y_3, \ldots, y_{k_2} = s_q)
\end{align*}
Set $e(\alpha_\last(v)) = \T$ 
and $e(v_l, k) \in \{\SL, \TL, \emptyset\}$ for $k \in [|\alpha(v_l)| - 1]$ depending on what type of arc
$x_kx_{k + 1}$ is, then do the same for each $e(v_r, k)$ and $k \in [|\alpha(v_r)| - 1]$.  
We have $\alpha_\last(v) = \sdon{i}{j}{d(v)}$, $\alpha_1(v_l) = \srec{i}{j}{d(v)}$ and
$\alpha_1(v_r) = \sdon{i}{j}{d(v)}$, and since $(\sdon{i}{j}{d(v)}, \srec{j}{i}{d(v)}) \in E_s(N)$, condition a.4 of Definition~\ref{def:DTLrecLGTNetwork}
is satisfied, and so $\alpha$ is a reconciliation in which $e(\alpha_\last(v)) = \T$.  
This proves the base case.

Let $v \in V(D)$ such that $h(D_v) > 1$, and assume now by induction that the claim holds 
for any internal node $v'$ such that $D_{v'}$ has height smaller than $h(D_v)$.
Let $v_l, v_r$ be the children of $v$.
At least one of $v_l, v_r$ must be an internal node, say $v_r$ without loss of generality.
Suppose first that $v_l$ is a leaf.
As before, in $N$ there is a path $P_1 = (x_1, x_2, \ldots, x_{k_1})$ 
starting with the $(x_1, x_2) = (\sdon{i}{j}{d(v)}, \srec{j}{i}{d(v)})$ arc
and that goes to $x_{k_1} = \sigma(v_l)$.  As for $v_r$, by induction $D_{v_r}$ is $N$-reconcilable by some reconciliation $(D_{v_r}, \alpha')$ such that
$\alpha'(v_r) = (\sdon{i}{j}{d(v) + 1})$.  
Now, in $N$ there is a path $P_2 = (\sdon{i}{j}{d(v)} = y_1, y_2, \ldots, y_{k_2} = \sdon{i}{j}{d(v) + 1})$ 
from $\sdon{i}{j}{d(v)}$  
to $\sdon{i}{j}{d(v) + 1}$ in which each arc is in $E_p(N)$.
We can obtain the desired reconciliation $\alpha$ from $\alpha'$ in the following manner.
First let $\alpha(v) = (\sdon{i}{j}{d(v)})$ and $\alpha(v_l) = (x_2, x_3, \ldots, x_{k_1})$.
For every strict descendant $v_r'$ of $v_r$, let $\alpha(v_r') = \alpha'(v_r)$,
and finally let $\alpha(v_r) = (\sdon{i}{j}{d(v) + 1} = y_1, y_2, y_3, \ldots, y_{k_2} = \sdon{i}{j}{d(v) + 1})$.
As in the base case, we can set $e(\alpha_\last(v)) = \T$ and satisfy condition a.4 of Definition~\ref{def:DTLrecLGTNetwork}.  We set $e(v_r, k) \in \{\SL, \TL, \emptyset\}$ accordingly for every $k \in [|\alpha(v_r)| - 1]$
(depending on what type of arc $x_kx_{k+1}$ is)
and set $e(\alpha_\last(v_r)) = e(\alpha'_\last(v_r))$.
Finally we set $e(\alpha_k(v'_r))  = e(\alpha'_k(v'_r))$ for every strict descendant $v'_r$ of $v_r$
and every $k \in [|\alpha(v'_r)|]$.  We have that $\alpha(v), \alpha(v_l)$ and $\alpha(v_r)$ satisfy Definition~\ref{def:DTLrecLGTNetwork}, 
$e(\alpha_\last(v_r)) = e(\alpha'_\last(v_r))$
and every other gene-species mapping and event is unchanged from $\alpha'$.
It follows that $\alpha$ is a reconciliation.
Since $e(\alpha_\last(v)) = \T$, the claim is proved for this case.

If instead both $v_l, v_r \in I(D)$, then by induction, $D_{v_l}$
is $N$-reconcilable with 
 reconciliation $\alpha^l$
such that $\alpha^l(v_l) = (\sdon{j}{i}{d(v) + 1})$ (notice the use of $j \rightarrow i$ and not $i \rightarrow j$).
Moreover, $D_{v_r}$ is $N$-reconcilable with reconciliation $\alpha^r$ such that
$\alpha^r(v_r) = (\sdon{i}{j}{d(v) + 1})$.
In $N$, there is a path $P_1 = (x_1, x_2, \ldots, x_{k_1})$ 
starting with the $(x_1, x_2) = (\sdon{i}{j}{d(v)}, \srec{j}{i}{d(v)})$ arc
that goes to $x_{k_1} = \sdon{j}{i}{d(v) + 1}$.  
There is also a path $P_2 = (y_1, y_2, \ldots, y_{k_2})$
from $y_1 = \sdon{i}{j}{d(v)}$ to $y_{k_2} = \sdon{i}{j}{d(v) + 1}$ that uses only
arcs from $E_p(N)$.
Thus as before, we can make $v$ a transfer node.  That is we set $\alpha(v) = (\sdon{i}{j}{d(v)})$
and $e(\alpha_\last(v)) = \T$, $\alpha(v_l) = (x_2, \ldots, x_{k_1} = \sdon{j}{i}{d(v) + 1})$ and 
$\alpha(v_r) = (\sdon{i}{j}{d(v)} = y_1, y_2, \ldots, y_{k_2} = \sdon{i}{j}{d(v) + 1})$.
We set $e(v_l, k), e(v_r, k') \in \{\SL, \TL, \emptyset\}$ accordingly 
for every $k \in [|\alpha(v_l)| - 1], k' \in [|\alpha(v_r)| - 1]$, 
set $e(\alpha_\last(v_l)) = e(\alpha^l_\last(v_l)), e(\alpha_\last(v_r)) = e(\alpha^r_\last(v_r))$, and keep every other gene-species mapping and event from $\alpha^l$ and $\alpha^r$ unchanged.
In this manner $\alpha(v)$ satisfies Definition~\ref{def:DTLrecLGTNetwork},
and $\alpha$ is a reconciliation.
Again since $e(\alpha_\last(v)) = \T$, the claim is proved.
\end{proof}

\noindent
\textbf{Theorem \ref{thm:sconsistent-iff-dstree}.} 
\emph{A relation graph $R$ is \ml{$S$-base-consistent} if and only if
there exists a $DS$-tree $(D, l)$ such that $R(D, l) = R$.}

\begin{proof}
If there is no $DS$-tree $(D, l)$ such that $R(D, l) = R$, then by Lemma~\ref{lem:equiv-ds-tree}
there exists no species network $N$ with which $R$ is consistent, and thus $R$ cannot be
\ml{$S$-base-consistent}.
Conversely, let $(D' l')$ be a $DS$-tree such that $R(D', l') = R$, and let $(D, l)$ be a binary refinement
of $(D', l')$ (recalling that $R(D, l) = R(D', l') = R$).
Then by Lemma~\ref{lem:all-ds-trees-reconcilable}, $(D, l)$ is $N(D)$-reconcilable, where
the network $N(D)$ is the one constructed from $S$ by the algorithm described above.
By Lemma~\ref{lem:equiv-ds-tree}, $R$ is $N(D)$-consistent
and thus $R$ is also \ml{$S$-base-consistent}.
\end{proof}

\subsection*{Proof of Theorem~\ref{thm:hard-unknown-highways}: NP-hardness of minimizing transfers with unknown transfer highways}

The formal problem that we show NP-hard here in the following.

\medskip
\noindent \textsc{Transfer Minimization Species Tree Consistency  (TMSTC):}\\
\noindent {\bf Input}: A relation graph $R$, a species tree $S$, an integer $k$.\\
\noindent {\bf Question}: Is $R$ \ml{$S$-base-consistent} using at most $k$ transfers? \\

We reduce the feedback arc set problem to TMSTC. 

\medskip

\noindent \textbf{Feedback Arc Set (FAS):}\\
\noindent {\bf Input}: A directed graph $H = (V, A)$ and an integer $k$.\\
\noindent {\bf Question}: Does there exist a \emph{feedback arc set} of size at most $k$, 
i.e. a set of arcs $A' \subseteq A$ of size at most $k$ such that 
$H' = (V, A \setminus A')$ contains no directed cycle?\\

Given a FAS instance $H = (V, A)$, we construct a DS-tree $(D, l)$ and a species tree $S$ 
such that $H$ admits a feedback arc set of size at most $k$ if and only if 
$R(D, l)$ is \ml{$S$-base-consistent} using at most $K = 2|A| + k$ transfers.

A \emph{caterpillar} is a rooted binary tree in which every internal node has exactly one child that is a 
leaf, except for one node that has two leaf children.
We denote a caterpillar on leafset $x_1, x_2, \ldots, x_n$ by $(x_1|x_2|\ldots|x_n)$, where 
the $x_i$ nodes are ordered by depth in non-decreasing order (thus $x_1$ is the leaf child of the root).
A \emph{subtree caterpillar} is a rooted binary tree obtained by replacing some leaves 
of a caterpillar by rooted subtrees.
If each $x_i$ is replaced by a subtree $X_i$, we denote this by $(X_1|X_2|\ldots|X_n)$.  
If some $X_i$ is a leaf $x_i$ (i.e. $X_i$ a tree with one vertex $x_i$), we may write
$(X_1|\ldots|X_{i - 1}|x_i|\ldots|X_n)$.

Given the FAS instance $H = (V, A)$, first order $V$ and $A$ arbitrarily, 
and denote $V = (v_1, v_2, \ldots, v_n)$ and $A = (a_1, a_2, \ldots, a_m)$.
The species tree $S$ has a corresponding subtree for each vertex of $V$ and each arc of $A$.  For each vertex $v_i \in V$, 
let $S_{v_i}$ be a caterpillar $(v_{i,1} | v_{i,2} | \ldots | v_{i,2K})$ with $2K$ leaves.
For each $j \in [2K]$, denote $z_{i, j} = p(v_{i,j})$ (noting that $z_{i, 2K - 1} = z_{i, 2K}$).
Then, for each arc $a \in A$, let $S_{a}$ be the binary tree on two leaves $p_{a}, q_{a}$. 
Then $S$ is the subtree-caterpillar 
$(S_{a_1} | S_{a_2} | \ldots | S_{a_m} | S_{v_1} | S_{v_2} | \ldots | S_{v_n} )$.
See Figure~\ref{fig:fas-reduction}.

\begin{figure*}[!b]
\begin{center}
\includegraphics[width= 1.00 \textwidth]{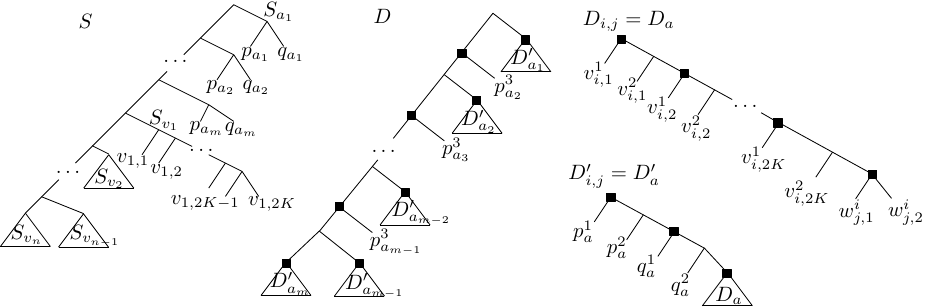}
\caption{The $S$ and $D$ trees constructed for our reduction.  Duplication nodes appear as squares, and the absence of a square indicates speciation.}\label{fig:fas-reduction}
\end{center}
\end{figure*}

The DS-tree $(D, l)$ has one subtree for each arc of $A$.  For each $a = (v_i, v_j) \in A$, 
let $D_{a} = D_{i,j}$ be a caterpillar with $4K + 2$ leaves
such that 
\[
D_{i,j} = (v_{i,1}^1 | v_{i,1}^2 | v_{i,2}^1 | v_{i,2}^2 | \ldots | v_{i,2K}^1 | v_{i,2K}^2 | w_{j,1}^i | w_{j,2}^i)
\](we will interchangeably use the $D_a$ and $D_{i,j}$ notations whenever convenient). 
Here the indices of the leaf labels indicates the species containing them, 
i.e. for each $h \in [2K], \sigma(v_{i, h}^1) = \sigma(v_{i, h}^2) = v_{i,h}$, 
and $\sigma(w_{j,1}^i) = v_{j,1}, \sigma(w_{j,2}^i) = v_{j,2}$.
Thus all the leaves of $L(D_{i,j})$ are from the $S_{v_i}$ subtree, with the exception of 
$w_{j,1}^i$ and $w_{j,2}^i$ at the bottom.
For each $h \in [2K]$, the parent of $v_{i, h}^1$ is labeled by $\D$ whereas the parent of $v_{i, h}^2$
is labeled by $\S$.   
The parent of $w_{j,1}^i$ and $w_{j,2}^i$ is labeled by $\D$.
We define another tree $D'_a = D'_{i,j} = (p_a^1 | p_a^2 | q_a^1 | q_a^2 | D_{i,j})$.
The parents of $p_a^1$ and $q_a^1$ are labeled $\D$,
whereas the parents of $p_a^2$ and $q_a^2$ are labeled $\S$
(here $\sigma(p_a^1) = \sigma(p_a^2) = p_a$ and $\sigma(q_a^1) = \sigma(q_a^2) = q_a$).

Finally, we let 
\[D = (D'_{a_1} | p^3_{a_2} | D'_{a_2} | p^3_{a_3} | D'_{a_3} | \ldots | p^3_{a_{m - 2}} | D'_{a_{m - 2}} | p^3_{a_{m - 1}} | D'_{a_{m - 1}} | D'_{a_m})\]

where each $p_{a_i}^3$ is a new leaf with $\sigma(p_{a_i}^3) = p_{a_i}$.  The purpose of the $p_{a_i}^3$ is to enforce a binary DS-tree.
The root is a speciation, and the main path of $D$ alternates labelings, i.e. for each $1 < i < [m]$, 
the parent of $p^3_{a_i}$ is labeled $\D$ and the parent of $r(D'_{a_i})$ is labeled $\S$.
The parent of $r(D'_{a_m})$ is labeled $\S$.

It is not hard to see that this construction can be carried out in polynomial time.
Note that $D$ is binary and is also a least-resolved $DS$-tree.
Thus by Lemma~\ref{lem:equiv-ds-tree-unknown}, $R(D, l)$ is \ml{$S$-base-consistent} using $K$ transfers if and only if 
$(D, l)$ is \ml{$S$-base-reconcilable} using $K$ transfers.

\begin{lemma}
If $H$ admits a feedback arc set $A' \subset A$ of size $k$, then 
$(D, l)$ is \ml{$S$-base-reconcilable} using at most $K = 2m + k$ transfers.
\end{lemma}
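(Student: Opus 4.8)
The plan is to convert the feedback arc set $A'$ into a global time schedule and use it to lay down transfer highways on $S$ in a time‑consistent way, then exhibit an explicit reconciliation of $D$. By Lemma~\ref{lem:equiv-ds-tree-unknown} it suffices to build a time‑consistent network $N$ with $T_0(N)=S$ together with a witness reconciliation $\alpha$ of $D$ with $N$ (in the sense of Definition~\ref{def:nreconciable}) using at most $K=2m+k$ transfers. Since $H'=(V,A\setminus A')$ is acyclic, I would fix a topological order $\prec$ of its vertices, so that every arc of $A\setminus A'$ is forward and every arc of $A'$ is backward with respect to $\prec$. This order is what dictates, for each arc gadget, the time slot at which its transfer is inserted, and it is exactly what makes the assembled network time‑consistent.

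First I would dispose of the transfer‑free part of $\alpha$. Along the spine of $D$ and inside each $D'_a$, the copies $p_a^1,p_a^2,p_a^3,q_a^1,q_a^2$ map into the subtree $S_a$ using only $\S$, $\D$ and $\SL$ events: the $\S$‑labelled parents are realized at the branch points of the spine of $S$ where $S_a$ splits off, the $\D$‑labelled parents as duplications on the same species node, and the alternating labels along the spine of $D$ are matched to the branch/continue structure of the spine of $S$. Likewise, inside each $D_{i,j}$ the leaves $v_{i,h}^1,v_{i,h}^2$ are walked down the caterpillar $S_{v_i}$: at level $h$ the $\D$‑node sits on $z_{i,h}$ as a duplication (one copy reaching the leaf $v_{i,h}$ through a chain of $\SL$ events), and the following $\S$‑node sits on $z_{i,h}$ as a speciation that splits off $v_{i,h}$ and continues toward $z_{i,h+1}$. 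None of this uses a transfer, and the $2K$ internal levels of $S_{v_i}$ give enough room to place these images at sufficiently early times while the leaves dangle down to the present along loss chains.

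The transfers are confined to carrying the pair $w_{j,1}^i,w_{j,2}^i$ of each gadget $D_{i,j}$ into $S_{v_j}$, and this is where time‑consistency bites. Because these two leaves map to the distinct, incomparable species $v_{j,1},v_{j,2}$, their common $\D$‑parent must be brought into $S_{v_j}$ by transfers landing on subdivision nodes of the long principal arcs above $v_{j,1}$ and $v_{j,2}$; when the timing is compatible with $\prec$ this costs two transfers for the gadget. For a forward arc I would insert the donor nodes in $S_{v_i}$ and the receiver nodes in $S_{v_j}$ at a common time determined by the $\prec$‑ranks of $v_i$ and $v_j$ — precisely the depth‑indexed donor/receiver pattern used by Algorithm~\ref{algo:everything-consistent}. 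A backward arc of $A'$ cannot be realized at such a slot without forcing a directed cycle of time constraints, so I would reroute its transfer through one extra intermediate hop (a $\TL$ followed by a $\T$), spending a third transfer to restore time‑consistency. Summing, the $m-k$ forward arcs contribute $2(m-k)$ and the $k$ feedback arcs contribute $3k$, for a total of $2(m-k)+3k=2m+k=K$ transfers.

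The main obstacle is exactly the verification that the network $N$ assembled from these highways is time‑consistent, i.e. that a single function $t:V(N)\to\mathbb{N}$ exists that is constant on every secondary arc and strictly increasing on every principal arc. Here the order $\prec$ is indispensable: it provides the linear schedule of time slots into which the forward transfers fit without conflict, and the acyclicity of $H'$ guarantees that no cyclic chain of equalities‑and‑inequalities is created; the backward transfers are the only ones that would violate this, which is why each is charged an extra transfer. Once $N$ and $\alpha$ are in place, confirming that $\alpha$ satisfies every case of Definition~\ref{def:DTLrecLGTNetwork} at the $\S$‑ and $\D$‑nodes — so that $\S$‑nodes carry $\S$ or $\T$ events and $\D$‑nodes carry $\D$ or $\T$ events — is then a routine, if lengthy, case analysis, and pinning the exact per‑gadget count against the $2K$ available levels is the last bookkeeping step.
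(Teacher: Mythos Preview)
There is a genuine gap in your accounting. You assert that the portion of each $D'_a$ above $D_{i,j}$ (the part carrying $p_a^1,p_a^2,q_a^1,q_a^2$) can be mapped into $S_a$ using only $\S,\D,\SL$ events, so that ``transfers are confined to carrying the pair $w_{j,1}^i,w_{j,2}^i$''. This is impossible. The node $p(p_a^2)$ is a speciation with one child leaf in species $p_a$ and the other child eventually reaching $q_a$; without a transfer this forces $\alpha_\last(p(q_a^1))$ to lie strictly on the $r_a$--$q_a$ branch. But then $p(q_a^2)$, also a speciation, has one child $q_a^2$ going to the leaf $q_a$ and the other child $r(D_{i,j})$ needing to reach $S_{v_i}$; there is no principal arc leaving the $q_a$ branch, so $p(q_a^2)$ cannot be realized as an $\S$ event. (This is precisely the content of Claim~\ref{claim:gprime-tl} in the converse direction.) Consequently every $D'_a$ costs at least one transfer outside $D_{i,j}$, and under your scheme the total is at least $m+2(m-k)+3k=3m+k>K$.

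The paper's construction budgets the transfers differently: one transfer per arc $a=(v_i,v_j)$ is spent on a secondary arc from the $q_a$ branch into $S_{v_i}$ (realizing $p(q_a^2)$ as a $\T$), and then inside $D_{i,j}$ only \emph{one} further transfer is needed for a forward arc, by sending the whole $w$-cherry from the bottom of $S_{v_i}$ to a receiver node \emph{above} $r(S_{v_j})$, from which both $v_{j,1}$ and $v_{j,2}$ are reachable via principal arcs. For a backward arc this single highway would violate time-consistency, so an escape route via $v_{j,1}$ followed by a short hop to $v_{j,2}$ is used instead, costing two transfers. The count is then $m+(m-k)\cdot 1+k\cdot 2=2m+k$. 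Your idea of using the topological order to schedule the forward highways is right and is exactly what the paper does; what you are missing is (i) the unavoidable transfer to enter $S_{v_i}$, and (ii) the one-transfer handling of the $w$-cherry for forward arcs by landing above $r(S_{v_j})$ rather than separately above $v_{j,1}$ and $v_{j,2}$.
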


\begin{proof}
The intuition behind the proof is as follows.
Each $D_{i,j}$ subtree and its $\D, \S$ labeling could be part of a valid reconciliation with respect to $S$, if it were not for the
$w^i_{j, 1}$ and $w^i_{j, 2}$ leaves at the bottom, which prevent their ancestors to be speciations.  These need to be handled by either making the two edges incident to 
$w^i_{j, 1}$ and $w^i_{j, 2}$ a transfer to $v_{j, 1}$ and $v_{j, 2}$ respectively, or better, by making the edge 
above their common parent a transfer to some common ancestor of $v_{j, 1}$ and $v_{j, 2}$.  The latter option is preferred as it requires one less transfer, but it cannot be taken 
for every $D_{i, j}$ subtree because we will likely create time-inconsistencies. 
As it turns out, given a feedback arc set $A'$ of size $k$, we have a way of taking these `double-transfers' only $k$ times.
As mentioned before, this is similar to the proof in~\cite{THL2011}.  The difficulty here however, is to ensure that time-consistency is preserved and 
that the $\D, \S$ labeling can be preserved.

We first show how to add secondary arcs to $S$ in a time-consistent manner in order to obtain $N$, 
by making the time function $t$ explicit.  We will add more arcs than necessary, but this simplifies the exposition.
Let $s_1,  \ldots, s_{n + m - 1}$ be the vertices on the $r(S) - r(S_{v_n})$ path in $S$ (excluding $r(S_{v_n})$),
ordered by depth in increasing order.  Assign time slot $t(s_{\l}) = \l$ for each $\l \in [n + m - 1]$.
We then describe the transformation from $S$ to $N$ in three steps.

\noindent
\textbf{Step 1: transfer arcs from $q_{a_{\l}}$ to $S_{v_i}$.}
We process each arc $a_{\l} \in A$ for $\l = 1,2, \ldots, m$ in increasing order as such:
first let $(v_i, v_j) = a_{\l}$ (i.e. $v_i, v_j$ are the vertices of the $a_{\l}$ arc in $H$).
Assign time slot $\l + 1$ to the parent of nodes $p_{a_{\l}}$ and $q_{a_{\l}}$.
Then, subdivide $(q_{a_{\l}}, p(q_{a_{\l}}))$, creating a new node that we call $send\_q_{a_{\l}}\_to\_i$.  Next, subdivide $(p(r(S_{v_i})), r(S_{v_i}))$, creating a new node that we call $recv\_i\_from\_q_{a_\l}$.  After that, we add the secondary arc 
$(send\_q_{a_{\l}}\_to\_i, recv\_i\_from\_q_{a_\l})$.  
See Figure~\ref{fig:fas-reconcil}(1) for an illustration.  Assign the time slot $m + n + \l$ to the two newly created nodes.

Note that this process is repeated for each arc $a_{\l}$ in order. Therefore, $p(r(S_{v_i}))$ may change during the process as new secondary arcs are inserted.
In the end, there is exactly one outbound transfer node inserted above each $q_{a_{\l}}$, and $|N^+(v_i)|$ inbound transfer nodes inserted above each $r(S_{v_i})$, where 
$N^+(v_i)$ is the set of out-neighbors of $v_i$ in $H$.
One can check that no time inconsistency is created so far, since every time a node is inserted, it is added 
below every other internal node having a defined time slot so far, and it is assigned a higher time slot (since $m + n + \l$ is always the highest time slot so far, for each $\l \in [m]$).
Also note for later reference that, assuming $n \leq m$, $t(p(r(S_{v_i}))) \leq m + n + \l$ for some $\l \leq m$, and therefore $t(p(r(S_{v_i}))) \leq 3m$ after these operations.

\begin{figure*}[!h]
\begin{center}
\includegraphics[width= 1.00 \textwidth]{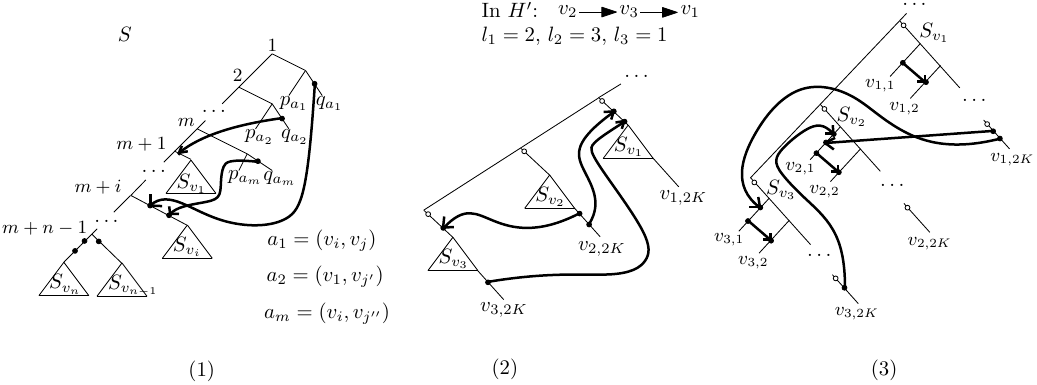}
\caption{An illustration of the modifications from $S$ to $N$. (1) We first add the transfers between the $S_{a_{\l}}$ subtrees to the $S_{v_i}$ subtrees.
For the purpose of the example, we have only illustrated the arcs $a_1 = (v_i, v_j), a_2 = (v_1, v_{j'}), a_m = (v_i, v_{j''})$ (the $j, j', j''$) indices are irrelevant for this step).  
Here the node added above $q_{a_1}$ would be named $send\_q_{a_1}\_to\_i$ and its endpoint is $recv\_i\_from\_q_{a_{1}}$. 
(2) We then add ``forward-transfers'', which are secondary arcs from the bottom of $S_{v_{l_i}}$ to the top of $S_{v_{l_j}}$, where $j > i$.  Here we illustrate this step on a small example
of $H'$, with the topological sorting $(v_2, v_3, v_1)$.  The white nodes indicate that other transfer nodes could be on the subpath due to the previous step.
(3) We finally allow transferring ``backwards'' from $v_{i, 2K}$ to $v_{j, 1}$, $j< i$, then from $v_{j, 1}$ to $v_{j, 2}$.}
\label{fig:fas-reconcil}
\end{center}
\end{figure*}

For what follows, let $H' = (V, A \setminus A')$.  Since $H'$ is a directed acyclic graph, it admits a
topological sort, i.e. an ordering $(v_{l_1}, v_{l_2}, \ldots, v_{l_n})$ of $V$ 
such that if $i < j$, then $(v_{l_j}, v_{l_i})$ is not an arc of $H'$ (in other words, there are no backwards arcs).
\man{We now add two new sets of arcs that are entirely based on the ordering $(v_{l_1}, \ldots, v_{l_n})$.}

\noindent
\textbf{Step 2: transfer arcs from $v_{l_i, 2K}$ to its successor subtrees.}
What we want to achieve in this step is that for each $v_{l_i}$, we can transfer from the parent of $v_{l_i, 2K}$ to any subtree $S_{v_{l_h}}$ such that $h > i$.  
An example is provided in Figure~\ref{fig:fas-reconcil}(2).
Process each vertex $v_{l_i} \in V$ for $i = 1, 2, \ldots, n$ in increasing order as follows.
First we create the transfer nodes above $r(S_{v_{l_i}})$ that are destined to receive from the predecessors of $v_{l_i}$.
For each $j = 1, 2, \ldots, i - 1$ in order, add a node $recv\_l_i\_from\_l_j$ on the edge between $r(S_{v_{l_i}})$
and its parent, and assign the time slot 
\[t(recv\_l_i\_from\_l_j) = (4 + i)Km + j\]
Then, we create the nodes above $v_{l_i, 2K}$ that are destined to send to the successor subtrees of $v_{l_i}$.  For each $j = i + 1, i + 2, \ldots, n$ in increasing order, 
add a node $send\_l_i\_to\_l_j$ on the $(p(v_{l_i,2K}),v_{l_i, 2K})$ arc.  
For each such $j$, assign time slot 
\[t(send\_l_i\_to\_l_j) = (4 + j)Km + i\]

Then, for each $i, j \in [n]$ with $i < j$, add a transfer arc from 
$send\_l_i\_to\_l_j$ to \linebreak $recv\_l_j\_from\_l_i$.  
Note that this transfer arc satisfies our time consistency requirement since 
$t(send\_l_i\_to\_l_j) = (4 + j)Km + i = t(recv\_l_j\_from\_l_i)$.
Also note that for each arc $(v_{l_i}, v_{l_j})$ in $A \setminus A'$, 
there is a corresponding secondary arc from 
$send\_l_i\_to\_l_j$ to $recv\_l_j\_from\_l_i$.

We argue that $S$ is still time-consistent.  We know already that secondary arcs so far have the same timing, 
so we must show that 
(1) no node has a child with a greater time slot, and 
(2) there is a way to assign a time slot to the nodes $z_{i, 1}, \ldots, z_{i, 2K-1}$ within the $S_{v_i}$ trees. 
For (1), all the receiving and sending nodes 
inserted at the last step have 
a time slot greater than $3m$ and are inserted below the nodes that had a time slot assigned
at the previous step (which were assigned a time slot at most $3m$).  
Moreover, the $recv\_l_i\_from\_l_j$ nodes are inserted on the 
$p(r(S_{v_{l_i}}))r(S_{v_{l_i}})$ arc in increasing order of time, 
as well as the $send\_l_i\_to\_l_j$ nodes on the $(p(v_{l_i,2K}), v_{l_i, 2K})$ arc.
Hence no inconsistency is created within the $S_{v_i}$ trees.
For (2), note that for each $i \in [m]$, 
the nodes $z_{i, 1}, \ldots, z_{i,2K-1}$ of $S_{v_i}$ lying on the path between $recv\_l_i\_from\_l_{i - 1}$
(above $r(S_{v_i})$)
and $send\_l_i\_to\_l_{i + 1}$ (at the bottom of $S_{v_i}$) all have an available time slot between 
$(4 + i)Km + i - 1$ and $(4 + i + 1)Km + i$, since there are $2K - 1$
such nodes and there are $Km + 1$ available time slots.  
Therefore, we can assign a time to each $z_{i, h}$ so that time consistency holds.  Note that all internal nodes of $S$ have been assigned a time slot so far.

\noindent
\textbf{Step 3: escape route from \man{$v_{l_i, 2K}$ to $v_{l_j, 1}$, then to $v_{l_j, 2}$.}}
Again, process each vertex $v_{l_i}$ for $i = 1,2, \ldots, n$ in increasing order.
We make, for $j < i$, a ``last-resort escape route'' from $v_{l_i, 2K}$ to $v_{l_j, 1}$, 
followed by a transfer arc going from $v_{l_j, 1}$ to $v_{l_j, 2}$.  Taking these arcs in a reconciliation corresponds to taking ``backwards arcs'',
i.e.  that belong to $A'$.  
For that purpose, we add, on the arc between $v_{l_i, 2K}$ and its parent, $i - 1$ transfer nodes to send backwards.
Then on the arc between $v_{l_i, 1}$ and its parent, we add $n - i$ transfer nodes
to receive from the front.  This step is illustrated on Figure~\ref{fig:fas-reconcil}(3).

More precisely, 
for each $j = 1, 2, \ldots, i - 1$, add a node $backsend\_l_i\_to\_l_j$ on the edge between 
$v_{l_i, 2K}$ and its parent.  Assign a high time slot to this node, 
say for example $t(backsend\_l_i\_to\_l_j) = (Km)^{10} + i + j$.
Then for each $j = i + 1, i + 2, \ldots, n$, add a node $backrecv\_l_i\_from\_l_j$ on the 
edge between $v_{l_i, 1}$ and its parent.  Assign the time slot $t(backrecv\_l_i\_from\_l_j) = (Km)^{10} + i + j$.
Note that time consistency is still preserved by these node insertions.
Then for each $i, j \in [n]$ with $i > j$, add a secondary arc from 
$backsend\_l_i\_to\_l_j$ to $backrecv\_l_j\_from\_l_i$.
Again, these arcs are time-consistent since $t(backsend\_l_i\_to\_l_j) = (Km)^{10} + i + j = t(backrecv\_l_j\_from\_l_i)$.

To finish the network, for each $i \in [n]$, add a secondary arc $(send12\_i, recv12\_i)$ between 
the $(p(v_{i, 1}), v_{i, 1})$ arc and the $(p(v_{i, 2}), v_{i, 2})$ arc.
To preserve time-consistency, assign a large enough time slot, 
say $m^{100}$ to both newly created nodes. 
This finally concludes the construction.  Let us call the resulting network $N$.  

For the remainder, let $u, v \in V(N)$ and 
\mj{suppose}
that there is a path from $u$ to $v$ in $N$ that does not use a secondary arc.  We denote this path by $[u \isep v]$. 
We will also denote by $]u \isep v]$ the path $[u \isep v]$, but excluding $u$ from this path.

\noindent
\textbf{Reconciling $(D, l)$ with $N$.}
We are finally ready to show that $(D, l)$ is $N$-reconcilable using at most $K$
transfers.  We begin by showing how to reconcile $D_{i,j}$ for 
$a = (v_i, v_j) \in A$.  
For reasons that will become apparent later, the edge above $r(D_{i,j})$ will always be contain a transfer.
To be more precise, set $\alpha_1(p(v^1_{i,1})) = recv\_i\_from\_q_a$ with 
$e(p(v^1_{i,1}), 1) = \emptyset$ (setting it up to receive a transfer).  Then set  
$\alpha_\last(p(v^1_{i,1})) = z_{i,1}$ with $e(p(v^1_{i,1}), \last) = \D$.
Since there is a directed path from $recv\_i\_from\_q_a$ to $z_{i, 1}$ that uses 
no secondary arc of $N$, $\alpha(p(v^1_{i,1}))$ can be completed with the appropriate
$\SL$ events.
Set $\alpha(p(v^2_{i,1})) = (z_{i, 1})$ and 
for each $2 \leq h \leq 2K - 1$,
set
 $\alpha(p(v^1_{i, h})) = \alpha(p(v^2_{i, h})) = (z_{i, h})$
 (we will handle the case $h = 2K$ later).  Then 
set $e(\alpha_\last(p(v^1_{i,h}))) = \D$ and 
$e(\alpha_1(p(v^2_{i,h}))) = \S$.  Note that the assigned events are the same as in 
the $DS$ labeling $l$ of $D$, and that so far $\alpha$ satisfies Definition~\ref{def:DTLrecLGTNetwork}.
It is straightforward to set $\alpha(v^1_{i, h})$ and $\alpha(v^2_{i,h})$ appropriately.

\begin{figure*}[!h]
\begin{center}
\includegraphics[width= 1.00 \textwidth]{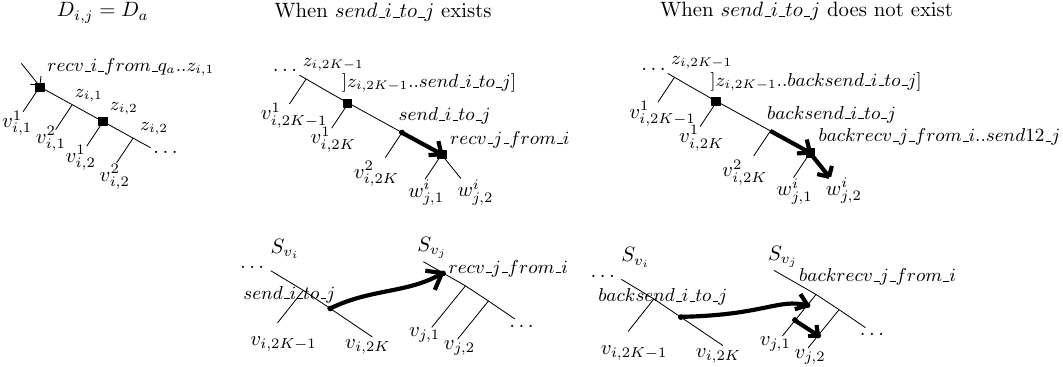}
\caption{Top left: how the $D_{i,j}$ subtree is reconciled from its root down to the parent of $v^1_{i, 2K - 1}$.  Top center and top right : the two possible reconciliations of $D_{i, j}$.  In the first case, we can handle the $w^i_j$ nodes using a single transfer above $S_{v_j}$.
In the second case, we must transfer on the arc leading to $v_{j, 1}$, then use another to get to $v_{j, 2}$.
Bottom: the transfer highways of $N$ used by both scenarios.}
\label{fig:fas-reconcile_G}
\end{center}
\end{figure*}

We now handle the nodes $p(v^1_{i,2K})$ and $p(v^2_{i,2K})$ (see Figure~\ref{fig:fas-reconcile_G} for an illustration).
First denote by $w$ the parent of both $w_{j,1}^i$ and $w_{j,2}^i$ in $D_{i,j}$.
Suppose that $a = (v_i, v_j)$ is not in $A'$.  
Recall the ordering $v_{l_1}, \ldots, v_{l_n}$ from above.
Then there are $i'$ and $j'$ such that $i = l_{i'}$ and $j = l_{j'}$, with $i' < j'$.  
Therefore $N$ has a secondary arc 
\[
(send\_l_{i'}\_to\_l_{j'}, recv\_l_{j'}\_from\_l_{i'}) = (send\_i\_to\_j, recv\_j\_from\_i)
\]
starting above $v_{i, 2K}$ and ending above $S_{v_j}$. 
We make the parent edge of $w$ borrow this transfer arc.
For that purpose, set \\
\man{$\alpha(p(v^1_{i, 2K})) =~]z_{i, 2K - 1} \isep send\_i\_to\_j]$}
and $\alpha(p(v^2_{i, 2K})) = (send\_i\_to\_j)$, \\
setting $e(p(v^1_{i, 2K}), \last) = \D$ and $e(p(v^2_{i, 2K}), \last) = \T$.  
For the child leaves, set $\alpha(v^1_{i, 2K}) = \alpha(v^2_{i, 2K}) = [send\_i\_to\_j \isep v_{i, 2K}]$.
Then we set $\alpha(w) = [recv\_j\_from\_i \isep z_{j, 1}]$
with $e(w, \last) = \D$.  It is straightforward to check that $\alpha(w_{j, 1}^i)$ 
and $\alpha(w_{j, 2}^i)$ can be set without requiring any additional transfer, 
since $z_{j,1}$ is an ancestor of both $v_{j, 1}$ and $v_{j, 2}$.

Now, suppose instead that 
$a = (v_i, v_j) \in A'$.  Then the transfer arc used in the previous case does not exist, 
since it is backwards with respect to our ordering.
In this case, we must use the last-resort route, namely the secondary arc\\ $(backsend\_i\_to\_j, backrecv\_j\_from\_i)$, 
then the $(send12\_j, recv12\_j)$ arc.
More precisely, set
\[
\alpha(p(v^1_{i, 2K})) =~]z_{i, 2K - 1} \isep backsend\_i\_to\_j]
\]
and 
\[
\alpha(p(v^2_{i, 2K})) = (backsend\_i\_to\_j)
\]  
with $e(p(v^1_{i, 2K}), \last) = \D$ and $e(p(v^2_{i, 2K}), \last) = \T$.  Then set $\alpha(v^1_{i, 2K}) = \alpha(v^2_{i, 2K}) = [backsend\_i\_to\_j \isep v_{i, 2K}]$.  
Then let $\alpha(w) = [backrecv\_j\_from\_i \isep send12\_j]$
with $e(w, \last) = \T$.  Set $\alpha(w_{j, 1}^i) = [send12\_j \isep v_{j, 1}]$ and 
$\alpha(w_{j, 2}^i) = [recv12\_j \isep v_{j,2}]$.  One can check that $\alpha$
satisfies Definition~\ref{def:DTLrecLGTNetwork} and in this case, $D_{i,j}$ requires two transfers.

It remains to reconcile the rest of $D$.
We exhibit $\alpha$ for the nodes of $D'_{i,j}$ that are not in $D_{i, j}$.
Denote $a = (v_i, v_j)$.
In $S$, denote $r_a = p(p_a) = p(q_a)$.  Set $\alpha(p(p^1_a)) = \alpha(p(p^2_a)) = (r_a)$,
and $e(p(p^1_a)) = \D, e(p(p^2_a)) = \S$ (we will adjust $\alpha(p(p^1_a))$ later).  
Then set $\alpha(p(q^1_a)) = [r_a \isep send\_q_a\_to\_i]$ with $e(p(q^1_a), \last) = \D$,
and $\alpha(p(q^2_a)) = (send\_q_a\_to\_i)$.  Recall that $p(v^1_{i, 1})$ is a child of $p(q^2_a)$
and that $\alpha_1(p(v^1_{i, 1})) = recv\_i\_from\_q_a$.  Thus by setting $e(p(q^1_a)) = \T$ 
we satisfy Definition~\ref{def:DTLrecLGTNetwork}.  It is clear that the $\alpha$ values for the leaves
$p^1_a, p^2_a, q^1_a$ and $q^2_a$ can be set without requiring any additional transfer.
We have now reconciled $D'_{i,j}$ such that $\alpha_\last(r(D'_{i,j})) = r_a$,
adding one transfer in the process.

What remains now are the nodes $g_1, g_2, \ldots, g_{\l}$, ordered by increasing depth, that lie on the path
between $r(D)$ and $r(D'_{a_m})$ (excluding the latter).
We claim that none of these nodes requires any transfer.
The node $g_{\l}$ is a speciation and has two children $r(D'_{a_{m - 1}})$ and $r(D'_{a_m})$: one mapped by $\alpha$ to species $r_{a_{m - 1}}$ and the other to $r_{a_m}$.  
Then we can set $\alpha(g_{\l}) = (lca_S(r_{a_{m - 1}}, r_{a_m}))$ and $e(g_{\l}, \last) = \S$, and adjust $\alpha(r(D'_{a_{m - 1}}))$ and $\alpha(r(D'_{a_{m }}))$ accordingly.  Now, $g_{\l - 1} = p(g_{\l})$ is a duplication whose
other child is $p^3_{m - 1}$, and thus it is safe to set $\alpha(g_{\l - 1}) = (lca_S(r_{a_{m - 1}}, r_{a_m}))$ as well
and set $e(g_{\l - 1}, \last) = \D$.  Since the $D_{a}$ subtrees are ordered in the same manner in $D$ as the $S_a$ subtrees in $S$, it is not hard to see inductively that for $i < \l - 1$,
if $l(g_{i}) = \S$, then $g_i$ has $r(D'_{a_h})$ as a child for some $h < m - 1$, which is mapped to 
$r_{a_h}$, and the other child is $g_{i + 1}$, 
mapped to $x := lca_S(r_{a_{h + 1}}, r_{a_{h + 2}})$.  Hence we can set $\alpha(g_i) = (x, r_{a_h})$
and adjust the $\alpha$ values of the two children of $g_i$ accordingly.  If $l(g_i) = \D$, we simply set
$\alpha(g_i) = \alpha(g_{i + 1})$.
We are done with the reconciliation $\alpha$ between $D$ and $N$.

To sum up, if $a \notin A'$, then $D'_a$ requires $2$ transfers, 
and if $a \in A'$, then $D'_a$ requires $3$ transfers, and $|A'| = k$.
Thus $K = 2m + k$ transfers are added in total.
\end{proof}

We now undertake the converse direction of the proof.
We will make use of the following well-known fact on reconciliations.

\begin{lemma}\label{lem:lcaspec}
Let $S$ be a species tree and let $N$ be an LGT network obtained by adding secondary arcs to $S$.
Let $(D, \alpha)$ be a reconciliation with respect to $N$.  Let $u \in I(D)$ such that $e(u, \last) = \S$ and let $v, w$ be two leaves descending from $u$ such that, 
for every node $z$ on the path between $u$ and $v$ or on the path between $u$ and $w$, $\alpha(z)$ contains no $\T$ or $\TL$ event.
Then $\alpha_{\last}(u) = lca_{S}(\sigma(v), \sigma(w))$.
\end{lemma}

\begin{proof}
First note that by the definition of a reconciliation, $\alpha_{\last}(u) = \S$ implies that $\alpha_{\last}(u)$ must exist in $S$, since only those nodes can be the tail of two principal arcs in $N$ (recall that this is required by speciation).

Assume without loss of generality that $v$ descends from $u_l$ and $w$ from $u_r$.  Let $P_v = (u = v_1, \ldots, v_a = v)$ be the path from $u$ to $v$ and $P_w = (u = w_1, \ldots, w_b = w)$ the path from $u$ to $w$.  By the definition of speciation, $\alpha_1(u_l)$ and $\alpha_1(u_r)$ are the two children of $\alpha_{\last}(u)$. 
Moreover, by appending the paths $\alpha(v_2), \ldots, \alpha(v_a)$ and eliminating possible repetitions due to duplications, we obtain a path $P'_v$ of $N$ that uses only principal arcs, starts at $\alpha_1(u_l)$ and ends at $v$.  Similarly, appending the paths $\alpha(w_2), \ldots, \alpha(w_b)$, we obtain a path $P'_w$ of $N$ that uses only principal arcs, starts at $\alpha_1(u_r)$ and ends at $w$.  
Because $\alpha_1(u_l)$ and $\alpha_1(u_r)$ are the children of $\alpha_{\last}(u)$ and $P'_v$ and $P'_w$ use only $E_p$ arcs, $P'_v$ and $P'_w$ are vertex-disjoint.
Thus $\alpha_{\last}(u)$ is a node of $N$ whose two children can start disjoint paths that lead to $v$ and $w$, respectively.
The only node of $N$ from which this is possible is $lca_S(\sigma(v), \sigma(w))$.
\end{proof}

\begin{lemma}
If $(D, l)$ is \ml{$S$-base-reconcilable} using at most $K = 2m + k$ transfers, 
then $H$ admits a feedback arc set $A' \subseteq A$ of size at most $k$.
\end{lemma}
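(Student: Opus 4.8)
The plan is to fix a witness for $S$-reconcilability and read a small feedback arc set off it. Through the definition of $S$-reconcilable (and Lemma~\ref{lem:equiv-ds-tree-unknown}), there is a time-consistent species network $N$ with $T_0(N)=S$, a time function $t:V(N)\to\mathbb{N}$, and a reconciliation $\alpha$ between $D$ and $N$ realising the $\D/\S$ labelling of $D$ with at most $K=2m+k$ transfers. For each arc $a=(v_i,v_j)\in A$, write $D'_a$ for the corresponding subtree of $D$ (with the inner caterpillar $D_{i,j}$ and the $p_a,q_a$-portion above it), and let $t_a$ be the number of branches \emph{strictly inside} $D'_a$ carrying a $\T$ or $\TL$ event under $\alpha$. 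Since the subtrees $\{D'_a\}_{a\in A}$ are pairwise edge-disjoint, $\sum_{a}t_a$ is at most the total number of transfers, so $\sum_a t_a\le K$.

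First I would establish the local lower bound $t_a\ge 2$. The leaves of $D'_a$ lie in three pairwise-incomparable subtrees of $S$ ($S_a$, the cherry on $p_a,q_a$; $S_{v_i}$; and $S_{v_j}$), and the labelling is alternating. The engine of the argument is the standard fact (supported by Lemma~\ref{lem:notransfer-nocompare}) that a maximal transfer-free fragment is reconciled with $T_0(N)=S$ using only $\S,\D,\SL$, so each labelled node is realised by its own event and its image is forced to be the $\lca_S$ of the species below it. I would then argue: (i) some transfer occurs in the $p_a,q_a$-portion, since otherwise the speciation above $p_a^2$ would have to separate the lone species $p_a$ from a lineage that already reaches $S_{v_i}$, which is impossible because the only sibling of $p_a$ in $S$ is $q_a$; and (ii) some transfer occurs inside $D_{i,j}$, since otherwise the bottom speciation (above $w$) is forced to map to $\lca_S(S_{v_i},S_{v_j})$, after which the next speciation up cannot peel a leaf of $S_{v_i}$ off a lineage already sitting at or above $\lca_S(S_{v_i},S_{v_j})$. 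As these two transfers live in disjoint parts of $D'_a$, $t_a\ge 2$ and $\sum_a t_a\ge 2m$. Setting $A'=\{a\in A: t_a\ge 3\}$ then gives $|A'|\le\sum_a(t_a-2)=\bigl(\sum_a t_a\bigr)-2m\le K-2m=k$.

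It remains to show $A'$ is a feedback arc set, i.e.\ $H'=(V,A\setminus A')$ is acyclic. Define $\tau(v):=t(r(S_v))$. For $a=(v_i,v_j)\notin A'$ we have $t_a=2$, so by (i)--(ii) exactly one transfer lies inside $D_{i,j}$. Since its subtree must reach both $v_{j,1}$ and $v_{j,2}$ with no further transfer, this single transfer lands at a node $\rho_a$ that is a weak ancestor of $\lca_S(v_{j,1},v_{j,2})=r(S_{v_j})$, while its tail $\sigma_a$ lies strictly below $r(S_{v_i})$ (the transfer-free descent through the $2K$ levels of $S_{v_i}$ reaches the bottom of $S_{v_i}$ before escaping). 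Time-consistency forces $t(\sigma_a)=t(\rho_a)$; combining $t(\sigma_a)>\tau(v_i)$ (principal descent inside $S_{v_i}$) with $t(\rho_a)\le\tau(v_j)$ (ancestor of $r(S_{v_j})$) yields $\tau(v_i)<\tau(v_j)$. Hence every arc of $A\setminus A'$ strictly increases $\tau$, a directed cycle in $H'$ is impossible, and $A'$ is a feedback arc set of size at most $k$.

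The step I expect to be the main obstacle is the lower bound (i)--(ii): one must show the alternating labelling genuinely forces two \emph{distinct} transfers \emph{regardless of where} $\alpha$ places its transfers, which requires a case-free treatment of the transfer-free fragments and their forced $\lca_S$-images rather than the convenient placement used in the forward direction. The second delicate point is the conversion in the last paragraph, where ``exactly one transfer in $D_{i,j}$'' must be turned into the sharp statements ``$\sigma_a$ is below $r(S_{v_i})$'' and ``$\rho_a$ is a weak ancestor of $r(S_{v_j})$''; this is precisely where the $2K$ duplicated levels and, above all, time-consistency (a secondary arc equates the times of its endpoints) are used to rule out every directed cycle.
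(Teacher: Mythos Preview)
Your overall strategy matches the paper's: establish a per-gadget lower bound $t_a\ge 2$, let the surplus define a small set $A'$, and extract acyclicity of $A\setminus A'$ from time-consistency. The paper organises this as Claims~2--6, defining $A'=A\setminus\hat A$ where $\hat A$ collects arcs $(v_i,v_j)$ with a directed path in $N$ from $r(S_{v_i})$ to $r(S_{v_j})$ and invoking acyclicity of $N$; your version defines $A'=\{a:t_a\ge 3\}$ and uses the time function $\tau$ directly. These are equivalent in spirit and both valid.

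There is, however, a genuine slip in your justification of (i). You argue that without a transfer in the $p_a,q_a$-portion, the speciation $p(p_a^2)$ ``would have to separate the lone species $p_a$ from a lineage that already reaches $S_{v_i}$'', and that this is impossible. But you are only forbidding transfers in the $p_a,q_a$-portion; transfers inside $D_{i,j}$ are still allowed, so the other lineage \emph{can} reach $S_{v_i}$ via those, and nothing forces $p(p_a^2)$ to do more than separate $p_a$ from $q_a$ at $r(S_a)$. The contradiction actually comes one level lower (this is the paper's Claim~2): once $p(p_a^2)$ sits at $r(S_a)$, the speciation $p(q_a^2)$ is trapped strictly on the $q_a$-branch yet must be a strict ancestor of $q_a$ that is incomparable with $p_a$ (Lemma~\ref{lem:notransfer-nocompare}), and no such node exists because $p_a,q_a$ form a cherry. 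Your conversion step (``$\sigma_a$ strictly below $r(S_{v_i})$'') is correct but also under-argued: what pins it down is that the transfer-free speciations $p(v^2_{i,h})$ in $D_{i,j}$ force the descent into $S_{v_i}$ node by node, exactly the mechanism behind the paper's Claim~3; without that cascade you cannot rule out the single transfer leaving from above $r(S_{v_i})$.
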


\begin{proof}
Suppose that $(D, l)$ is \ml{$S$-base-reconcilable} using at most $K$ transfers, let  $N$ be the species network 
such that $T_0(N) = S$
and let $(D, \alpha)$ a reconciliation with respect to $N$ using $K$ transfers showing that $(D, l)$ is $N$-reconcilable.
We divide this proof into a series of claims.
Without loss of generality, we assume that the secondary arcs on $N$ are minimal, in the sense \mj{that}
every secondary arc of $N$ is used by $\alpha$.

\begin{nclaim}\label{claim:gprime-tl}
For every arc $a = (v_i, v_j) \in A$, in the $D'_{i,j}$ subtree, there is a 
node $x$ and an integer $h$ such that $e(x, h) \in \{\T, \TL\}$ and $x$ does not belong to $D_{i,j}$.
\end{nclaim}

\begin{proof}
Suppose for contradiction that the claim is false.
Denote $y_p := p(p^2_a)$ and $y_q = p(q^2_a)$.
Because there is no transfer, we have $e(y_p, \last) = e(y_q, \last) = \S$, by the orthology requirements of $(D, l)$.
By Lemma~\ref{lem:lcaspec}, $\alpha_{\last}(y_p) = lca_S(\sigma(p^2_a), \sigma(q^2_a)) = p(p_a) = p(q_a)$.
Now consider $\alpha_{\last}(y_q)$.  
By definition of speciation and by the absence of transfers in $\alpha(p(p_a^1))$ and $\alpha(y_q)$, $\alpha_{\last}(y_q)$ must be a strict descendant of $\alpha_{\last}(y_p) = p(q_a)$.
On the other hand, $\alpha_{\last}(y_q)$ is a strict ancestor of $q_a$ since $e(y_q, \last) = \S$.
Moreover, $\alpha_{\last}(y_q)$ is a node of $S$ \man{(this is because $e(y_q, \last) = \S$, and thus by definition, $\alpha_{\last}(y_q)$ must be a node whose two children are principal arcs)}.  We have reached a contradiction, since $S$ contains no node that is a strict descendant of $p(q_a)$ and a strict ancestor of $q_a$.
\end{proof}

\begin{nclaim}\label{claim:in_svi}
Let $(v_i, v_j) \in A$.  Then there is an internal node $x$ of $D_{i,j}$ such that
$\alpha_{\last}(x)$ is a node of $S_{v_i}$.
\end{nclaim}

\begin{proof}
Suppose that for every internal node $x$ of $D_{i,j}$, $\alpha_{\last}(x)$ is not a node of $S_{v_i}$. 
Let $h \in [2K]$ such that $h$ is odd.  
We show that there must be a transfer in some node of the path
between $v^2_{i,h}$ and $v^2_{i, h + 1}$ in $D_{i,j}$.
Let us assume that this is not the case.
We can thus assume that $e(p(v^2_{i,h}), \last) = \S$ \man{and
that $\alpha(v^2_{i,h})$ does not contain a $\TL$ event.}
It follows that $\alpha_{\last}(p(v^2_{i,h}))$ is an ancestor of $v_{i,h}$ which, by assumption, does not belong to $S_{v_i}$.
Since we further assume that there is no transfer in $\alpha(p(v_{i,h+1}^1)), \alpha(p(v_{i,h+1}^2))$ or $\alpha(v_{i,h+1}^2)$, by Lemma~\ref{lem:lcaspec}, we must have $\alpha_{\last}(p(v^2_{i, h}) = lca_S(\sigma(v_{i,h}), \sigma(v_{i, h+1}))$.
This node is in $S_{v_i}$, and we have reached a contradiction.
Therefore, some transfer must be present in some node of the $v_{i, h}^2 - v^2_{i,h+1}$ path.

This holds for every odd $h$, so $D_{i,j}$ has a least $K$ transfers.
But by the previous claim, $D'_{i,j}$ has at least one transfer that is not in $D_{i,j}$, so in total $D$ has strictly more 
than $K$ transfers, a contradiction.   
\end{proof}

\begin{nclaim}  \label{claim:gij-paths}
Let $(v_i, v_j) \in A$.  Then in $N$, there is a node $s$ of $S_{v_i}$ such that 
there exists a directed path $P_1$ from $s$ to $v_{j,1}$ containing a secondary arc $(t_1, t_1')$, 
and a directed path $P_2$ from $s$ to $v_{j,2}$ containing a secondary arc $(t_2, t_2')$, 
and such that
$D_{i,j}$ uses these secondary arcs (i.e. for each $h \in \{1,2\}$, 
either $(\alpha_i(x), \alpha_{i + 1}(x)) = (t_h, t_h')$ for some $x \in V(D_{i,j})$ and integer $i$, 
or $(\alpha_\last(x), \alpha_1(y)) = (t_h, t_h')$ for some $x, y \in V(D_{i,j})$).
\man{Note that $(t_1, t'_1) = (t_2, t'_2)$ is possible.}
\end{nclaim}

\begin{proof}
Let $x$ be a node of $D_{i,j}$ satisfying Claim~\ref{claim:in_svi} above.  Since $s:= \alpha_\last(x)$ is in 
the $S_{v_i}$ subtree, and that $x$ has descendants $w^i_{j,1}$ and $w^i_{j, 2}$ mapped to $v_{j,1}$ and $v_{j,2}$, there must be a path from $s$ to $v_{j, 1}$ and from $s$ to $v_{j ,2}$.
Since $s$ and $v_{j,1}$ (or $v_{j, 2}$) are incomparable in $S$, 
these paths must contain a secondary arc.  
Moreover, there must be such paths $P_1$ and $P_2$ 
and some node of $D_{i,j}$ on the $x - v_{j ,1}$ path (resp. the $x - (v_{j,2})$ path)
that uses the $(t_1, t_1')$ arc (resp. the $(t_2, t_2')$ arc).
\end{proof}

As specified in the previous claim, $(t_1, t_1') = (t_2, t_2')$ is possible.  In essence, this happens when $S_{v_i}$ is able to get to $S_{v_j}$.
In the following, let $\hat{A} \subseteq A$ be the set of arcs such that $(v_i,v_j) \in \hat{A}$ if and only if there is 
a directed path in $N$ from $r(S_{v_i})$ to $r(S_{v_j})$.
The set $A' = A \setminus \hat{A}$ will form our feedback arc set, i.e. the arcs to remove to eliminate all cycles.

\begin{nclaim}\label{claim:hat-has-nocycle}
$H' = (V, \hat{A})$ contains no directed cycle.
\end{nclaim}

\begin{proof}
Suppose instead that in $H'$, there is a cycle $C = x_1x_2\ldots x_{\l}x_1$.  
By the definition of $\hat{A}$, in $N$ there is a directed path from $r(S_{x_i})$ to 
$r(S_{x_{i + 1}})$ for every $i \in [\l - 1]$, and from $r(S_{x_{\l}})$ to $r(S_{x_1})$.
Thus $N$ contains a cycle, contradicting time-consistency.
\end{proof}

\begin{nclaim}
$|\hat{A}| \geq m - k$.
\end{nclaim}

\begin{proof}
Recall that by Claim~\ref{claim:gprime-tl}, $D$ has a transfer in $D'_{i,j}$ that is not in $D_{i,j}$,
and these together take up $m$ transfers.
Moreover by Claim~\ref{claim:gij-paths}, each $D_{i,j}$ subtree uses at least one transfer.
Since $D$ uses at at most $K = 2m + k$ transfers, there can be at most $k$ of the $D_{i,j}$ subtrees that
use more than one transfer, and hence at least $m - k$ that only use one.

By Claim~\ref{claim:gij-paths}, for each $(v_i, v_j) \in A$, there is a directed path $P_1$ in $N$ from 
$r(S_{v_i})$ to $v_{j, 1}$ and a directed path $P_2$ from $r(S_{v_i})$ to $v_{j, 2}$, such that 
$D_{i,j}$ uses the transfer arc $(t_1, t_1')$ from $P_1$ and $(t_2, t_2')$ from $P_2$.  
If $D_{i,j}$ uses one transfer, we must have $(t_1, t_1') = (t_2, t_2')$.  
This is only possible if $t_1' = t_2'$ is an ancestor of $\lca_S(v_{j,1}, v_{j,2}) = r(S_{v_j})$.
This shows that there are at least $m - k$ subtrees $D_{i,j}$, and hence arcs $(v_i, v_j)$ such that $N$ has a path from 
$r(S_{v_i})$ to $r(S_{v_j})$.
\end{proof}

We are done with the proof, since 
$A' = A \setminus \hat{A}$ is a feedback arc set of $H$ by Claim~\ref{claim:hat-has-nocycle},
and $|A'| = |A| - |\hat{A}| \leq m - (m - k) = k$.
\end{proof}

We have shown that that $H$ has a feedback arc set of size $k$
if and only $D$ is \ml{$S$-base-reconcilable} using $K = 2m + k$ transfers.
By Lemma~\ref{lem:equiv-ds-tree-unknown}, $H$ has a feedback arc set of size $k$ 
if and only if the relation graph $R(D)$ is \ml{$S$-base-consistent} using $K$ transfers.
Therefore we get the following.\\

\noindent
\textbf{Theorem \ref{thm:hard-unknown-highways}.} 
\emph{The TMSTC problem is NP-hard, even if the input relation graph $R$ has a corresponding 
least-resolved $DS$-tree that is binary.}

\end{document}